  \providecommand\BibTeX{{%
    \normalfont B\kern-0.5em{\scshape i\kern-0.25em b}\kern-0.8em\TeX}}}
\definecolor{mygreen}{rgb}{0.09, 0.45, 0.27}
\renewcommand{\phi}{\varphi}
\renewcommand{\epsilon}{\varepsilon}
\renewcommand{\leq}{\leqslant}
\renewcommand{\geq}{\geqslant}
\newcommand\defeq{\stackrel{\mathrm{def}}{=}}
\newcommand{\dom}{\mathsf{dom}}
\newcommand{\consts}{\mathsf{Consts}}
\newcommand{\nulls}{\mathsf{Nulls}}
\newcommand{\N}{\mathbb{N}}
\newcommand{\arity}{\mathsf{arity}}
\renewcommand{\Pr}{\mathsf{Pr}}
\newcommand{\surj}{\mathsf{surj}}
\newcommand{\sig}{\mathsf{sig}}
\newcommand{\countcompls}{\text{$\#$}\text{\sf Comp}}
\newcommand{\countvals}{\text{$\#$}\text{\sf Val}}
\newcommand{\ucountcompls}{\text{$\#$}\text{{\sf Comp}$^{\textit{u}}$}}
\newcommand{\ucountvals}{\text{$\#$}\text{{\sf Val}$^{\textit{u}}$}}
\newcommand{\ccountcompls}{\text{$\#$}\text{{\sf Comp}$_{\textit{Cd}}$}}
\newcommand{\ccountvals}{\text{$\#$}\text{{\sf Val}$_{\textit{Cd}}$}}
\newcommand{\cucountcompls}{\text{$\#$}\text{{\sf Comp}$^{\textit{u}}_{\textit{Cd}}$}}
\newcommand{\cucountvals}{\text{$\#$}\text{{\sf Val}$^{\textit{u}}_{\textit{Cd}}$}}
\newcommand{\avoidance}{\mathrm{\#Avoidance}}
\newcommand{\hamsubgraphs}{\mathrm{\#HamSubgraphs}}
\newcommand{\shpf}{\mathrm{\#PF}}
\newcommand{\ksat}{\mathrm{\#k3SAT}}
\newcommand{\mc}{\mathrm{MC}}
\newcommand{\IS}{\text{\rm IS}}
\newcommand{\VC}{\mathrm{VC}}
\newcommand{\checkp}{\mathrm{check}}
\newcommand{\I}{\mathcal{I}}
\newcommand{\T}{\mathrm{T}}
\newcommand{\s}{\mathrm{s}}
\newcommand\mA{\mathbf{A}}
\newcommand\mZ{\mathbf{Z}}
\newcommand\mC{\mathbf{C}}
\newcommand{\sjfbcq}{\text{\rm sjfBCQ}}
\newcommand{\sjfbcqs}{\text{\rm sjfBCQs}}
\newcommand{\pr}{\leq_{\mathrm{par}}^{\mathrm{p}}}
\newcommand{\ptime}{\text{\rm P}}
\newcommand{\fp}{\text{\rm FP}}
\newcommand{\rk}{\mathrm{rk}}
\newcommand{\shp}{\text{$\#$}\text{\rm P}}
\newcommand{\spanl}{\text{\rm SpanL}}
\newcommand{\spanp}{\text{\rm SpanP}}
\newcommand{\nl}{\text{\rm NL}}
\newcommand{\np}{\text{\rm NP}}
\newcommand{\up}{\text{\rm UP}}
\newcommand{\rp}{\text{\rm RP}}
\newcommand{\bpp}{\text{\rm BPP}}
\newcommand{\spp}{\text{\rm SPP}}
\newcommand{\CC}{\mathcal{C}}
\newcommand{\sIS}{\text{$\#$}\text{\rm IS}}
\newcommand{\sVC}{\text{$\#$}\text{\rm VC}}
\newcommand{\tr}{\leq_{\mathrm{T}}^{\mathrm{p}}}
\newcommand{\acc}{\text{\rm accept}}
\newcommand{\rej}{\text{\rm reject}}
\newcommand{\gap}{\text{\rm gap}}
\newcommand{\gapp}{\text{\rm GapP}}
\newcommand{\gapspanp}{\text{\rm GapSpanP}}
\newcommand{\boundellipse}[3]
{(#1) ellipse (#2 and #3)
}
\begin{document}

\title{Counting Problems over Incomplete Databases}


\author{Marcelo Arenas}
\affiliation{%
  \institution{PUC \& IMFD Chile}
  }
\email{marenas@ing.puc.cl}

\author{Pablo Barceló}
\affiliation{%
  \institution{PUC \& IMFD Chile}
  }
\email{pbarcelo@ing.puc.cl}

\author{Mikaël Monet}
\affiliation{%
  \institution{IMFD Chile}
  }
\email{mikael.monet@imfd.cl}

\renewcommand{\shortauthors}{Arenas, Barcel\'o and Monet}

\begin{abstract}

We study the complexity of various fundamental counting problems that arise in the context
of incomplete databases, i.e., relational databases that can contain unknown
values in the form of labeled nulls. Specifically, we assume that the domains of these unknown values are
finite and, for a Boolean query~$q$, we consider the following two problems:
given as input an incomplete database~$D$, (a) return the number of completions
of~$D$
that satisfy~$q$; or (b) return or the number of valuations of the nulls of~$D$
yielding a completion that satisfies~$q$.  We obtain dichotomies
between~\#P-hardness and polynomial-time computability for these problems when~$q$ is
a self-join--free conjunctive query, and study the impact on the complexity of
the following two restrictions: (1) every null occurs at most once in~$D$ (what is
called \emph{Codd tables});
and (2) the domain of each null is the same.
Roughly speaking, we show that counting
completions is much harder than counting valuations (for instance, while the
latter is always in~\#P, we prove that the former is not in~\#P under some
widely believed theoretical complexity assumption). Moreover, we find that both (1) and (2) reduce the
complexity of our problems. We also
study the approximability of these problems and show that, while counting
valuations always has a fully polynomial randomized approximation scheme, in most cases counting completions does not.
Finally, we consider more expressive query languages and situate our problems with
respect to known complexity classes.

\end{abstract}

%

\keywords{Incomplete databases, closed-world assumption, counting complexity, FPRAS}


\settopmatter{printfolios=true} 
\maketitle

\section{Introduction}
\label{sec:introduction}

\noindent 
\paragraph{{\bf Context}}
In the database literature, {\em incomplete databases} are often used to
represent missing information in the data; see,
e.g.,~\cite{abiteboul1995foundations,van1998logical,libkin2014incomplete}.   These are
traditional relational databases whose active domain can contain both constants
and \emph{nulls}, the latter representing unknown
values~\cite{imielinski1984incomplete}.  There are many ways in which one can
define the semantics of such a database, each being equally meaningful
depending on the intended application.  Under the so called \emph{closed-world
assumption}~\cite{abiteboul1995foundations,reiter1978closed}, a
standard, complete database~$\nu(D)$ is obtained from an incomplete database~$D$ by applying a {\em
valuation}~$\nu$ which replaces each null~$\bot$ in~$D$ with a
constant~$\nu(\bot)$.
The goal is then
to reason about the space formed by all valuations~$\nu$ and completions~$\nu(D)$ of~$D$.

Decision problems related to querying incomplete databases have been well
studied already.  Consider for instance the problem {\sf
Certainty}$(q(\bar{x}))$ 
which, for a fixed query~$q(\bar{x})$, takes as input an incomplete
database~$D$ and a tuple~$\bar{a}$ and asks whether~$\bar{a}$ is an answer to~$q$
for every 
possible completion/valuation of~$D$. 
 By now, we have a deep understanding of the complexity of these kind of
 decision problems for different choices of query languages, including
 conjunctive queries (CQs) and FO
 queries~\cite{imielinski1984incomplete,AbiteboulKG91}.  However, having the
 answer to this question is sometimes of little help: what if it is not the case
 that~$q$ is certain on~$D$? Can we still infer some useful information? This
 calls for new notions that could be used to measure the certainty with
 which~$q$ holds, notions which should be finer than those previously
 considered. This is for instance what the recent work in \cite{libkin2018certain} does by
 introducing a notion of~\emph{best answer}, which are those tuples~$\bar a$ for which the set of completions of~$D$ 
 over which~$q(\bar a)$ holds is maximal with respect to set inclusion.

A fundamental complementary approach to address this issue can be obtained by considering 
some {\em counting problems}
  related to incomplete databases; 
  more specifically,  
  determining the number of
  completions/valuations of an incomplete database that satisfy a given
  Boolean query~$q$.  
  These problems are relevant as they tell us,
  intuitively, how close is~$q$ from being certain over~$D$, i.e., what is the
  level of {\em support} that~$q$ has over the set of completions/valuations
  of~$D$.  Surprisingly,
  such counting problems do not seem to have been studied for
  incomplete databases. A reason for this omission in the literature
  might be that, in general, it is assumed that the domain over which nulls can
  be interpreted is infinite, and thus incomplete databases might have an
  infinite number of completions/valuations.  However, 
  in many scenarios it is natural to assume that the domain over which nulls are
  interpreted is finite, in particular when dealing with uncertainty in practice~\cite{andritsos2006clean,BSHW06,FGJK08,AntovaKO09,SBHNW09,ASUW10}.
  By assuming this we can ensure that the number of
  completions and valuations
  are always finite, and thus
  that they can be counted.  This is the setting that we study.

In this paper, we will focus only on Boolean queries.
  We consider this as a necessary first step in understanding the complexity of some relevant counting problems over incomplete databases. Besides, as shown in the paper, the case of Boolean queries is rich and complex enough to deserve its own investigation, providing very valuable information for the general case where queries can have arbitrary tuples as answers. 

\paragraph{{\bf Problems studied}}
We focus on the problems \countcompls$(q)$ and \countvals$(q)$ for a Boolean query~$q$, which take as input an incomplete 
database~$D$ together with a finite set~$\dom(\bot)$ of constants for every null~$\bot$ occurring in~$D$,
and ask the following: How many completions, resp., valuations, 
of~$D$ satisfy~$q$? More formally,
a \emph{valuation}~$\nu$ of~$D$ is a mapping that associates to every null~$\bot$ of~$D$ a constant~$\nu(\bot)$ in~$\dom(\bot)$.
Then, given a valuation~$\nu$ of~$D$, we 
denote by~$\nu(D)$ the database that is obtained from~$D$ after replacing each null~$\bot$ with~$\nu(\bot)$.
Besides, in this paper we consider set semantics, so repeated tuples have to be removed from~$\nu(D)$.
For~\countcompls$(q)$ we count all databases of the form~$\nu(D)$ such that 
$q$ holds in~$\nu(D)$. Instead, for \countvals$(q)$ we count the number of valuations~$\nu$ such that~$q$ holds in 
$\nu(D)$. It is easy to see that these two values can differ, as there might be different valuations~$\nu,\nu'$ 
of~$D$ leading to the same completion, i.e.,~$\nu(D) = \nu'(D)$. We 
think
that both problems are 
meaningful: while \countcompls$(q)$ determines the support for~$q$ over the databases represented by~$D$, 
we have that \countvals$(q)$ further refines this 
by incorporating the support for a particular completion 
that satisfies~$q$ over the set of valuations for~$D$.  

\begin{table*}[t]
  \centering
  \begin{tabular}{|l|m{0.185\textwidth}|m{0.185\textwidth}|m{0.185\textwidth}|m{0.185\textwidth}|}
				\hline
				&  \multicolumn{2}{ c| }{{\bf Counting valuations}} & \multicolumn{2}{ c| }{{\bf Counting completions}} \\
				\hline
				\hline
				& {\bf Non-uniform} & {\bf Uniform} & {\bf Non-uniform} & {\bf Uniform} \\
				\hline
				{\bf Naïve} &
				\begin{tabular}{l} $R(x,x)$ \\ $R(x) \land  S(x)$ \end{tabular}
& \begin{tabular}{l} \vspace{-4pt} \\$R(x,x)$ \\ $R(x) \land S(x,y) \land T(y)$ \\ $R(x,y) \land S(x,y)$ \\ \vspace{-4pt} \end{tabular} 
& \begin{tabular}{l} $R(x)$ \end{tabular}
& \begin{tabular}{l} $R(x,x)$ \\ $R(x,y)$ \end{tabular}\\
\hline
				{\bf Codd} &
				\begin{tabular}{l} $R(x) \land S(x)$ \end{tabular} 
				&
				\begin{tabular}{l} \vspace{-4pt} \\ $R(x) \land S(x,y) \land T(y)$ \\ ? \\ \vspace{-4pt} \\ \end{tabular} 
				& \begin{tabular}{l} $R(x)$ \end{tabular} 
				&
				\begin{tabular}{l}  $R(x,x)$ \\ $R(x,y)$ \end{tabular}\\
				\hline
			\end{tabular}
		\caption{Our dichotomies for counting valuations and completions of \sjfbcqs. For each one of the eight cases, if an \sjfbcq~$q$ has as a pattern a query mentioned in that case, then the problem is \shp-hard (and~\shp-complete for counting valuations, as well as for counting completions over Codd tables). In turn, for each case except for counting valuations under uniform Codd tables, if an \sjfbcq\ $q$ does not have as a pattern any of the queries mentioned in that case, then the problem is in \fp.}
	\label{tab:dichos-count-intro}
\end{table*}

The problems we study are analogous to the ones studied in other uncertainty scenarios; in particular, 
in {\em probabilistic} {\em databases}~\cite{suciu2011probabilistic,dalvi2012dichotomy} 
and {\em inconsistent databases}~\cite{ABC99,2011Bertossi,Bertossi19}. 
More specifically, we deal with the problems \countcompls$(q)$ and \countvals$(q)$ focusing on obtaining dichotomy results for them  
in terms of counting complexity classes, as well as 
studying the existence of randomized algorithms that approximate their results under probabilistic guarantees. 
For the dichotomies, we 
concentrate on self-join-free Boolean conjunctive queries (\sjfbcqs). This assumption simplifies the 
mathematical 
analysis, while at the same 
defines a setting which is rich enough for many of the theoretical concepts behind these problems to appear in full force. 
Notice that a similar assumption is used in several works that study counting problems over probabilistic and inconsistent 
databases; see, e.g.,~\cite{dalvi2011queries,maslowski2013dichotomy}. 

To refine our analysis, we study two restrictions of the problems \countcompls$(q)$ and \countvals$(q)$ based on 
natural modifications of the semantics, and analyze to what extent these restrictions simplify our problems.
For the first restriction we consider incomplete databases in which each null occurs exactly once, which corresponds to the 
well-studied setting of {\em Codd tables} -- 
as opposed to {\em naïve tables} where nulls are allowed to have multiple occurrences. We denote the corresponding problems by~$\ccountvals(q)$ and~$\ccountcompls(q)$.
For the second restriction, we consider {\em uniform} incomplete databases in which all the nulls share the same domain -- as opposed to the basic {\em non-uniform} setting in which all nulls come equipped with their own domain. 
We denote the corresponding problems by~$\ucountvals(q)$ and~$\ucountcompls(q)$.
When both restrictions are in place, we denote the problems by~$\cucountvals(q)$ and~$\cucountcompls(q)$.

\paragraph{{\bf Our dichotomies for exact counting}} We provide almost complete characterizations of the complexity of 
counting valuations and completions satisfying a given~\sjfbcq\ $q$,
when the input is 
a Codd table or a naïve table, and is a non-uniform or a uniform incomplete database (hence we have~eight cases in total). 
The only case we have not yet completely solved is that of counting valuations in the uniform setting 
over Codd tables, i.e., the problem~$\cucountvals(q)$.
Our seven dichotomies express that these problems are  either tractable 
or~\shp-hard, and that the tractable cases can be fully characterized by the absence of certain forbidden {\em patterns} 
in~$q$. A pattern is simply an \sjfbcq\ which can be obtained from~$q$  
essentially by deleting atoms, 
renaming relation symbols, deleting occurrences of variables and reordering the variables in atoms (the exact definition of this notion is given in Section \ref{sec:countvals-sjfcqs}).
Our characterizations are presented in Table~\ref{tab:dichos-count-intro}. 
By analyzing this table we can draw some important conclusions as 
explained next.

\medskip
\noindent \underline{\countcompls$(q)$ and \countvals$(q)$ are computationally difficult:} For very few \sjfbcqs~$q$  
the aforementioned problems can be solved in polynomial time. Take as an example 
the uniform setting over naïve tables. 
Then 
\ucountvals$(q)$ is~\shp-hard as long as~$q$ contains the pattern~$R(x,x)$, or~$R(x)\land S(x,y)\land T(y)$, 
or $R(x,y) \wedge S(x,y)$. That is, as long as there is an atom in~$q$ that contains a repeated variable~$x$,  
or a pair~$(x,y)$ of variables that appear in an atom and both~$x$ and~$y$ appear in some other atoms in~$q$.
By contrast, for this same setting, \ucountcompls$(q)$ is~\shp-hard as long as~$q$ contains the pattern~$R(x,y)$ or~$R(x,x)$, that is, as long as 
there is an atom in~$q$ that is not of arity one. 

\smallskip
\noindent \underline{\countvals$(q)$ is always easier than \countcompls$(q)$:} In all
		of the possible versions of our problem, 
 the tractable cases for \countvals$(q)$
		are a strict superset of the ones for \countcompls$(q)$.
		For instance, we have that $\cucountcompls(\exists x \exists y\, R(x,y))$ is hard, while $\cucountvals(\exists x \exists y\, R(x,y))$ is\linebreak tractable  (because $\ucountvals(\exists x \exists y\, R(x,y))$ is).

\smallskip
\noindent
\underline{Even counting completions is hard:}
While counting the total number of valuations for an 
incomplete database can always be done in polynomial time, observe from 
Table \ref{tab:dichos-count-intro}
that $\cucountcompls(\exists x \exists y\, R(x,y))$ is~\shp-hard,
and thus that simply counting the completions of a uniform Codd table with a single binary relation~$R$ 
is~\shp-hard. 
Moreover, we show that in the non-uniform case a single unary relation suffices to obtain~\shp-hardness. 

\smallskip
\noindent \underline{Codd tables help but not much:} We show that counting valuations
is easier 
for Codd tables than for
		naïve tables. In particular, there is always an \sjfbcq~$q$ 
such that counting  the valuations that satisfy~$q$
is~\shp-hard, yet it becomes tractable when restricted to the case of Codd tables.
		However, for counting completions, both in the uniform and non-uniform setting, 
the sole restriction to Codd tables presents no benefits: 
		for every \sjfbcq~$q$, we have that 
		$\countcompls(q)$ (resp., $\ucountcompls(q)$) is \shp-hard if and only if  $\ccountcompls(q)$ (resp., $\cucountcompls(q)$) is \shp-hard.

\smallskip
\noindent \underline{Non-uniformity complicates things:} All versions of our problems become harder in the non-uniform setting. 
This means that in all cases there is an \sjfbcq~$q$ for which countings valuations is tractable on uniform incomplete databases, but becomes~\shp-hard assuming non-uniformity, 
and an analogous result holds for counting completions. 

\paragraph{{\bf Our dichotomies for approximate counting}} 
Although\linebreak
\countvals$(q)$ can be~\shp-hard, we prove in the paper that
 good randomized approximate algorithms can be designed for this problem. 
 More precisely, we give a general condition under which~$\countvals(q)$ admits a  {\em fully polynomial-time randomized approximation scheme}~\cite{jerrum1986random} (FPRAS).
 This condition applies in particular to all \emph{unions of Boolean conjunctive queries}.
		Remarkably, we show that this no longer holds for \countcompls$(q)$; more precisely, 
there exists an \sjfbcq~$q$ such that \countcompls$(q)$ does not admit an FPRAS under a widely believed complexity theoretical assumption. More surprisingly, even counting the completions of uniform incomplete database containing a single
binary relation does not admit an FPRAS under such an assumption (and in the non-uniform case, a single unary relation suffices). 
Generally, for \sjfbcqs, we obtain seven dichotomies for our problems between polynomial-time computability of exact counting and non admissibility of an FPRAS. The only case
that we did not solve yet is that of~$\cucountcompls(q)$.

\paragraph{\bf Beyond \shp} 
It is easy to see that the problem of counting valuations is always in \shp. 
This is no longer the case for counting completions, 
and in fact we show that, under a complexity theoretical assumption, there is an \sjfbcq\ $q$ 
for which \ucountcompls$(q)$ is not in~\shp.
This does not hold if restricted to Codd 
tables, however, as we prove that \ccountcompls$(q)$ is always in \shp. 

For reasons that we explain in the paper, a suitable complexity class for the problem \countcompls$(q)$ is
\spanp, 
which is defined as the class of counting 
problems that can be expressed as the number of different accepting outputs 
of a nondeterministic Turing machine running in polynomial time. 
While we have not managed to prove that there is an \sjfbcq\ $q$ for which \countcompls$(q)$ is \spanp-complete, 
we show in the paper that this is the case for the problem of counting completions for the negation of an \sjfbcq, 
even in the uniform setting; that is, we show that
\ucountcompls$(\neg q)$ is \spanp-complete for some \sjfbcq\ $q$.

\paragraph{{\bf Organization of the paper}}

		We start with 
		the main terminology used in the paper in Section~\ref{sec:preliminaries}, and then present in Section~\ref{sec:countvals-sjfcqs} our three dichotomies on~\countvals$(q)$ when~$q$ is an \sjfbcq, and the input incomplete database can be Codd or not, and the domain can be uniform or not. We then establish the 
		four dichotomies on~\countcompls$(q)$ in Section~\ref{sec:countcompls-sjfcqs}. In Section~\ref{sec:approx}, we study the approximability complexity of our problems. We then give in Section~\ref{sec:misc} some general considerations 
		about the exact complexity of the problem~\countcompls$(q)$ going beyond \shp. 
		 In Section~\ref{sec:related}, we discuss related work and explain the differences with the problems considered in this paper. Last, we provide some conclusions and mention possible directions for future work in Section~\ref{sec:conclusion}.
All missing proofs can be found in the appendix.

\section{Preliminaries}
\label{sec:preliminaries}

\paragraph{{\bf Relational databases and conjunctive queries}}
A \emph{relational schema}~$\sigma$ is a finite non-empty set of relation
symbols written $R$, $S$,
$T$, \dots, each with its associated \emph{arity}, which is denoted by 
$\arity(R)$.
Let~$\consts$ be a countably infinite set of constants. A {\em database}~$D$ over~$\sigma$ is a set of {\em facts} 
of the form $R(a_1, \ldots, a_{\arity(R)})$ with~$R \in
\sigma$, and where each element~$a_i \in \consts$. For~$R\in \sigma$, we denote by~$D(R)$ the subset of~$D$ consisting of facts over~$R$. Such a set is usually called a \emph{relation} of~$D$.

A Boolean query~$q$ is a query that a database~$D$ can \emph{satisfy} (written~$D \models q$) or not
(written~$D \not\models q$).
A {\em Boolean conjunctive query} (BCQ) over~$\sigma$ is an FO formula of the form 
\begin{equation} \label{eq:cq}  
\exists \bar x \, \big(R_1(\bar x_1) \land   \, \dots \, \land R_m(\bar x_m)\big),
\end{equation} 
where all variables are existentially quantified, and where for each $i \in [1,m]$, we have that 
$R_i$ is a relation symbol in~$\sigma$ and~$\bar x_i$ is a tuple of variables with $|\bar x_i| = \arity(R_i)$.  
To avoid trivialities, we will always assume that~$m \geq 1$, i.e., the query has at least one atom, and also that~$\arity(R_i) \geq 1$ for all atoms.
For simplicity, we 
typically write a BCQ~$q$ of the form \eqref{eq:cq} as 
$$R_1(\bar x_1)  \land   \, \dots \, \land   R_m(\bar x_m),$$
and it will be implicitly understood that all variables in~$q$ are existentially quantified.  
As usual, we define the semantics of a BCQ in terms of {\em homomorphisms}.
A homomorphism from~$q$ to a database~$D$ is a mapping from the variables in~$q$ to the constants 
used in~$D$ such that
$\{R_1(h(\bar x_1)),\dots,R_m(h(\bar x_m))\} \subseteq D$.
Then, we have~$D \models q$
if there is a homomorphism from~$q$ to~$D$.   
A \emph{self-join--free BCQ} (\sjfbcq) is a BCQ such that no two atoms use the same relation symbol.

\begin{figure*}[t]
	\begin{tabular}{c|cccccc}
		\toprule
		$(\nu(\bot_1),\nu(\bot_2))$ & $(a,a)$ & $(a,b)$ & $(b,a)$ & $(b,b)$ & $(c,a)$ & $(c,b)$ \\
		\hline
		$\nu(D)$

		&
		\begin{tabular}[t]{cc}
			\toprule
			\multicolumn{2}{c}{$S$} \\
			\midrule
			$a$ & $b$ \\
			$a$ & $a$ \\
			\bottomrule
		\end{tabular}

		& 
		\begin{tabular}[t]{cc}
			\toprule
			\multicolumn{2}{c}{$S$} \\
			\midrule
			$a$ & $b$ \\
			$a$ & $a$ \\
			\bottomrule
		\end{tabular}

		& 
		\begin{tabular}[t]{cc}
			\toprule
			\multicolumn{2}{c}{$S$} \\
			\midrule
			$a$ & $b$ \\
			$b$ & $a$ \\
			$a$ & $a$ \\
			\bottomrule
		\end{tabular}

		& 
		\begin{tabular}[t]{cc}
			\toprule
			\multicolumn{2}{c}{$S$} \\
			\midrule
			$a$ & $b$ \\
			$b$ & $a$ \\
			\bottomrule
		\end{tabular}

		& 
		\begin{tabular}[t]{cc}
			\toprule
			\multicolumn{2}{c}{$S$} \\
			\midrule
			$a$ & $b$ \\
			$c$ & $a$ \\
			$a$ & $a$ \\
			\bottomrule
		\end{tabular}

		&
		\begin{tabular}[t]{cc}
			\toprule
			\multicolumn{2}{c}{$S$} \\
			\midrule
			$a$ & $b$ \\
			$c$ & $a$ \\
			\bottomrule
		\end{tabular} \\[2.2cm]
		\hline
		$\nu(D) \models q$? & Yes & Yes & Yes & No & Yes & No \\
		\bottomrule

	\end{tabular}
\caption{The~six valuations of the (non-uniform) incomplete database~$D = (T,\dom)$ with $T=\{S(a,b),S(\bot_1,a),S(a,\bot_2)\}$ from Example~\ref{expl:problems}, and their corresponding completions.
	The Boolean conjunctive query~$q$ is $\exists x \, S(x,x)$.}
\label{fig:example-completions}
\end{figure*}

\paragraph{{\bf Incomplete databases}}
Let~$\nulls$ be a countably infinite set of nulls (also called \emph{labeled nulls} in the literature), which is 
disjoint with~$\consts$. An {\em incomplete database} over schema~$\sigma$ is a 
pair~$D = (T,\dom)$, where~$T$ is a database over~$\sigma$ 
whose facts contain elements in~$\consts \cup \nulls$, and where $\dom$ is a function that associates to every null~$\bot$ occurring in~$D$ a subset~$\dom(\bot)$ of~$\consts$.
Intuitively,~$T$ is a database that can mention both constants 
and nulls, while~$\dom$ tells us where nulls are to be interpreted.
Following the literature, we call~$T$ a {\em na\"ive table} \cite{imielinski1984incomplete}.

An incomplete database~$D = (T,\dom)$ can represent potentially many complete databases, via what are called \emph{valuations}.
A valuation of~$D$ is simply a function~$\nu$ that maps each null~$\bot$ occurring in~$T$ to a constant~$\nu(\bot) \in \dom(\bot)$.
Such a valuation naturally defines a \emph{completion of~$D$}, denoted by~$\nu(T)$, which is the complete database obtained from~$T$ by substituting each null~$\bot$ appearing in~$T$ by~$\nu(\bot)$.
It is understood, since a database is a \emph{set} of facts, that~$\nu(T)$ does not contain duplicate facts.
By paying attention to completions of incomplete databases that are generated exclusively by applying valuations to them, we are sticking to the so called {\em closed-world} semantics of 
incompleteness~\cite{abiteboul1995foundations,reiter1978closed}. This means that the databases represented by an incomplete database~$D = (T,\dom)$ are not open to adding facts that are not ``justified'' 
by the facts in~$T$. 

\begin{example}
\label{expl:completion}
	{\em Let~$D = (T,\dom)$ be the incomplete database
	consisting of the na\"ive table $T = \{S(\bot_1,\bot_1),S(a,\bot_2)\}$, and where $\dom(\bot_1) = \{a,b\}$ and~$\dom(\bot_2) = \{a,c\}$.
	Let~$\nu_1$ be the valuation mapping~$\bot_1$ to~$b$ and~$\bot_2$ to~$c$.
	Then~$\nu_1(T)$ is $\{S(b,b),S(a,c)\}$.
	Let~$\nu_2$ be the valuation mapping both~$\bot_1$ and~$\bot_2$ to~$a$.
	Then~$\nu_2(T)$ is~$\{S(a,a)\}$. On the other hand, the function~$\nu$ mapping~$\bot_1$ and~$\bot_2$ to~$b$ is not a valuation of~$D$, because~$b\notin \dom(\bot_2)$.} \qed 
\end{example}

When every null occurs at most once in~$T$, then~$D$ is what is called a \emph{Codd table}~\cite{codd1975understanding}; for instance,
the incomplete database in Example~\ref{expl:completion} is not a Codd table because~$\bot_1$ occurs twice. 
We also consider {\em uniform} incomplete databases in which the domain of every null is the same. 
Formally, a uniform incomplete database is a pair~$D = (T,\dom)$, where~$T$ is a database over~$\sigma$ and~$\dom$ is a subset of~$\consts$. The difference now is  
that a valuation~$\nu$ of~$D$ must simply satisfy~$\nu(\bot) \in \dom$ for every null of~$D$.

We will often abuse notation and use~$D$ instead of~$T$;
for instance, we write~$\nu(D)$ instead of~$\nu(T)$, or~$R(a,a) \in D$ instead of~$R(a,a) \in T$, or again~$D(R)$ instead of~$T(R)$.

\paragraph*{\bf Counting problems on incomplete databases.}
We will study two kinds of counting problems for incomplete databases: problems of the form $\countvals(q)$, that count the number of \emph{valuations}~$\nu$
that yield a completion~$\nu(D)$ satisfying a given BCQ~$q$, and problems of the form~$\countcompls(q)$, that count the number of \emph{completions} that satisfy~$q$.
The query~$q$ is assumed to be fixed, so that each query gives rise to different counting problems,
and we are considering the \emph{data complexity}~\cite{vardi1982complexity} of these problems.

Before formally introducing our problems, let us observe that they are well defined if we
assume that the set of constants to which a null can be mapped to is finite.
Hence, for the (default) case of an 
incomplete database~$D = (T, \dom)$, we assume that~$\dom(\bot)$ is always a finite subset of~$\consts$.
Similarly, for the case of a uniform incomplete database~$D = (T, \dom)$, we assume that~$\dom$ is a finite subset of~$\consts$.
Finally, given a Boolean query~$q$, we use notation $\sig(q)$ for the set of relation symbols occurring in~$q$.
With these ingredients, we can define our problems for the (default) case of 
incomplete naïve tables and a Boolean query~$q$. 

\medskip
\begin{center}
\fbox{\begin{tabular}{lp{6.1cm}}
\small{PROBLEM} : & $\countvals(q)$ 
\\
{\small INPUT} : & An incomplete database $D$ over $\sig(q)$ 
\\
{\small OUTPUT} : & Number of valuations $\nu$ of $D$ with $\nu(D) \models q$ 
\end{tabular}}

\medskip
\fbox{\begin{tabular}{lp{6.1cm}}
\small{PROBLEM} : & $\countcompls(q)$
\\
{\small INPUT} : & An incomplete database $D$ over $\sig(q)$ 
\\
{\small OUTPUT} : & Number of completions $\nu(D)$ of $D$ with $\nu(D) \models q$
\end{tabular}}
\end{center}
\medskip

We also consider the uniform variants of these problems, in which the input~$D$ is a uniform incomplete database over $\sig(q)$, and the restriction of these problems where the input is a Codd table instead of a naïve table.
We then use the terms $\ucountvals(q)$, $\ucountcompls(q)$ when restricted to the uniform case, $\ccountvals(q)$, $\ccountcompls(q)$ when restricted to Codd tables, and $\cucountvals(q)$, $\cucountcompls(q)$ when both restrictions are applied.

As we will see, even though the problems 
\countvals$(q)$ and \countcompls$(q)$ 
look similar, they are of a different computational nature;
this is because two distinct valuations can produce the same completion of an incomplete database.
In the following example, we illustrate this phenomenon.

\begin{example}
\label{expl:problems}
	{\em Let~$q$ be the Boolean conjunctive query~$\exists x \, S(x,x)$, and~$D$ be the (non-uniform) incomplete database~$D = (T,\dom)$, 
	with $T \, = \, \{S(a,b)$, $S(\bot_1,a),S(a,\bot_2)\}$,  
	$\dom(\bot_1)=\{a,b,c\}$ and $\dom(\bot_2)=\{a,b\}$.
	We have depicted in Figure~\ref{fig:example-completions} the~six valuations of~$D$ together with the completions that they define.
	Out of these six valuations~$\nu$, only four are such that~$\nu(D) \models q$, so that
	we have	$\countvals(q)(D) = 4$.
	Moreover, there are only~$3$ distinct completions of~$D$ that satisfy~$q$, so $\countcompls(q)(D) = 3$.} \qed
\end{example}

\paragraph*{\bf Counting complexity classes.}

Given two problems~$A,B$, we write 
$A \leq_{\mathrm{T}}^{\mathrm{p}} B$ when~$A$ reduces to~$B$ under polynomial-time Turing reductions.
When both~$A$ and~$B$ are counting problems, we write~$A \pr B$ when~$A$ can be reduced to~$B$ under polynomial-time \emph{parsimonious} reductions, i.e., there exists a polynomial-time computable function~$f$ that transforms an input~$x$ of~$A$ to an input~$f(x)$ of~$B$ such that~$A(x)=B(f(x))$.
We say that a counting problem is in~\fp\ when it can be solved in polynomial time.
We will consider the counting complexity class \shp~\cite{valiant1979complexity} of problems that can be expressed as the number of accepting paths of a 
nondeterministic Turing machine running in polynomial time.
Following~\cite{valiant1979complexity,V79}, we define \#P-hardness using Turing reductions. 
It is clear that $\fp \subseteq \shp$. This inclusion, on the other hand, is widely believed to be strict. Therefore, proving that 
a counting problem is \shp-hard implies that it cannot be solved in polynomial time under such an assumption. 

\paragraph*{\bf Graphs.}

In our reductions, we will often depart from hard problems that are defined over graphs. Unless mentioned otherwise, by~\emph{graph} we mean a pair~$G=(V,E)$, where~$V$ is a finite set of \emph{nodes}, and~$E$ is a set whose elements are of the form~$\{u,v\}$ for~$u,v\in V$ and~$u \neq v$. 
Notice then that such graphs are \emph{undirected}, cannot contain~\emph{self-loops}, and cannot contain multiple edges between any two nodes.

\section{Dichotomies for counting valuations}
\label{sec:countvals-sjfcqs}
\begin{toappendix}
\label{apx:countvals-sjfcqs}
\end{toappendix}

In this section, given a fixed \sjfbcq\ q, we study the 
complexity of the problem of computing, given an incomplete database $D$, the number of valuations $\nu$ of $D$ such that $\nu(D)$ satisfies $q$.
Recall that we have four cases to consider for this problem depending on whether we focus on na\"ive or on Codd tables, where nulls are restricted to appear at most once, and whether we focus on non-uniform or uniform incomplete databases, where nulls are restricted to have the same domain. 
Our specific goal then is to understand whether the problem is tractable (in \fp) or \shp-hard in these scenarios, depending on the shape of $q$.

To this end, the shape of an \sjfbcq\ $q$ will be characterized by the presence or absence of certain specific patterns.
In the following definition, we introduce the necessary terminology to formally talk about the presence of a pattern in a query. 

\begin{definition}
\label{def:pattern}
	Let~$q,q'$ be 
	\sjfbcqs. 
		We say that~$q'$ is a \emph{pattern} of~$q$ if $q'$ can be obtained from~$q$ by using an arbitrary number of times and in any order the following operations: deleting an atom,  deleting an occurrence of a variable, renaming a relation to a fresh one, renaming a variable to a fresh one and reordering the variables in an atom.\footnote{We remind the reader that we assume all sjfBCQs to contain at least one atom and that all atoms must contain at least one variable.}
\end{definition}

\begin{example}
\label{expl:pattern}
 {\em Recall that we always omit existential quantifiers in Boolean queries. Then we have that 
	$q' = R'(u,u,y) \land  S'(z)$ is a pattern of~$ q = R(u,x,u) \land  S'(y,y) \land  T(x,s,z,s)$. Indeed, $q'$ can be obtained from $q$ by deleting atom $T(x,s,z,s)$, renaming $R(u,x,u)$ as $R'(u,x,u)$ to obtain~$R'(u,x,u) \land S'(y,y)$, reordering the variables in~$R'(u,x,u)$ to obtain~$R'(u,u,x) \land S'(y,y)$, renaming variable~$y$ into~$z$ to obtain~$R'(u,u,x) \land S'(z,z)$, deleting the second variable occurrence in~$S'(z,z)$ to obtain~$R'(u,u,x) \land S'(z)$, and finally renaming variable~$x$ into~$y$ to obtain~$q'$.} \qed
\end{example}

In the following general lemma, we show that if $q'$ is a pattern of $q$, then each one of the problems considered in this section is as hard for $q$ as it is for $q'$. Recall in this result that unless stated otherwise, our problems are defined for na\"ive tables.

\begin{toappendix}
\subsection{Proof of Lemma~\ref{lem:pattern-parsimonious}}
\end{toappendix}

\begin{lemmarep}
\label{lem:pattern-parsimonious}
	Let~$q,q'$ be \sjfbcqs\ such that~$q'$ is a pattern of~$q$.
	Then we have~$\countvals(q') \pr \countvals(q)$ and $\ucountvals(q') \pr \ucountvals(q)$. Moreover, the same results hold if we restrict to Codd tables.
\end{lemmarep}
\begin{proof}
	We only present the proof of $\countvals(q') \pr \countvals(q)$, as the uniform case is identical.
	Let~$q$ be $R_1(\overline{x}_1) \land  \ldots \land  R_m(\overline{x}_m)$, and $q'$ be $R'_{j_1}(\overline{x}'_{j_1}) \land  \ldots \land  R'_{j_p}(\overline{x}'_{j_p})$,
	where $1 \leq j_1 < \ldots < j_p \leq m$ and each $R_{j_k}$ has become $R'_{j_k}$ (i.e., we either have~$R_{j_k} = R'_{j_k}$, or~$R_{j_k}$ has been renamed into~$R'_{j_k}$) and $\overline{x}'_{j_k}$ is obtained from 
	$\overline{x}_{j_k}$ by deleting some variable occurrences but not all, and the other atoms have been deleted (we will assume without loss of generality that
	we did not reorder the variables in the atoms nor renamed variables by fresh ones, because this obviously does not change the complexity of the problems).
	Let~$D'$ be an incomplete database input of~$\countvals(q')$.
	Let~$A$ be set of constants that are appearing in~$D'$ or are in a domain of some null occurring in~$D'$.
	For~$1 \leq k \leq p$, we construct the relation~$D(R_{j_k})$ from the relation~$D'(R'_{j_k})$.
	Let us assume that~$\overline{x}_{j_k}$ is the tuple
	$(x_1, \ldots, x_r)$ (with some variables possibly being equal).
	We initialize~$D(R_{j_k})$ to be empty, and then for every tuple~$\overline{t}'$ in $D'(R'_{j_k})$
	we add to~$D(R_{j_k})$ all the tuples~$\overline{t}$ that can be obtained from~$\overline{t}'$ in the following way for $1 \leq i \leq r$:
	\begin{itemize}
		\item[\bf a)] If $x_i$ is a variable occurrence that has not been deleted from~$\overline{x}_{j_k}$,
			then copy the element (constant or null) of $\overline{t}'$ corresponding to that variable occurrence to the $i$-th position of~$\overline{t}$;
		\item[\bf b)] Otherwise, if $x_i$ is a variable occurrence that has been deleted from~$\overline{x}_{j_k}$,
				then fill the~$i$-th position of~$\overline{t}$ 
				with every possible constant from~$A$.
		\end{itemize}
		Then we construct the relations~$D(R_i)$ where~$R'_i$ does not appear in~$q'$ (this can happen if we have deleted the atom $R_i(\overline{x}_i)$) by filling it with every possible~$R_i$-fact over~$A$.
		We leave the domains of all nulls unchanged.
		The whole construction can be performed in polynomial time (this uses the fact that $q$ is assumed to be fixed, so that the arities of the relations mentioned in $q$ are fixed).
				Since~$D$ and~$D'$ contain exactly the same set of nulls,
		the construction preserves the property of being a Codd table.
		Hence, it only remains to be checked that~$\countvals(q')(D') = \countvals(q)(D)$, that is, that the reduction works and is indeed parsimonious.
		It is clear that the valuations of~$D'$ are exactly the same as the valuations of~$D$ (because they have the same sets of nulls).
		Hence it is enough to verify that for every valuation~$\nu$, we have~$\nu(D') \models q'$ if and only if~$\nu(D) \models q$.
		Let~$h'$ be a homomorphism from~$q'$ to $\nu(D')$ witnessing that~$\nu(D') \models q'$ (i.e., we have $h'(q) \subseteq \nu(D')$).
		Then~$h'$ can clearly be extended in the expected way into a homomorphism~$h$ from~$q$ to~$\nu(D)$: this is in particular thanks to the fact that
		we filled the missing columns with every possible constant.
		Conversely, let $h$ be a homomorphism from~$q$ to $\nu(D)$ witnessing that~$\nu(D) \models q$.
		Then the restriction~$h'$ of~$h$ to the variables occurring in~$q'$ is such that~$h(q') \subseteq \nu(D')$, hence we have~$\nu(D') \models q'$.
		This concludes the proof.
\end{proof}

	The idea is then to show the \shp-hardness of our problems for some simple patterns, which then we combine with Lemma~\ref{lem:pattern-parsimonious} and with some tractability proofs to
	obtain the desired dichotomies. 
	Our findings are summarized in the first two columns of 
	Table \ref{tab:dichos-count-intro} in the introduction. 
	We first focus on the two dichotomies for the non-uniform setting in Section~\ref{subsec:countvals-non-uniform}, and then we move to the case of uniform incomplete databases 
	in Section~\ref{subsec:countvals-uniform}. We explicitly state when 
	a \shp-hardness result holds even in the restricted setting in which there is a fixed 
	domain over which nulls are interpreted. In other words, when there is a fixed domain $A$ such that the incomplete databases 
	used in the reductions are of the form $D = (T,\dom)$ and $\dom(\bot) \subseteq A$, for each null $\bot$ of $T$.

	\subsection{The complexity on the non-uniform case}
	\label{subsec:countvals-non-uniform}

	In this section, we study the complexity of the problems $\countvals(q)$ and~$\ccountvals(q)$, providing dichotomy results in both cases. We start by proving the \shp-hardness results needed for these dichotomies. We first show that $\countvals(R(x,x))$ is \shp-hard by actually proving that hardness holds already in the uniform case.

	\begin{proposition}
	\label{prp:Rxx-hard}
	\begin{sloppypar}
		$\ucountvals(R(x,x))$ is \shp-hard and, hence, $\countvals(R(x,x))$ is also \shp-hard. This holds even in the restricted setting 
		in which all nulls are interpreted over the same fixed domain~$\{1,2,3\}$.
	\end{sloppypar}
	\end{proposition}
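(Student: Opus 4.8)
The plan is to reduce from the problem of counting the proper $3$-colorings of a graph, which is a classical $\shp$-hard problem. The first observation is that, since the query is $q = \exists x\, R(x,x)$ and $R$ is binary, a completion $\nu(D)$ satisfies $q$ exactly when the valuation $\nu$ sends some fact $R(t_1,t_2)$ of the table $T$ to a ``diagonal'' fact $R(a,a)$. Because $\nu(D) \models q$ is thus a \emph{disjunction} of equality events over the facts of $T$, it is cleaner to count the complementary quantity, namely the number of valuations producing \emph{no} diagonal fact, and to recover $\ucountvals(R(x,x))(D)$ as $3^n$ minus this value, where $n$ is the number of nulls. The negation of $q$ corresponds to a conjunction of disequalities, which is precisely what a proper coloring enforces.

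Concretely, given a graph $G = (V,E)$ (which I may assume has no isolated vertices, since these merely multiply the number of $3$-colorings by a factor of~$3$ and can be stripped off in polynomial time, so that counting $3$-colorings remains $\shp$-hard on such graphs), I would build a uniform incomplete database $D = (T,\dom)$ over the single binary symbol $R$ as follows: introduce one null $\bot_v$ for each vertex $v \in V$, add a fact $R(\bot_u,\bot_v)$ for each edge $\{u,v\} \in E$ (orienting the edge arbitrarily, which is harmless since the diagonality condition is symmetric), and fix $\dom = \{1,2,3\}$. Since $G$ has no self-loops, $T$ contains no fact $R(\bot_v,\bot_v)$ and no fact with two equal constants, so the \emph{only} way for a fact $R(\bot_u,\bot_v)$ to become diagonal under a valuation $\nu$ is to have $\nu(\bot_u) = \nu(\bot_v)$.

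The heart of the argument is then the correspondence between valuations and colorings: viewing a valuation $\nu \colon \{\bot_v \mid v \in V\} \to \{1,2,3\}$ as a color assignment, the completion $\nu(D)$ contains no diagonal fact if and only if $\nu(\bot_u) \neq \nu(\bot_v)$ for every edge $\{u,v\} \in E$, i.e.\ if and only if $\nu$ is a proper $3$-coloring of $G$. Writing $P_3(G)$ for the number of such colorings, we obtain that the number of valuations $\nu$ with $\nu(D) \not\models q$ equals $P_3(G)$, and hence $\ucountvals(R(x,x))(D) = 3^{|V|} - P_3(G)$. This yields a polynomial-time Turing reduction from counting $3$-colorings to $\ucountvals(R(x,x))$: a single oracle call plus the polynomially computable term $3^{|V|}$ recovers $P_3(G)$. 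As $\shp$-hardness is defined via Turing reductions, this establishes the claim, and it does so over the fixed domain $\{1,2,3\}$ as required.

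The main obstacle is not the construction, which is direct, but rather identifying the right $\shp$-hard source problem and routing through the complement: I would rely on the classical fact that computing the number of proper $3$-colorings of a graph is $\shp$-hard. The only points requiring care are the passage to the negated query (counting satisfying valuations of $q$ directly is awkward, whereas its negation is exactly graph coloring) and the trivial bookkeeping needed to guarantee that every $\bot_v$ actually occurs in $T$, handled by discarding isolated vertices. I would finally note that the constructed $T$ is a genuine naïve table, since each $\bot_v$ typically occurs in several facts; this is consistent with the fact that we are proving hardness in the non-Codd setting, and it explains why the same construction cannot be used for the Codd restriction.
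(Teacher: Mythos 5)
Your proof is correct and follows essentially the same route as the paper's: a reduction from counting proper $3$-colorings with one null per vertex, edge facts over $R$, uniform domain $\{1,2,3\}$, and recovery of the answer by subtracting the count of diagonal-free valuations from the total number of valuations. The only differences are cosmetic — you add a single fact $R(\bot_u,\bot_v)$ per edge where the paper adds both orientations (both work, since diagonality is symmetric), and you explicitly handle isolated vertices, a detail the paper glosses over.
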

	\begin{proof}
		We reduce from the problem of counting the number of $3$-colorings of a graph~$G=(V,E)$, which is \shp-hard~\cite{jaeger1990computational}.
		For every node~$v\in V$ we have a null~$\bot_v$, and for every edge~$\{u,v\} \in E$ we have the facts~$R(\bot_v,\bot_u)$ and $R(\bot_u,\bot_v)$.
		The domain  of the nulls is~$\{1,2,3\}$.
		It is then clear that the number of valuations of the constructed database that do not satisfy~$R(x,x)$ is exactly the number of
		$3$-colorings of~$G$. Since the total number of valuations can be computed in PTIME, this concludes the reduction.
	\end{proof}
	The next pattern that we consider is~$R(x) \land S(x)$.	This time, we can show \shp-hardness of the problem even for Codd databases.

	\begin{toappendix}
		\subsection{Proof of Proposition~\ref{prp:RxSx-hard}}
		In this section we prove the following.
	\end{toappendix}

	\begin{propositionrep}
	\label{prp:RxSx-hard}
		$\ccountvals(R(x) \land S(x))$ is \shp-hard.
	\end{propositionrep}

	\begin{toappendix}
In the main text of this article, all graphs considered were undirected graphs with no self-loops an did not contain multiple edges between any two nodes.
By contrast, the proofs in this section will rely on hardness results that use a more general notion of graph.
We introduce here the notation that we will use in these proofs.

\paragraph*{\bf Multigraphs.}

By \emph{multigraph}, we mean a finite undirected graph without self-loops, where two nodes can have multiple edges between them.
Formally, a multigraph $G=(V,E,\lambda)$ consists of a finite set~$V$ of \emph{nodes}, a finite set~$E$ of \emph{edges}, and a function~$\lambda$
that assigns to every edge~$e\in E$ a set~$\lambda(e)=\{u,v\}$ of two distinct nodes~$u,v\in V$.
We say that $e$ is \emph{incident} to~$u$, and to~$v$, and that~$u$ and $v$ are \emph{adjacent} (or are \emph{neighbors}).
Two edges~$e\neq e'$ are \emph{parallel} when~$\lambda(e)=\lambda(e')$.
For a node~$u\in V$, we write~$E(u)$ the set of edges that are incident to~$u$, and the \emph{degree} of~$u$ is~$|E(u)|$.
We say that~$G$ is~\emph{$d$-regular} (where~$d \in \N_{\geq 1}$) when every node of~$G$ has degree~$d$.
A multigraph~$G$ is \emph{bipartite} when its nodes can be partitioned in two sets~$A,B$ such that for every edge~$e$ of~$G$ we have that
$\lambda(e)$ contains one node in~$A$ and one node in~$B$.
We will often write such a bipartite multigraph as~$G=(A \sqcup B,E, \lambda)$.
A bipartite multigraph~$G$ is~\emph{$a$-$b$--regular} when every node in~$A$ has degree~$a$ and every node in~$B$ has degree~$b$.
Observe that with these definitions, we can see a graph, as defined in Section~\ref{sec:preliminaries}, as a multigraph that does not contain parallel edges.
We can then indeed write a graph simply as~$G=(V,E)$, and an edge~$e\in E$ as~$e=\{u,v\}$ (with~$u\neq v$).\\

		We now explain how we prove Proposition~\ref{prp:RxSx-hard}. We execute the reduction in two steps, and we start from the problem that we call~\emph{$\avoidance$}.

	\begin{definition}
		\label{def:avoidance}
		Let~$G=(V,E,\lambda)$ be a multigraph. 
		An \emph{assignment} of~$G$ is a mapping~$\mu: V \to E$ such that for every node~$u$ we have~$\mu(u) \in E(u)$.
		We say that~$\mu$ is \emph{avoiding} when there does not exist two (adjacent) nodes~$u,v$ such that~$\mu(u)=\mu(v)$.
		The problem $\avoidance$ takes as input a multigraph~$G$ and outputs the number of avoiding assignments of~$G$.
	\end{definition}

	\begin{example}
		Figure~\ref{fig:avoiding} represents a multigraph together with an avoiding assignment.
	\end{example}

	\begin{figure}
		\centering
		\begin{tikzpicture}
			\tikzset{nodestyle/.style={circle,fill=black,radius=2pt}}

\tikzset{draw half paths/.style 2 args={%
  decoration={show path construction,
    lineto code={
      \draw [#1] (\tikzinputsegmentfirst) -- 
         ($(\tikzinputsegmentfirst)!0.5!(\tikzinputsegmentlast)$);
      \draw [#2] ($(\tikzinputsegmentfirst)!0.5!(\tikzinputsegmentlast)$)
        -- (\tikzinputsegmentlast);
    }
  }, decorate
}}

\node (a) at (0,0) {};
\node (b) at (0,2) {};
\node (c) at (2,2) {};
\node (d) at (1.2,1) {};
\node (e) at (2,0) {};

\draw[nodestyle] (a) circle; 
\draw[nodestyle] (b) circle; 
\draw[nodestyle] (c) circle; 
\draw[nodestyle] (d) circle; 
\draw[nodestyle] (e) circle; 

\draw[orange,thick] [out=125,in=270] (a) to (-0.4,1);
\draw[black,thick] [out=90,in=235] (-0.4,1) to (b);
\draw[black,thick] (a) to [out=55,in=270] (0.4,1);
\draw[orange,thick] (0.4,1) to [out=90,in=305] (b);
\draw[draw half paths={black, thick}{orange, thick}] (b) to (c);
\draw[draw half paths={black, thick}{orange, thick}] (c) to (d);
\draw[black,thick] (d) to (e);
\draw[black,thick] (a) to (e);
\draw[black,thick] (a) to (b);
\draw[draw half paths={black, thick}{orange, thick}] (c) to (e);

		\end{tikzpicture}
		\caption{A multigraph with an assignment~$\mu$. The edge~$\mu(v)$ assigned to each node is indicated in orange next to each node. This assignment is avoiding because no edge is fully orange.}
		\label{fig:avoiding}
	\end{figure}

	We explain how to obtain the following:

	\begin{proposition}[{Implied by~\cite{cai2012holographic}, see also~\cite{cstheory_a_variant}}]
	\label{prp:avoidance-hard-3-regular}
		The problem $\avoidance$ is \shp-complete, and hardness holds even when restricted to $3$-regular multigraphs.
	\end{proposition}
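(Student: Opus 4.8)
The plan is to first dispatch membership in \shp, and then concentrate the effort on hardness, which I would route through the theory of Holant problems as in the cited work.

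Membership is immediate: given a multigraph~$G=(V,E,\lambda)$, an avoiding assignment~$\mu$ is a certificate of size polynomial in~$|V|+|E|$ (for each node, the index of one incident edge), and checking that~$\mu$ is a valid assignment and that no edge is selected by both its endpoints takes polynomial time. Thus~$\avoidance$ is exactly the number of accepting paths of a polynomial-time nondeterministic machine that guesses and verifies such an assignment, so~$\avoidance \in \shp$.

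For hardness I would first record a clean reformulation by inclusion--exclusion. For an edge~$e=\{u,v\}$, call~$\mu$ \emph{bad at~$e$} when~$\mu(u)=\mu(v)=e$; an assignment is avoiding iff it is bad at no edge. Summing over the set~$S$ of edges that are forced to be bad, the count~$N(S)$ is nonzero only when~$S$ is a matching (two forced edges sharing a node would force two distinct values of~$\mu$ at that node), and in that case every node outside~$V(S)$ is free among its incident edges, giving~$N(S)=\prod_{u\notin V(S)}|E(u)|$. Hence~$\avoidance(G)=\sum_{M \text{ matching}} (-1)^{|M|}\prod_{u\notin V(M)}|E(u)|$. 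Restricting to~$3$-regular multigraphs, where~$|E(u)|=3$ and~$|V(M)|=2|M|$, this collapses to~$\avoidance(G)=\sum_{k\geq 0}(-1)^k m_k(G)\,3^{\,|V|-2k}$, with~$m_k(G)$ the number of~$k$-matchings; that is,~$\avoidance(G)$ equals the matching polynomial of~$G$ evaluated at~$3$.

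This identity is precisely a Holant computation: place a Boolean variable on each node--edge incidence, an EXACTLY-ONE signature at each node (of arity~$3$ on~$3$-regular inputs) forcing a node to select a single incident edge, and the binary ``at most one'' signature at each edge forbidding both of its incidences from being selected. The resulting bipartite Holant problem is the canonical encoding of the matching polynomial, and its \shp-hardness on~$3$-regular instances is exactly what the holographic/Holant machinery of~\cite{cai2012holographic} delivers. I would therefore conclude \shp-hardness of~$\avoidance$ on~$3$-regular multigraphs by this identification, and combine it with membership to obtain \shp-completeness. The main obstacle I anticipate is not the encoding but ensuring the hardness survives the two simultaneous restrictions---evaluating at the single point~$x=3$ \emph{and} keeping the graph~$3$-regular. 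A hands-on interpolation over matchings would require gadgets that preserve~$3$-regularity while varying the effective vertex weight, and it is exactly this point-wise, bounded-degree hardness that the cited dichotomy provides; verifying that these particular signatures fall on the hard side of the relevant dichotomy (rather than a tractable, e.g.\ matchgate-realizable, case) is the step I would treat with the most care.
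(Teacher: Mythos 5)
Your proposal follows, at its core, the same route as the paper: both identify $\avoidance$ on $3$-regular multigraphs with $\mathrm{Holant}([1,1,0]\,|\,[0,1,0,0])$ on $2$-$3$--regular bipartite graphs and then invoke the \shp-hardness of that Holant problem from~\cite{cai2012holographic} (your signature assignment --- ``at most one'' of arity two on edge--nodes, ``exactly one'' of arity three on vertex--nodes --- is exactly the paper's Holant problem, and your incidence-graph encoding is exactly the inverse of the paper's ``merging'' operation). Your inclusion--exclusion identity $\avoidance(G)=\sum_{k}(-1)^k m_k(G)\,3^{|V|-2k}$ is correct (the forced-bad edge sets must form a matching, and the remaining nodes are free), and it is a pleasant way to see what quantity is being computed, but it carries no weight in the final argument; the hardness still rests entirely on the cited Holant dichotomy.

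There is one genuine point you gloss over, and it is precisely where the paper has to do extra work. As written, your encoding runs in the wrong direction for a hardness proof: it maps an $\avoidance$ instance to a Holant instance, whereas hardness requires mapping every hard Holant instance to an $\avoidance$ instance. This inversion is only possible when the $2$-$3$--regular bipartite Holant instance is a \emph{simple} graph: the Holant framework of~\cite{cai2012holographic} allows parallel edges, and if a degree-$2$ node has both its edges going to the same degree-$3$ node, then merging it produces a self-loop --- but the paper's notion of multigraph, and hence the problem $\avoidance$, forbids self-loops. So the identification transfers hardness only once one knows that $\mathrm{Holant}([1,1,0]\,|\,[0,1,0,0])$ remains \shp-hard on simple $2$-$3$--regular bipartite graphs; this is the content of Proposition~\ref{prp:holant-hard}, which the paper justifies by a careful inspection of~\cite{cai2012holographic}, since the table there states hardness only for multigraphs. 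Your closing worry (whether these signatures fall on the hard side of the dichotomy) has the right instinct but is aimed at the wrong place: the signature pair is indeed listed as hard; the delicate step is the simple-graph refinement needed to make your ``identification'' a two-way correspondence.
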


		Membership in \shp\ is clear.
		We will show how hardness derives from the
		results of~\cite{cai2012holographic}.
		First, let us introduce what are called \emph{Holant problems}:

	\begin{definition}
	\label{def:hamming}
		Let~$G=(V,E,\lambda)$ be a multigraph.
		For a valuation of the edges $\nu: E \to \{0,1\}$ and a node $t\in V$, we write $\nu(E(t))$ the multiset $\{\!\!\{ \nu(e) \mid e \in E(t) \}\!\!\}$.
		Given a multiset of bits $B$, the \emph{Hamming weight} of~$B$ is the number of $1$ bits in~$B$.
		For each $x_0, \ldots, x_n \in \{0, 1\}$, let $[x_0, \ldots, x_n]$ denote
		the function that takes a multiset of~$n$ bits as input and outputs $x_i$ if the Hamming weight
		of those $n$ bits is $i$.
	\end{definition}

	\begin{definition}
	\label{def:holant}
		For every $x_0, x_1, x_2, y_0, y_1, y_2, y_3 \in \{0,1\}$, the problem
		$\mathrm{Holant}([x_0, x_1, x_2] | [y_0, y_1, y_2, y_3])$ is the following:
		given a~\mbox{$2$-$3$}--regular bipartite multigraph $G = (U \sqcup V, E, \lambda)$, 
		compute the quantity 
		\[
		  \sum_{\nu: E \to \{0,1\}} \prod_{u \in U} [x_0, x_1, x_2] (\nu(E(u))
		\times\prod_{v \in V} [y_0, y_1, y_2, y_3] ( \nu(E(v))  ).
		\]
	\end{definition}

		Holant problems provide a rich framework to express a lot of natural problems on $2$-$3$--regular graphs
		(this was actually the motivation for introducing this framework), as the following examples illustrate:

		\begin{example}
			On $2$-$3$--regular multigraphs, observe that:
			\begin{itemize}
				\item Counting perfect matchings is exactly the problem~$\mathrm{Holant}([0,1,0]|[0,1,0,0])$;
				\item Counting matchings is exactly the problem~$\mathrm{Holant}([1,1,0]|[1,1,0,0])$;
				\item Counting edge covers is exactly the problem~$\mathrm{Holant}([0,1,1]|[0,1,1,1])$.
			\end{itemize}
		\end{example}

		We will then reduce $\avoidance$ to the problem~$\mathrm{Holant}([1,1,0]|[0,1,0,0])$, which is shown to be \shp-hard in~\cite{cai2012holographic} using the
		tools of \emph{holographic reduction} and \emph{interpolation}:

		\begin{proposition}[{\cite{cai2012holographic}}]
			\label{prp:holant-hard}
			$\mathrm{Holant}([1,1,0]|[0,1,0,0])$ is \shp-hard, even when restricted to $2$-$3$--regular graphs.
		\end{proposition}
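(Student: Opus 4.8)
The plan is to follow the holographic-reduction-plus-interpolation strategy of~\cite{cai2012holographic}. Recall the setup: an instance is a $2$-$3$--regular bipartite multigraph $G=(U\sqcup V,E,\lambda)$; each degree-$2$ node of $U$ carries the symmetric signature $[1,1,0]$ (it contributes a factor $1$ when at most one of its two edges is set to $1$, and $0$ otherwise), each degree-$3$ node of $V$ carries $[0,1,0,0]$ (exactly one of its three edges set to $1$), and the value is the sum over all $\nu:E\to\{0,1\}$ of the product of these local factors. As a sanity check, this value counts the subsets $M\subseteq E$ that saturate every degree-$3$ node exactly once while touching every degree-$2$ node at most once, i.e.\ the matchings of $G$ saturating the whole $V$-side; this already signals that the problem has the flavour of a permanent/perfect-matching count, which is the ultimate source of its hardness.

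First I would invoke Valiant's Holant theorem: for a suitable invertible $2\times 2$ matrix $T$ one has $\mathrm{Holant}([1,1,0]\mid[0,1,0,0]) = \mathrm{Holant}(\,[1,1,0]\,T^{\otimes 2}\mid (T^{-1})^{\otimes 3}\,[0,1,0,0]\,)$, and symmetric signatures remain symmetric under the diagonal action $T^{\otimes d}$. The point of this change of basis is to move the transformed pair of signatures into a position where a known hardness result applies — for instance by turning the degree-$3$ signature into a standard matchgate/spin signature whose associated partition function has already been classified as \shp-hard. I would then verify, by the usual case analysis, that the transformed pair is \emph{not} in any tractable family (degenerate signatures, the affine class, the product class, or matchgate-realizable signatures), which is what forces hardness.

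The technical heart of the argument is the interpolation step, used to repair the arities so that only gadgets of the prescribed degrees $2$ and $3$ occur. Starting from the available signatures I would build a one-parameter family of gadgets $N_k$ ($k=1,2,\dots$), e.g.\ by chaining copies of the degree-$2$ signature, each $N_k$ realizing an arity-$2$ signature. Inserting $N_k$ in place of a chosen component, the Holant value becomes a fixed linear combination $\sum_i c_i\,\lambda_i^{\,k}$, where the $c_i$ are the unknown Holant contributions of the finitely many local configurations and the $\lambda_i$ are the eigenvalues of the transfer matrix of the chain. Evaluating the (polynomial-time computable) Holant for enough values of $k$ yields a Vandermonde-type system; provided the $\lambda_i$ are pairwise distinct with no ratio $\lambda_i/\lambda_j$ a root of unity, the system is non-singular and can be solved to recover each $c_i$, and in particular the Holant value for the target hard signature. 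Since every gadget preserves the $2$-$3$--regular bipartite shape, the reduction stays inside the claimed class of instances.

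The main obstacle is precisely this interpolation: one must exhibit a gadget family whose transfer matrix has distinct, non-degenerate eigenvalue ratios (so the interpolation matrix is invertible), and rule out that $([1,1,0],[0,1,0,0])$ secretly collapses into a tractable family after the holographic change of basis. Dispatching these degenerate subcases — and checking that $2$-$3$--regularity survives every gadget substitution — is where the real work lies, and it is exactly the analysis carried out in~\cite{cai2012holographic}; rather than reprove the full Boolean-domain Holant dichotomy, I would cite their classification to close the degenerate cases and conclude \shp-hardness.
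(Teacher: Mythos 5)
Your reconstruction of the Cai--Lu--Xia machinery (the Holant-theorem change of basis, the exclusion of tractable signature families, and Vandermonde-style interpolation with chained arity-$2$ gadgets) is a fair account of how the cited paper establishes hardness, and like the paper itself you ultimately defer to that work rather than reprove it. But your proposal misses the one point that is the actual content of the proposition as stated. Throughout, you take the instance class to be $2$-$3$--regular bipartite \emph{multigraphs}, and you conclude that ``the reduction stays inside the claimed class of instances.'' In this paper, however, a \emph{graph} is by definition simple (no self-loops, no parallel edges), and the proposition asserts hardness \emph{even when restricted to} $2$-$3$--regular simple bipartite graphs. Hardness on the larger class of multigraphs --- which is what the table in~\cite{cai2012holographic} literally states, and what your argument delivers --- does not imply hardness on the restricted subclass of simple graphs; the implication goes the wrong way. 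The paper's own (terse) proof consists precisely of noting this discrepancy and asserting that a careful inspection of the cited constructions shows they can be carried out with simple graphs.

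The restriction is not cosmetic for this paper: the next step, the reduction to $\avoidance$ on $3$-regular multigraphs (Proposition~\ref{prp:avoidance-hard-3-regular}), merges the two edges incident to each degree-$2$ node, and the merged object is a legal multigraph (no self-loops) only because the Holant instance is simple --- a degree-$2$ node with two parallel edges to the same degree-$3$ node would merge into a self-loop and break the avoiding-assignment correspondence. To close the gap you would need to check that the gadget substitutions and interpolation constructions in~\cite{cai2012holographic} never introduce parallel edges (or that parallel edges can be removed while preserving the Holant value), which is exactly the ``careful inspection'' the paper invokes; your blanket claim that the gadgets preserve the claimed instance class is unjustified as written, since chained gadget substitutions can in general create multi-edges.
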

		\begin{proof}
		The problem appears as hard for multigraphs in the table on page 10 of~\cite{cai2012holographic}.
			A careful inspection of the paper reveals that it is hard for graphs.
		\end{proof}

		Given a \mbox{$2$-$3$}--regular bipartite graph~$G$, we define the \emph{merging of~$G$} to be the multigraph obtained from~$G$
		by merging the incident edges of every node of degree~$2$.
		Note that, because~$G$ is a graph and not a multigraph, the merging of~$G$ is indeed a multigraph, i.e., it does not contain self-loops.
		Furthermore, it is easy to see that this multigraph is~$3$-regular.
		We can now show Proposition~\ref{prp:avoidance-hard-3-regular}:

		\begin{proof}[Proof of Proposition~\ref{prp:avoidance-hard-3-regular}]
			Let~$G$ be a \mbox{$2$-$3$}--regular bipartite graph~$G$ input of
			$\mathrm{Holant}([1,1,0]|[0,1,0,0])$.
			Construct in polynomial time from~$G$ its merging~$G'$, which is a~$3$-regular multigraph.
			Observe that the assignments of~$G'$ are in bijection with the edge subsets~$S$ of~$G$ such that every node of degree~$3$ in~$G$ is adjacent to exactly one edge in~$S$.
			One can then easily see that the number of avoiding assignments of~$G'$ corresponds to the value of $\mathrm{Holant}([1,1,0]|[0,1,0,0])$ on~$G$.
		\end{proof}

	However, in order to show hardness of~$\ccountvals(R(x)\land  S(x))$, we will actually need the hardness of \#Avoidance on \emph{bipartite} graphs.
	To the best of our knowledge this does not follow from related work, so we need to prove it here:

	\begin{proposition}
		\label{prp:avoidance-hard-bipartite}
		The problem~$\avoidance$ is \shp -hard when restricted to $2$-$3$--regular bipartite graphs.	
	\end{proposition}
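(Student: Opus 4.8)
The plan is to reduce from $\avoidance$ on $3$-regular multigraphs, which is \shp-hard by Proposition~\ref{prp:avoidance-hard-3-regular}. Given a $3$-regular multigraph $G=(V,E,\lambda)$, I would build the bipartite graph $G'$ obtained by \emph{subdividing} every edge: keep the nodes of $V$ on one side, and for each edge $e\in E$ create a fresh node $w_e$ on the other side, replacing $e$ by the two edges $\{u,w_e\}$ and $\{v,w_e\}$, where $\lambda(e)=\{u,v\}$. Each $w_e$ then has degree $2$ and each original node keeps degree $3$, so $G'$ is $2$-$3$--regular bipartite. Crucially, subdivision destroys parallel edges (two parallel edges of $G$ produce two distinct middle nodes), and it creates no self-loops, so $G'$ is a genuine graph in the sense of Section~\ref{sec:preliminaries}, as required by the statement.

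Next I would set up the correspondence between avoiding assignments. For each edge $e$ with $\lambda(e)=\{u,v\}$, write $e^u=\{u,w_e\}$ and $e^v=\{v,w_e\}$ for its two halves in $G'$. Given an assignment $\mu'$ of $G'$, its restriction to the original nodes picks, for each $u$, one half-edge $e^u$, and this is naturally an assignment $\mu$ of $G$ (namely $\mu(u)=e$). I would first check that if $\mu'$ is avoiding then so is $\mu$: if some edge $e=\{u,v\}$ had $\mu(u)=\mu(v)=e$, then whichever of $e^u,e^v$ the middle node $w_e$ selects would be chosen by both of its endpoints, contradicting that $\mu'$ is avoiding.

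The heart of the argument is to count, for a fixed avoiding assignment $\mu$ of $G$, how many avoiding $\mu'$ of $G'$ restrict to it. The choices on original nodes are forced; what remains free is the choice $\mu'(w_e)\in\{e^u,e^v\}$ at each middle node. The avoiding constraint at $w_e$ forbids only the half also selected by its original endpoint: if $\mu(u)=e$ then $w_e$ must take $e^v$, if $\mu(v)=e$ it must take $e^u$, and if neither endpoint selected $e$ both halves are allowed. Since $\mu$ is avoiding, no edge is selected by both endpoints, so each $w_e$ has either $2$ or $1$ valid choices. The key observation is that the number of edges selected by exactly one endpoint is exactly $|V|$: every node selects exactly one incident edge, and in an avoiding assignment no edge is selected twice (the map node $\mapsto$ selected edge is injective), so the $|V|$ node-choices land on $|V|$ distinct edges. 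Hence exactly $|E|-|V|$ edges are selected by neither endpoint, and the number of extensions is the constant $2^{|E|-|V|}$, independent of $\mu$.

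Putting this together gives $\avoidance(G')=2^{|E|-|V|}\cdot \avoidance(G)$, so that $\avoidance(G)=\avoidance(G')/2^{|E|-|V|}$; since $2^{|E|-|V|}$ is computable in polynomial time and division is polynomial, this is a valid polynomial-time Turing reduction, which suffices since hardness is defined via Turing reductions. I expect the main obstacle to be precisely the fiber-counting identity of the previous paragraph, namely verifying that the number of extensions does not depend on $\mu$ (via the clean count that exactly $|V|$ edges are selected in any avoiding assignment), together with the bookkeeping ensuring that the subdivision $G'$ is a simple bipartite graph rather than a multigraph.
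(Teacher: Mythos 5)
Your proof is correct and follows essentially the same route as the paper's: subdividing each edge of a $3$-regular multigraph to get a $2$-$3$--regular bipartite graph, showing avoiding assignments of the subdivision restrict to avoiding assignments of the original, and counting exactly $2^{|E|-|V|}$ extensions per avoiding assignment (using that an avoiding assignment selects exactly $|V|$ distinct edges), yielding the same identity $\avoidance(G')=2^{|E|-|V|}\cdot\avoidance(G)$ and Turing reduction via division. No gaps.
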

	\begin{proof}
		We reduce from \#Avoidance on~$3$-regular multigraphs, which is hard according to Proposition~\ref{prp:avoidance-hard-3-regular}.
		Let~$G=(V,E,\lambda)$ be a $3$-regular multigraph.
		Let~$G'$ be the graph obtained from~$G$ by adding a node in the middle of every edge of~$G$.
		Formally, the vertices of~$G'$ are $V \sqcup \{n_e \mid e \in E\}$ and its edges are
		$\bigcup_{e \in E, \lambda(e)=\{u,v\}} \{\{u,n_e\},\{n_e,v\}\}$.
		It is clear that~$G'$ is a $2$-$3$--regular bipartite graph.
		We claim that \#Avoidance$(G') = 2^{|E|-|V|} \times$\#Avoidance$(G)$, which would complete the reduction.
		To prove this, we will use the following definition: letting~$\mu$ be an assignment of~$G$ and~$\mu'$ be an assignment of~$G'$,
		we say that~\emph{$\mu$ and $\mu'$ agree} if for every~$v \in V$, if~$\mu'(v) = \{v,n_e\}$ then we have~$\nu(v) = e$.
		We then show the following, which directly implies our claim:
		\begin{enumerate}
			\item For every avoiding assignment~$\mu$ of~$G$, there are exactly $2^{|E|-|V|}$ avoiding assignments~$\mu'$ of~$G'$ that agree with~$\mu$;
			\item If~$\mu'$ is an avoiding assignment of~$G'$, then~$\mu'_{|V}$ is an avoiding assignment of~$G$.
		\end{enumerate}
			We first prove item 1).
			We say that an edge~$e$ of~$G$ is \emph{chosen} if it is in the image of~$\mu$.
			Observe that, because~$\mu$ is avoiding, there are exactly~$|V|$ edges of~$G$ that are chosen; for instance, considering the graph of Figure~\ref{fig:avoiding}, the assignment is avoiding and there are~$5=|V|$ chosen edges. 
			Let us now look at the number of avoiding assignments~$\mu'$ of~$G'$ that agree with~$\mu$.
			It is easy to see that for every edge~$e$ of~$G$ that is chosen,
			the value of~$\mu'(n_e)$ is forced: we have to set~$\mu'(n_e)$ to be the (unique) edge~$\{n_e,v\}$ such that~$\mu(v) \neq e$.
			Moreover when~$e$ is not chosen, both values for~$n_e$ are possible.
			But then this indeed implies that there
			are~$2^{|E|-|V|}$ avoiding assignments~$\mu'$ of~$G'$ that agree with~$\mu$.
			To show item 2), assume by contradiction that~$\mu'_{|V}$ is not avoiding.
			This means that there is an edge~$e\in E$ with~$\lambda(e)=\{u,v\}$ such that we have~$\mu'_{|V}(u) = \mu'_{|V}(u) = e$.
			But then, looking at the possible value for~$\mu'(n_e)$, we see that~$\mu'$ cannot be avoiding in~$G'$, a contradiction.
	\end{proof}

	We are now ready to treat the problem~$\ccountvals(R(x)\land  S(x))$:

	\begin{proof}[Proof of Proposition~\ref{prp:RxSx-hard}]
		We reduce from \#Avoidance on bipartite graphs, which we have just shown to be \shp-hard.
		Let~$G=(U\sqcup V,E)$ be a bipartite graph.
		For every node~$t\in U \sqcup V$, we have a null~$\bot_t$ with domain~$\dom(\bot_t) \defeq E(t)$.
		For every node~$u\in U$ we have a fact~$R(\bot_u)$ and for every node~$v \in V$ a fact~$S(\bot_v)$.
		The constructed database is a Codd table.
		Moreover, it is clear that the value of $\ccountvals(R(x)\land  S(x))$ on that database is exactly the number of assignments of~$G$ that are not avoiding, thus establishing hardness.
	\end{proof}
\end{toappendix}

	Already with Propositions~\ref{prp:RxSx-hard} and~\ref{prp:Rxx-hard}, we have all the relevant hard patterns for the non-uniform setting.  
	We start by proving our dichotomy result for naïve tables, which is our default case.

	\begin{theorem}[dichotomy]
	\label{thm:countvals-sjfcqs}
		Let~$q$ be an \sjfbcq.
		If $R(x,x)$ or $R(x) \land S(x)$ is a pattern of $q$, then $\countvals(q)$ is \shp-complete. Otherwise, $\countvals(q)$ is in \fp.
	\end{theorem}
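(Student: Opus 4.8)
The proof splits along the two forbidden patterns, and the dividing line is a purely syntactic reading of them. For the hardness direction the plan is to combine the two base cases with the pattern lemma. If $R(x,x)$ is a pattern of $q$, then Proposition~\ref{prp:Rxx-hard} gives that $\countvals(R(x,x))$ is \shp-hard, and Lemma~\ref{lem:pattern-parsimonious} yields $\countvals(R(x,x)) \pr \countvals(q)$, so $\countvals(q)$ is \shp-hard. If instead $R(x)\land S(x)$ is a pattern of $q$, then Proposition~\ref{prp:RxSx-hard} shows $\ccountvals(R(x)\land S(x))$ is \shp-hard; since Codd tables are a special case of naïve tables, this a fortiori makes $\countvals(R(x)\land S(x))$ \shp-hard, and Lemma~\ref{lem:pattern-parsimonious} again transfers the hardness to $q$. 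For the upper bound I would observe that $\countvals(q)$ is always in \shp: a nondeterministic machine guesses a valuation $\nu$ (one constant per null, taken from its finite domain) and then checks the fixed query $\nu(D)\models q$ in polynomial time, so the number of accepting runs equals the number of satisfying valuations. Hence in the hard case the problem is \shp-complete.

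The crux of the tractability direction is to pin down exactly which queries avoid both patterns. First I would establish the syntactic characterization: $R(x,x)$ is a pattern of $q$ if and only if some atom of $q$ contains a repeated variable, and $R(x)\land S(x)$ is a pattern of $q$ if and only if two distinct atoms of $q$ share a variable. The ``if'' directions are immediate from the pattern operations (keep the two relevant variable occurrences, delete all other occurrences and all other atoms, then rename and reorder), using self-join-freeness to guarantee the two distinct relation symbols in the second case. For the ``only if'' directions I would argue that each pattern operation---deleting atoms, deleting variable occurrences, renaming a relation or variable to a fresh one, reordering within an atom---preserves the invariant ``every variable occurs exactly once in the whole query'' whenever it holds of $q$; since both $R(x,x)$ and $R(x)\land S(x)$ violate this invariant, a $q$ in which every variable occurs once can admit neither as a pattern. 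Consequently, neither pattern is present exactly when $q$ has the form $R_1(\bar x_1)\land\cdots\land R_m(\bar x_m)$ with all tuples $\bar x_i$ consisting of pairwise distinct variables and with pairwise disjoint variable sets, i.e.\ every variable occurs exactly once.

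For such a $q$ the plan is to show that satisfaction collapses to a valuation-independent nonemptiness test. Because no variable is repeated or shared, a homomorphism can be built atom by atom, so for any database $D'$ over $\sig(q)$ we have $D' \models q$ iff $D'(R_i) \neq \emptyset$ for every $i \in [1,m]$. Moreover, applying a valuation never deletes a fact, so $\nu(D)(R_i) \neq \emptyset$ iff $D(R_i) \neq \emptyset$; hence whether $\nu(D) \models q$ does not depend on $\nu$ at all. Therefore $\countvals(q)(D)$ equals $0$ if some relation $D(R_i)$ is empty, and otherwise equals the total number of valuations of $D$, namely $\prod_{\bot} |\dom(\bot)|$. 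This product has polynomially many bits and is computable in polynomial time, so $\countvals(q)$ is in \fp.

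I expect the main obstacle to be the ``only if'' part of the syntactic characterization, that is, arguing cleanly that the pattern operations cannot manufacture a repeated variable within an atom, nor a variable shared across two atoms, starting from a query in which every variable occurs exactly once (the key point being that renaming is only ever to a \emph{fresh} variable, so distinct variables can never be merged). Once this invariant is in place, the collapse of $q$ to a nonemptiness check that is insensitive to $\nu$, and hence the \fp\ membership, is routine.
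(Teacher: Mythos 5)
Your proposal is correct and follows essentially the same route as the paper: hardness by combining Propositions~\ref{prp:Rxx-hard} and~\ref{prp:RxSx-hard} with Lemma~\ref{lem:pattern-parsimonious}, membership in \shp\ by guessing a valuation and checking $\nu(D)\models q$, and tractability by observing that absence of both patterns forces every variable to occur exactly once, so the answer is either $0$ or the product of the domain sizes. The only difference is that you spell out the syntactic invariant argument (pattern operations cannot merge variable occurrences) that the paper asserts without proof, which is a welcome but inessential elaboration.
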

	\begin{proof}
		The \shp-hardness part of the claim follows from the last two propositions and from Lemma~\ref{lem:pattern-parsimonious}.
		We explain why the problems are in \shp\ right after this proof.
		When~$q$ does not have any of these two patterns then all variables have exactly one occurrence in~$q$. This implies that every valuation~$\nu$ of~$D$ is such that~$\nu(D)$ satisfies~$q$ (except when one relation is empty, in which case the result is simply zero). We can obviously compute the total number of valuations in FP by simply multiplying the sizes of the domains of every null in~$D$.
	\end{proof}
Notice that in this theorem, the membership of $\countvals(q)$ in \shp\ can be established by considering a nondeterministic Turing Machine~$M$ that, with input a non-uniform incomplete database~$D$, guesses a valuation~$\nu$ of~$D$ and verifies whether~$\nu(D)$ satisfies~$q$. This machine works in polynomial time as we can verify whether~$\nu(D)$ satisfies~$q$ in polynomial time (since~$q$ is a fixed FO query). Then given that $\countvals(q)(D)$ is equal to the number of accepting runs of~$M$ with input~$D$, we conclude that $\countvals(q)$ is in \shp. Obviously, the same idea shows that $\ccountvals(q)$ is in \shp. But with this restriction we obtain more tratable cases, as shown by the following dichotomy result.

	\begin{theorem}[dichotomy]
		\label{thm:countvals-sjfcqs-codd}
		Let~$q$ be an \sjfbcq.
		If $R(x) \land S(x)$ is a pattern of $q$, then $\ccountvals(q)$ is \shp-complete. Otherwise, $\ccountvals(q)$ is in \fp.
	\end{theorem}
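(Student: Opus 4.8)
The plan is to dispatch the two directions separately. For the \shp-hardness direction, if $R(x)\land S(x)$ is a pattern of $q$, I would simply combine Proposition~\ref{prp:RxSx-hard} with the Codd version of Lemma~\ref{lem:pattern-parsimonious} to obtain $\ccountvals(R(x)\land S(x)) \pr \ccountvals(q)$; membership of $\ccountvals(q)$ in \shp\ is the guess-a-valuation-and-check argument already noted after Theorem~\ref{thm:countvals-sjfcqs}. Together these yield \shp-completeness, so all the real work lies in the tractability direction.

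For tractability, I would first unfold what it means for $R(x)\land S(x)$ \emph{not} to be a pattern of $q$. Since the pattern operations only rename relations and variables to \emph{fresh} symbols (so they neither merge two distinct relations nor identify two distinct variables) and can otherwise only delete, the two atoms of $R(x)\land S(x)$ must arise from two \emph{distinct} atoms of $q$ that already share a variable. Hence $R(x)\land S(x)$ is a pattern of $q$ iff some variable of $q$ occurs in two distinct atoms; its absence means exactly that the atoms $R_1(\bar x_1),\dots,R_m(\bar x_m)$ of $q$ have pairwise disjoint variable sets, while repetitions of a variable \emph{within} a single atom are still allowed (this is the $R(x,x)$ phenomenon, which is tractable over Codd tables).

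Then I would exploit two layers of factorization, both enabled by the Codd assumption that each null occurs at most once in the whole database. First, because the $R_i$ are pairwise distinct (self-join-freeness) and the variable sets are pairwise disjoint, a homomorphism from $q$ into $\nu(D)$ can be chosen atom by atom independently; so $\nu(D)\models q$ iff $\nu(D)\models R_i(\bar x_i)$ for every $i$, and since in a Codd table the nulls occurring in distinct relations $D(R_i)$ are disjoint, these $m$ events are independent. This gives $\ccountvals(q)(D)=\prod_{i} N_i$, where $N_i$ counts the valuations of the nulls of $D(R_i)$ that make $R_i(\bar x_i)$ satisfied. Second, for a single atom $R(\bar x)$, I would observe that $\nu(D)\models R(\bar x)$ iff some fact of $D(R)$ matches the equality pattern imposed by the repeated variables of $\bar x$; complementing, I would count the valuations under which \emph{no} fact matches, and since (again by Codd) distinct facts of $D(R)$ use disjoint nulls, this too factors as a product over facts. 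Per fact I would count, via the equivalence classes of positions induced by equal variables, the local assignments realising all required equalities: a class containing two distinct constants contributes $0$, a class containing one constant $a$ forces the nulls in it to $a$, and a constant-free class contributes the size of the intersection of the domains of its nulls. All of this is computable in polynomial time since $q$ is fixed, hence every arity and the number of classes is bounded.

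I expect the main obstacle to be the bookkeeping in the single-atom count $N_i$: one must correctly handle facts mixing constants and nulls across the equivalence classes of an atom with repeated variables, and derive the per-fact independence rigorously from the Codd property. The cross-atom factorization and the pattern characterization are comparatively routine once the single-atom case is nailed down.
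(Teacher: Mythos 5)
Your proposal is correct and follows essentially the same route as the paper's proof: hardness via Proposition~\ref{prp:RxSx-hard} combined with the Codd version of Lemma~\ref{lem:pattern-parsimonious}, membership in \shp\ via the guess-a-valuation-and-check machine, and tractability via the same two-level factorization (across atoms using disjointness of variables plus the Codd property, then across facts by counting the complement where no fact matches, with per-fact counts given by domain intersections). The only cosmetic difference is that the paper eliminates constants upfront by replacing each constant $c$ with a fresh null of domain $\{c\}$, whereas you handle constants inline through the equivalence classes of positions; these amount to the same computation.
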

	\begin{proof}
		We only need to prove the tractability claim, since hardness follows from Proposition~\ref{prp:RxSx-hard} and Lemma~\ref{lem:pattern-parsimonious}.
		We will assume without loss of generality that~$D$ contains no constants, as we can introduce a fresh null
		with domain~$\{c\}$ for every constant~$c$ appearing in~$D$, and the result is again a Codd table, and this does not change the output of the problem.
		Let~$q$ be~$R_1({\bar x_1}) \land  \ldots \land  R_m({\bar x_m})$.
		Observe that since~$q$ does not have~$R(x) \land  S(x)$ as a pattern then any two atoms cannot have a variable in common.
		But then, since~$D$ is a Codd table we have
		$$\ccountvals(q)(D) \, = \, \prod_{i=1}^m \ccountvals(R_i({\bar x_i}))(D(R_i)).$$
		Hence it is enough to show how to compute
		$\ccountvals(R_i({\bar x_i}))(D(R_i))$ for every~$1\leq i \leq m$.
		Let~${\bar t_1},\ldots,{\bar t_n}$ be the tuples of~$D(R_i)$.
		Let us write~$\rho({\bar t_j})$ for the number of valuations of the nulls appearing in~${\bar t_j}$ that do not match~${\bar x_i}$.
		Clearly, $\ccountvals(R_i({\bar x_i}))(D(R_i)) = \prod_{\bot \text{ appears in }D(R_i)} |\dom(\bot)| - \prod_{j=1}^n \rho({\bar t_j})$, so
		we only have to show how to compute~$\rho({\bar t_j})$ for~$1\leq j \leq n$.
		Since we can easily compute the total number of valuations of~${\bar t_j}$, it is enough to show how to compute the number of valuations of~${\bar t_j}$
		that match~${\bar x_i}$.
		For every variable~$x$ that appears in~${\bar x_i}$, compute the size of the intersection of the corresponding nulls in~${\bar t_j}$, and denote it~$s_x$.
		Then the number of valuations of~${\bar t_j}$ that match~${\bar x_i}$ is simply~$\prod_{x\text{ appears in }{\bar x_i}} s_x$.
		This concludes the proof.
	\end{proof}

	At this stage, we have completed the first 
	column of Table~\ref{tab:dichos-count-intro}, 
	and we also know that~$R(x,x)$
	is a hard pattern in the uniform setting for naïve tables (but not for Codd tables, by Theorem~\ref{thm:countvals-sjfcqs-codd}).
	In the next section, we treat the uniform setting.

	\subsection{The complexity on the uniform case}
	\label{subsec:countvals-uniform}
	We start our investigation with the case of naïve tables. In Proposition~\ref{prp:Rxx-hard}, we already showed that $\ucountvals(R(x,x))$ is \shp-hard. In the following proposition, we identify two other simple queries for which this problem is still intractable.

	\begin{proposition}
	\label{prp:RxSxyTy-RxySxy-hard}
		$\ucountvals(R(x) \land S(x,y) \land T(y))$ and $\ucountvals(R(x,y) \land S(x,y))$ are both \shp-hard. This holds even in the restricted setting 
		in which all nulls are interpreted over the same fixed domain~$\{0,1\}$.
	\end{proposition}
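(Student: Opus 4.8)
The plan is to reduce from the problem \sIS\ of counting the independent sets of a graph, which is well known to be \#P-complete. Since the (fixed) domain is the two-element set $\{0,1\}$, every null behaves like a Boolean variable, and a valuation $\nu$ of a database built over the vertices of a graph $G=(V,E)$ can be read as the indicator of a vertex subset $I_\nu \defeq \{v : \nu(\bot_v)=1\}$. I would design the two databases so that, in both cases, $\nu(D)\models q$ holds if and only if $I_\nu$ is \emph{not} independent, i.e., some edge has both endpoints in $I_\nu$. Then the number of valuations \emph{not} satisfying $q$ is exactly the number of independent sets of $G$, so that $\ucountvals(q)(D) = 2^{|V|} - \sIS(G)$; since the total $2^{|V|}$ is computable in polynomial time, this yields $\sIS(G) = 2^{|V|} - \ucountvals(q)(D)$ and hence \shp-hardness.

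For the query $q_2 = R(x,y) \land S(x,y)$, I would introduce one null $\bot_v$ (with domain $\{0,1\}$) per vertex $v\in V$, let $S$ be the single constant fact $S(1,1)$, and, fixing an arbitrary orientation of each edge, put the fact $R(\bot_u,\bot_v)$ into $R$ for every edge $\{u,v\}$. A homomorphism from $q_2$ must map $(x,y)$ to the unique tuple $(1,1)$ of $S$, so $\nu(D)\models q_2$ iff $R(1,1)\in\nu(D)$, which happens exactly when some edge has both of its nulls set to $1$ --- i.e.\ when $I_\nu$ is not independent. For $q_1 = R(x) \land S(x,y) \land T(y)$ the gadget is the mirror image: again use one null $\bot_v$ per vertex, pin the two unary relations to the constant facts $R(1)$ and $T(1)$, and put $S(\bot_u,\bot_v)$ into $S$ for each oriented edge. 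Since $R$ forces $x=1$ and $T$ forces $y=1$, the query holds iff $S(1,1)\in\nu(D)$, i.e.\ again iff some edge is set monochromatically to $1$. Both constructions are plainly polynomial time, use only the domain $\{0,1\}$, and (being naïve tables) allow each $\bot_v$ to recur across all its incident edges.

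What remains is to verify the stated correspondence $\ucountvals(q)(D) = 2^{|V|} - \sIS(G)$ formally: the valuations of $D$ are in bijection with the subsets of $V$ via $\nu \mapsto I_\nu$, and under each construction the analysis above shows $\nu(D)\models q \iff I_\nu$ contains both endpoints of some edge, which is precisely the negation of independence. I expect this verification to be routine; the only points to check carefully are that set semantics (deduplication of facts in $\nu(D)$) does not affect query satisfaction --- it does not, since satisfaction depends only on which tuples are present --- and that the chosen edge orientation is irrelevant, as detecting the tuple $(1,1)$ is symmetric. The genuine content of the argument is the choice of \sIS\ as the source problem and the observation that the Boolean domain lets the monotone query $q$ act as a ``bad-edge detector''; once this is in place, the hardness of both patterns follows uniformly.
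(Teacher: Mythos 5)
Your proposal is correct and follows essentially the same route as the paper's proof: a reduction from \sIS\ over the domain $\{0,1\}$, with nulls as vertex indicators, constant facts pinning the query to the tuple $(1,1)$, and the complement count $2^{|V|}-\ucountvals(q)(D)$ recovering the number of independent sets. The only differences are cosmetic (you encode each edge in one orientation and swap the roles of $R$ and $S$ in the second query, whereas the paper inserts both orientations and places the constant fact in $R$), and as you note, symmetry of the $(1,1)$-detector makes these choices immaterial.
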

	\begin{proof}
		We reduce both problems from the problem of counting the number of independent sets in a graph (denoted by \sIS), which is \shp-complete~\cite{provan1983complexity}. 
		We start with~$\ucountvals(R(x)\land  S(x,y)\land  T(y))$. 
		Let $q = R(x)\land  S(x,y)\land  T(y)$ and~$G=(V,E)$ be a graph. 
		Then we define an incomplete database $D$ as follows.
		For every node~$v \in V$, we have a null~$\bot_v$, and the uniform domain is~$\{0,1\}$.
		For every edge~$\{u,v\} \in E$, we have facts~$S(\bot_u,\bot_v)$ and~$S(\bot_v,\bot_u)$ in $D$.
		Finally, we have facts~$R(1)$ and~$T(1)$ in $D$.
		For a valuation~$\nu$ of the nulls, consider the corresponding subset~$S_\nu$ of nodes of~$G$, given by $S_\nu = \{t \in V \mid \nu(\bot_t) = 1 \}$.
		This is a bijection between the valuations of the database and the node subsets of~$G$.
		Moreover, we have that~$\nu(D) \not\models q$ if and only if~$S_\nu$ is an independent set of~$G$.
		Since the total number of valuations of $D$ is $2^{|V|}$, we have that the number of independent sets of $G$ is equal to~$2^{|V|} - \ucountvals(q)(D)$.
		Hence, we conclude that $\sIS \tr \ucountvals(q)$. 
		The idea is similar for~$\ucountvals(R(x,y)\land  S(x,y))$: 
		we encode the graph with the relation~$S$ in the same way, and this time we add the fact~$R(1,1)$.
		\end{proof}
	As shown in the following result, it turns out that the three aforementioned patterns are enough to fully characterize the complexity of counting valuations for naïve tables in the uniform setting.
\begin{toappendix}
\subsection{Proof of Theorem~\ref{thm:countvals-naive-uniform}}
	\label{apx:countvals-naive-uniform}
	In this section we prove the tractability claim of the following dichotomy theorem.
\end{toappendix}

	\begin{theoremrep}[dichotomy]
	\label{thm:countvals-naive-uniform}
		Let~$q$ be an \sjfbcq. If $R(x,x)$ or $R(x) \land S(x,y) \land T(y)$ or~$R(x,y) \land S(x,y)$ is a pattern of $q$, then $\ucountvals(q)$ is \shp-complete.
		Otherwise, $\ucountvals(q)$ is in \fp.
	\end{theoremrep}

	The \shp -completeness part of the claim follows directly from what we have proved already. 
	Here, the most challenging part of the proof is actually the tractability part. We only present a simple example to give an idea of the proof technique, and defer the full proof to Appendix~\ref{apx:countvals-naive-uniform}.
	We will use the following definition.
		Given~$n,m \in \mathbb{N}$, let us write~$\surj_{n\rightarrow m}$ for the number of surjective functions from~$\{1,\ldots,n\}$ to~$\{1,\ldots,m\}$.
		By an inclusion--exclusion argument, one can show that~$\surj_{n\rightarrow m} = \sum_{i=0}^{m-1} (-1)^i \binom{m}{i} (m-i)^n$
		(for instance, see~\cite{stackexchange_surjective}).
		It is clear that this can be computed in FP, when~$n$ and~$m$ are given in unary.

	\begin{example}
	\label{expl:RxSx}
		{\em Let~$q$ be the \sjfbcq\ $R(x)\land  S(x)$, and~$D$ be an incomplete database over relations~$R,S$.
		Notice that~$q$ does not have any of the patterns mentioned in Theorem~\ref{thm:countvals-naive-uniform}. We
		will show that~$\ucountvals(q)$ is in \fp.
		Since~$q$ contains only two unary atoms we can also assume without loss of generality 
		that the input~$D$ is a Codd table (otherwise all valuations are satisfying).
		
		Since we can compute in FP the total number of valuations, it is enough to show how to compute the number of valuations of~$D$ that do not satisfy~$q$.
		Let $\dom$ be the uniform domain,~$d$ be its size,
		$n_R$ (resp., $n_S$) be the number of nulls in~$D(R)$ (resp., in $D(S)$) and~$C_R$ (resp., $C_S$) be the set of constants occurring
		in~$D(R)$ (resp., in $D(S)$), with~$c_R$ (resp., $c_S$) its size.
		We can assume without loss of generality that~$C_R \cap C_S = \emptyset$, as otherwise all the valuations are satisfying, and this is computable in PTIME.
		Furthermore, we can also assume that~$C_R \cup C_S \subseteq \dom$, since we can remove the constants that are not in~$\dom$, as these can never match.
		
		Let~$M := \dom \setminus(C_R \cup C_S)$, and~$m$ its size (i.e., with our assumptions we have $m = d-c_R-c_S$).
		Fix some subsets~$M' \subseteq M$ and~$R' \subseteq C_R$. The quantity
		$\surj_{n_R \to |M'|+|R'|}$ then counts the number of valuations of the nulls of~$D(R)$ that span exactly~$M'\cup R'$.
		Moreover, letting~$\nu_R$ be a valuation of the nulls of~$D(R)$ that spans exactly~$M'\cup R'$, the quantity
		$(d-c_R-|M'|)^{n_S}$ is the number of ways to extend~$\nu_R$ into a valuation~$\nu$ of all the nulls of~$D$ so that~$\nu(D) \not\models q$: indeed,
		every null of~$D(S)$ can take any value in~$\dom \setminus(C_R \cup M')$.
		The number of valuations of~$D$ that do not satisfy~$q$ is then (keeping in mind that a null in~$D(R)$ cannot take a value in~$C_S$):
		\[ \sum_{\substack{M' \subseteq M \\ R' \subseteq C_R}} \surj_{n_R \to |M'|+|R'|} \times (d-c_R-|M'|)^{n_S}\]
		and since the summands only depends on the sizes of~$M'$ and~$R'$, this is equal to
		\[ \sum_{\substack{0 \leq m' \leq m \\ 0 \leq r' \leq c_R}} \binom{m}{m'} \binom{c_R}{r'} \surj_{n_R \to m'+r'} \times (d-c_R-m')^{n_S}\]
		This last expression can clearly be computed in PTIME.\footnote{Note that in the sum we do not need to specify that~$m'+r' \leq n_R$, as when~$a <b$ we have~$\surj_{a \to b} = 0$.} } \qed
	\end{example}

	\begin{toappendix}
	First, to characterize the queries that do not have these patterns, we will use the notion of \emph{connectivity graph} of a sj-free CQ~$q$:

	\begin{definition}
	\label{def:connectivity-graph}
		Let~$q$ be a sj-free CQ.
		The \emph{connectivity graph of~$q$} is the graph~$G_q = (V,E)$ with labeled edges, where~$V$ is the set of atoms of~$q$, and for every
		two atoms~$R(\bar{x_i}),S(\bar{y_i})$ of~$q$, if they share a variable then we have an edge between the corresponding nodes of~$G_q$, that edge being labeled
		with the variables in~$\bar{x_i} \cap \bar{y_i}$.
	\end{definition}

	\begin{example}
		\label{expl:connectivity-graph}
		Figure~\ref{fig:connectivity-graph} shows the connectivity graph of the query
			\[R_1(x_1,x_1,y_1,t_1),R_2(x_1,y_1,t_2),S_1(x_2,t_3),S_2(x_2,t_4),\\S_3(x_2),T_1(x_3),T_2(x_3),T_3(x_3),T_4(x_3,t_5).\]
	\end{example}

	\begin{figure}
		\centering
			\begin{tikzpicture}

\node (r1) at (0,2) {$R_1(x_1,x_1,y_1,t_1)$};
\node (r2) at (0,0) {$R_2(x_1,y_1,t_2)$};

\draw[black,thick] (r1) -- (r2) node[midway,right] {$x_1,y_1$};

\end{tikzpicture}
\qquad
\begin{tikzpicture}
\node (s1) at (0,2) {$S_1(x_2,t_3)$};
\node (s2) at (0,0) {$S_2(x_2,t_4)$};
\node (s3) at (2.3,1) {$S_3(x_2)$};

\draw[black,thick] (s1) -- (s2) node[midway,right] {$x_2$};
\draw[black,thick] (s1) -- (s3) node[midway,right,yshift=.3em] {$x_2$};;
\draw[black,thick] (s2) -- (s3) node[midway,right,yshift=-.5em] {$x_2$};;

\end{tikzpicture}
\qquad
\begin{tikzpicture}
\node (t1) at (0,2) {$T_1(x_3)$};
\node (t2) at (0,0) {$T_2(x_3)$};
\node (t3) at (2,0) {$T_3(x_3)$};
\node (t4) at (2,2) {$T_4(x_3,t_5)$};

\draw[black,thick] (t1) -- (t2) node[midway,left] {$x_3$};
\draw[black,thick] (t2) -- (t3) node[midway,below] {$x_3$};
\draw[black,thick] (t3) -- (t4) node[midway,right] {$x_3$};
\draw[black,thick] (t4) -- (t1) node[midway,above] {$x_3$};

\draw[black,thick] (t1) -- (t3) node[midway,xshift=-1.5em,yshift=.5em] {$x_3$};
\draw[black,thick] (t2) -- (t4) node[midway,xshift=1.5em,yshift=.5em] {$x_3$};

\end{tikzpicture}

		\caption{The connectivity graph~$G_q$ of the sj-free CQ~$q$ from Example~\ref{expl:connectivity-graph}.}
		\label{fig:connectivity-graph}
	\end{figure}

	The following is then readily observed:

	\begin{lemma}
	\label{lem:connectivity-graph-clique-naive}
		Let~$q$ be a sj-free CQ that does not contain any of the patterns mentioned in Theorem~\ref{thm:countvals-naive-uniform}.
		Then for every connected component~$C$ of~$G_q$,~$C$ is a clique and there exists a variable such that all edges of~$C$ are labeled by exactly that variable. 
	\end{lemma}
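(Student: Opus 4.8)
The plan is to first reinterpret the absence of the forbidden patterns from Theorem~\ref{thm:countvals-naive-uniform} as two clean combinatorial properties of the labeled graph $G_q$, and then to reason about connected components directly. Recall that $q$ is self-join--free, so distinct atoms carry distinct relation symbols and $G_q$ is a \emph{simple} graph whose nodes are the atoms of $q$. First I would observe that, since $R(x,y) \land S(x,y)$ is not a pattern of $q$, no two distinct atoms can share two distinct variables; equivalently, for any two atoms $R(\bar x_i)$ and $S(\bar y_i)$ the set $\bar x_i \cap \bar y_i$ has size at most one, so \emph{every edge of $G_q$ is labeled by exactly one variable}. Second, I would show that \emph{any two edges of $G_q$ sharing a common atom carry the same label}: if some atom $B$ were joined to $A$ by an edge labeled $x$ and to $C$ by an edge labeled $y$ with $x \neq y$, then $A,B,C$ are three pairwise distinct atoms (distinctness of $A$ and $C$ follows from simplicity of $G_q$, since otherwise the two edges would coincide), with $B$ containing both $x$ and $y$, $A$ containing $x$, and $C$ containing $y$; deleting the remaining atoms and the extra variable occurrences then exhibits $R(x) \land S(x,y) \land T(y)$ as a pattern of $q$, contradicting the hypothesis.

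Next, I would fix a connected component $C$ of $G_q$. If $C$ is a single atom, it is trivially a clique with no edges and the statement holds vacuously, so assume $C$ has at least one edge. I would propagate the second property across $C$ to conclude that all its edges share one common label: given any two edges $e,e'$ of $C$, connectedness yields a walk of edges from $e$ to $e'$ in which consecutive edges share an endpoint, and applying the ``adjacent edges agree'' property step by step forces $e$ and $e'$ to carry the same variable. Hence there is a single variable $x$ labeling every edge of $C$.

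Finally I would upgrade ``all edges labeled $x$'' to the clique property. Since $C$ is connected and has at least one edge, every atom of $C$ is incident to some edge, which is labeled $x$, so \emph{every atom of $C$ contains the variable $x$}. But then any two atoms of $C$ share $x$ and are therefore adjacent in $G_q$, so $C$ is a clique; and by construction all of its edges are labeled exactly by $x$, which is precisely what is claimed.

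I expect the main obstacle to be the bookkeeping in the second step, where one must check carefully that a potential mismatched-label path genuinely involves three distinct atoms and two distinct variables $x \neq y$, so that the forbidden pattern $R(x) \land S(x,y) \land T(y)$ really applies; this is exactly where both simplicity of $G_q$ and the single-variable edge labeling established in the first step are used. The remaining two steps are a routine propagation-along-a-connected-graph argument together with a direct adjacency check, neither of which should present any difficulty.
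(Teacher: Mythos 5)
Your proof is correct, and it runs on the same two engines as the paper's: the absence of $R(x,y)\land S(x,y)$ forces every edge of $G_q$ to be labeled by exactly one variable, and the absence of $R(x)\land S(x,y)\land T(y)$ rules out the bad label configurations. The difference lies in the order and mechanism of the two conclusions. The paper proves cliqueness \emph{first}, by contradiction: a connected non-clique yields a two-edge path $A_1$--$A_2$--$A_3$ with $A_1,A_3$ non-adjacent, non-adjacency forces the two edge labels to be disjoint (hence distinct) variables, and the forbidden pattern appears; the uniform label is then deduced from cliqueness. You instead prove uniform labeling first, via the local lemma that two distinct edges meeting at a common atom must carry the same label (your hypothesis $x\neq y$ replaces the paper's non-adjacency trick, and distinctness of the third atom follows from simplicity of $G_q$, as you correctly note), you propagate this along walks through the connected component, and you then obtain cliqueness as a corollary: every atom of the component is incident to some edge, hence contains the common variable $x$, and any two atoms sharing $x$ are adjacent by definition of $G_q$. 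What each buys: your local-to-global route makes the clique property a one-line consequence and avoids any case analysis; the paper's route skips the walk propagation because cliqueness lets it compare edges directly---although, read closely, its second step (clique plus two differently-labeled edges gives the pattern) still implicitly needs your adjacent-edges lemma when the two edges share no endpoint. Both arguments treat single-atom components vacuously, and neither uses the absence of $R(x,x)$, which only plays a role elsewhere in the tractability proof.
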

	\begin{proof}
		First, observe that every edge of~$G_q$ must be labeled by exactly one variable, as otherwise the query~$q$ would contain the pattern~$R(x,y)\land  S(x,y)$.
		Let~$C$ be a connected component of~$G_q$. Then we have:
		\begin{itemize}
			\item $C$ is a clique. Indeed, assume by contradiction that~$C$ is not a clique. Then, since~$C$ is connected and is not a clique, we can find
				3 nodes~$A_1(\overline{x}),A_2(\overline{x'}),A_3(\overline{x''})$ such that~$A_1(\overline{x})$ is adjacent to~$A_2(\overline{x'})$, $A_2(\overline{x'})$ is adjacent to~$A_3(\overline{x''})$, and~$A_1(\overline{x})$ is not adjacent to~$A_3(\overline{x''})$. Let~$X$ be $\bar{x} \cap \bar{x'}$ and~$Y$ be~$\bar{x'} \cap \bar{x''}$, i.e., the labels on the two corresponding edges of~$C$. By definition of~$G_q$ and since~$A_1(\overline{x})$ is not adjacent to~$A_3(\overline{x''})$, we must have~$X \cap Y = \emptyset$.
				But~$X$ and~$Y$ are not empty (again by definition of~$G_q$), so by picking~$x$ in~$X$ and~$y$ in~$Y$ we see that~$q$ contains the pattern~$R(x)\land  S(x,y)\land  T(y)$, a contradiction.
			\item There exists a variable that labels every edge of~$C$. Indeed, since every edge of~$G_q$ is labeled by exactly one variable, and since~$C$ is a clique, if it was not the case then again we could find the pattern~$R(x)\land  S(x,y)\land  T(y)$ in~$q$.
		\end{itemize}

	This concludes the proof.
	\end{proof}
	
	For instance, the query from Example~\ref{expl:connectivity-graph} does not satisfy this criterion, since the edge in the first connected component of~$G_q$ is labeled by two variables. However if we consider the query $S_1(x_2,t_3),S_2(x_2,t_4),S_3(x_2),T_1(x_3),T_2(x_3),T_3(x_3),T_4(x_3,t_5)$ (i.e., we remove the first connected component), then it satisfies the criterion.

	We will also use the general fact that for a sj-free CQ~$q$, we can assume wlog that~$q$ does not contain variables that occur only once:

	\begin{lemma}
	\label{lem:remove-ear-variables}
		Let~$q$ be a sj-free CQ,
		and let~$q'$
		be the sj-free CQ obtained from~$q$ by deleting all the variables that have only one occurrence in~$q$.
		Then~$\ucountvals(q) \leq_{\mathrm{T}}^{\mathrm{p}} \ucountvals(q')$.
	\end{lemma}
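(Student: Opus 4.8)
The plan is to realize the reduction by a single oracle call in which we \emph{project away} the columns of the input database corresponding to the single-occurrence variables of $q$. Write $q = R_1(\bar x_1) \land \dots \land R_m(\bar x_m)$, and let $q' = R_1(\bar x_1') \land \dots \land R_m(\bar x_m')$ be obtained by deleting every variable with a single occurrence, so that for each $i$ the kept positions of $R_i$ are exactly the positions of $\bar x_i$ whose variable occurs at least twice in $q$ (the arities being reduced accordingly). Given a uniform incomplete database $D$ over $\sig(q)$, define $D' \defeq \pi(D)$, the uniform incomplete database over $\sig(q')$ obtained from $D$ by deleting, in each relation $D(R_i)$, the entries (constants or nulls) sitting in the deleted positions, using set semantics to remove the duplicate tuples this may create. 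This $D'$ is computable in polynomial time and has the same uniform domain $\dom$ as $D$. The reduction will output $|\dom|^{|N_0|} \cdot \ucountvals(q')(D')$, where $N_0$ is the set of nulls of $D$ occurring only in deleted positions.

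Correctness rests on two observations. First, projection commutes with valuation: for any valuation $\nu$ of $D$, if $\nu_1$ denotes its restriction to the nulls surviving in $D'$, then $\pi(\nu(D)) = \nu_1(D')$. This holds because a valuation acts entrywise on tuples while $\pi$ merely drops entries, so relation by relation both sides equal $\{\pi(\nu(\bar t)) : \bar t \in D(R_i)\}$; the nulls of $N_0$ never occur in a surviving entry, so replacing $\nu$ by $\nu_1$ changes nothing. Second, since each deleted variable occurs exactly once in $q$ and in no other atom, a homomorphism from $q$ into a completion is free to map it to any value, so for every complete database $C$ we have $C \models q$ iff $\pi(C) \models q'$ (the forward direction restricts a homomorphism, the backward direction extends one by using, for each single-occurrence variable, a witness value read off a tuple of $C$). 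Combining the two, $\nu(D) \models q$ iff $\nu_1(D') \models q'$, and crucially this depends only on $\nu_1$, not on how $\nu$ valuates the nulls of $N_0$.

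From here the count factorizes cleanly. Every valuation $\nu$ of $D$ decomposes uniquely as a valuation $\nu_1$ of the nulls of $D'$ together with an arbitrary assignment of the $|N_0|$ nulls of $N_0$ into $\dom$, and satisfaction of $q$ depends only on $\nu_1$. Hence the number of valuations $\nu$ with $\nu(D) \models q$ equals $|\dom|^{|N_0|}$ times the number of valuations $\nu_1$ with $\nu_1(D') \models q'$, that is, $\ucountvals(q)(D) = |\dom|^{|N_0|} \cdot \ucountvals(q')(D')$. This is a single oracle call (a parsimonious reduction up to the easily computable factor $|\dom|^{|N_0|}$), which in particular yields $\ucountvals(q) \tr \ucountvals(q')$.

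The main point requiring care — and the one genuine obstacle — is the interaction of projection with set semantics in the commutation claim, which I would verify tuple-by-tuple as above, together with one degenerate case: an atom $R_i$ all of whose variables occur only once projects to arity~$0$, which is not a legal \sjfbcq. Such an atom imposes only that $R_i$ be nonempty in the completion, equivalently that $D(R_i) \neq \emptyset$ (valuations never delete facts). I would handle it outside the oracle call, returning~$0$ if some such relation is empty and otherwise deleting the whole atom from $q$ (counting its nulls as part of $N_0$), thereby reducing to a well-formed $q'$ with one fewer atom and iterating; this keeps every intermediate query a valid \sjfbcq\ and leaves the counting identity intact.
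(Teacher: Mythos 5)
Your proposal is correct and takes essentially the same route as the paper: project out the columns corresponding to single-occurrence variables and multiply the oracle's answer by $|\dom|^{|N_0|}$, where $N_0$ is the set of nulls occurring only in deleted columns (the paper's set $S$), yielding the identity $\ucountvals(q)(D) = |\dom|^{|N_0|}\cdot\ucountvals(q')(D')$. The only difference is that you verify the details the paper dismisses as ``clear'' (commutation of projection with valuations under set semantics, and the homomorphism extension argument), and you explicitly treat the degenerate case of an atom all of whose variables occur once, which the paper's proof passes over silently.
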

	\begin{proof}
		Let~$D$ be an incomplete database input of~$\ucountvals(q)$. 
		Let~$S$ be set of nulls~$\bot$ such that:
		\begin{itemize}
			\item $\bot$ occurs in a column corresponding to a variable that has been deleted; and
			\item $\bot$ does not occur in a column corresponding to a variable that has not been deleted.
		\end{itemize}
		Then, letting~$D'$ be the database obtained from~$D$ by projecting out the columns corresponding to the deleted variables,
		it is clear that we have~$\ucountvals(q)(D) = \ucountvals(q')(D') \times \prod_{\bot \in S}|\dom(\bot)| $, where~$\dom$ is the uniform domain of the nulls.
		We note here that this lemma is also true in the non-uniform setting.
	\end{proof}

	By Lemma~\ref{lem:connectivity-graph-clique-naive} and Lemma~\ref{lem:remove-ear-variables}, it is enough to show the tractability of~$\ucountvals(q)$ when~$q$ is of the form~$C_1(x_1) \land  \ldots \land  C_m(x_m)$,
	where each~$C_i(x_i)$ is what we call a \emph{basic singleton query}, i.e., is a conjunction of unary atoms over the same variable~$x_i$.
	We call such a sj-free CQ a \emph{conjunction of basic singletons}. For instance, \[S_1(x_2),S_2(x_2),S_3(x_2),T_1(x_3),T_2(x_3),T_3(x_3),T_4(x_3)\] is such a query, with~$m=2$.
	We will use the following: 

	\begin{lemma}
		\label{lem:IE}
		Let~$q=C_1(x_1) \land  \ldots \land  C_m(x_m)$ be a conjunction of basic singletons sj-free query, and let~$D$ be an incomplete database.
		For~$S \subseteq [m]$, we define~$N_S(D) \defeq |\{\nu \text{ valuation of } D \mid \nu(D) \not\models \bigvee_{i \in S} C_i(x_i)\}|$.
		Then we have~$\ucountvals(q)(D) = \sum_{S \subseteq [m]} (-1)^{|S|} N_S(D)$.
	\end{lemma}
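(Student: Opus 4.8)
The plan is to recognize this statement as the inclusion--exclusion formula for the cardinality of the complement of a union of sets, applied to the family of ``bad'' events indexed by $[m]$.

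First I would observe that, since $q = C_1(x_1) \land \ldots \land C_m(x_m)$ is a conjunction of basic singletons, the variables $x_1, \ldots, x_m$ are pairwise distinct and each $x_i$ occurs only in the atoms of $C_i$. Hence $q$ is logically equivalent to $\bigwedge_{i=1}^m \exists x_i\, C_i(x_i)$, so that for any valuation $\nu$ we have $\nu(D) \models q$ if and only if $\nu(D) \models C_i(x_i)$ for every $i \in [m]$.

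Then I would set up the inclusion--exclusion. Let~$U$ denote the (finite) set of all valuations of~$D$, and for each $i \in [m]$ let $B_i \defeq \{\nu \in U \mid \nu(D) \not\models C_i(x_i)\}$. By the previous observation, the set of valuations counted by $\ucountvals(q)(D)$ is exactly $U \setminus \bigcup_{i=1}^m B_i$. The standard inclusion--exclusion principle then gives
\[
\ucountvals(q)(D) \;=\; \Bigl| U \setminus \textstyle\bigcup_{i=1}^m B_i \Bigr| \;=\; \sum_{S \subseteq [m]} (-1)^{|S|} \Bigl| \textstyle\bigcap_{i \in S} B_i \Bigr|,
\]
where the $S = \emptyset$ term equals $|U|$ under the usual empty-intersection convention.

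Finally I would identify each intersection with $N_S(D)$. A valuation $\nu$ lies in $\bigcap_{i \in S} B_i$ exactly when $\nu(D) \not\models C_i(x_i)$ for every $i \in S$, which is precisely the condition $\nu(D) \not\models \bigvee_{i \in S} C_i(x_i)$; hence $|\bigcap_{i \in S} B_i| = N_S(D)$, and substituting this into the formula yields the claimed identity. The only step requiring any care is the first one --- correctly justifying the decomposition of~$q$ into independent per-variable conditions, and the matching translation of ``$\nu(D) \not\models \bigvee_{i\in S} C_i(x_i)$'' into ``$\nu(D) \not\models C_i(x_i)$ for all $i \in S$'' --- since everything afterwards is textbook inclusion--exclusion.
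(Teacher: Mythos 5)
Your proposal is correct and matches the paper's argument: the paper's proof of this lemma is simply ``Direct, by inclusion--exclusion,'' and your write-up is exactly that argument spelled out, with the bad events $B_i$, the empty-intersection convention for $S=\emptyset$, and the identification of $\bigcap_{i\in S} B_i$ with the set counted by $N_S(D)$. Nothing is missing and no step deviates from what the paper intends.
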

	\begin{proof}
		Direct, by inclusion--exclusion.
	\end{proof}

	Hence, and remembering that we consider data complexity, it is enough to show how to compute~$N_S(D)$ for every~$S \subseteq [m]$.
	The main difficulties in computing~$N_S(D)$ is that the relations can have nulls in common (since we consider naïve tables), and that they may also have constants; this makes it technically painful to express a closed-form expression for~$N_S(D)$.
	We explain how to do it next, thus finishing the proof of Theorem~\ref{thm:countvals-naive-uniform}.

	\begin{proposition}
		\label{prp:compute-NSD}
		Let~$q=C_1(x_1) \land  \ldots \land  C_m(x_m)$ be a conjunction of basic singletons sj-free query and~$S \subseteq [m]$.
		Then, given an incomplete database~$D$ as input, we can compute~$N_S(D)$ in polynomial time.
	\end{proposition}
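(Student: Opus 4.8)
The plan is to reformulate $N_S(D)$ as the number of certain constrained colourings of the nulls and then evaluate that number by a sum with only polynomially many terms, exploiting that $q$ is fixed. First I would rephrase the condition. Since each $C_i(x_i)$ is a conjunction of unary atoms over $x_i$ and $q$ is self-join--free, the relation symbols occurring across the $C_i$ are pairwise distinct, and I view each unary relation as the set of its elements. For a valuation $\nu$ and a relation $A$, a value $v$ is present in $\nu(D)(A)$ iff $v$ is a constant fact of $A$ or some null of $A$ is mapped to $v$; thus $\nu(D)\models C_i(x_i)$ iff some value is present in every relation of $C_i$, so $\nu(D)\not\models\bigvee_{i\in S}C_i$ iff $\bigcap_{A\in C_i}\nu(D)(A)=\emptyset$ for every $i\in S$. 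Two easy reductions follow: if for some $i\in S$ there is a constant (in $\dom$ or not) occurring as a constant fact in every relation of $C_i$, then $C_i$ holds under all valuations and $N_S(D)=0$; and every null occurring only in relations not mentioned by $\{C_i\}_{i\in S}$ is unconstrained, contributing a multiplicative factor $|\dom|^{k}$ with $k$ their number. This leaves the task of colouring the remaining (``relevant'') nulls by $\dom$.

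Next I would classify by signatures. Let $\mathcal{R}_S$ be the constant-size set of relations occurring in the $C_i$, $i\in S$. Each relevant null $\bot$ gets a membership signature $\tau(\bot)=\{A\in\mathcal{R}_S\mid \bot\in A\}\neq\emptyset$, and each value $v\in\dom$ a constant-fact signature $\sigma(v)=\{A\in\mathcal{R}_S\mid v \text{ is a constant fact of } A\}$; there are only $2^{|\mathcal{R}_S|}$ signatures, so the numbers $n_\tau$ of relevant nulls and $d_\sigma$ of values of each signature are easily computed. Regarding each $C_i$ as a subset of $\mathcal{R}_S$, call $U\subseteq\mathcal{R}_S$ \emph{good} if it contains no $C_i$ ($i\in S$); good sets are downward closed. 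The crucial observation is that a colouring is counted iff for every value $v$ the set $\sigma(v)\cup\bigcup_{\nu(\bot)=v}\tau(\bot)$ is good, so for a value of signature $\sigma$ the allowed \emph{type-profiles} $\pi\subseteq 2^{\mathcal{R}_S}$ (sets of null-types present at that value) are exactly those whose union together with $\sigma$ is good.

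Finally I would assemble the count. I would sum over all vectors $(k_{\sigma,\pi})$ of non-negative integers, where $k_{\sigma,\pi}$ is the number of class-$\sigma$ values carrying an allowed profile $\pi$ and $\sum_\pi k_{\sigma,\pi}=d_\sigma$; since $\sigma$ and $\pi$ range over constantly many values and each $k_{\sigma,\pi}\leq d_\sigma\leq|\dom|$, there are only polynomially many such vectors. For a fixed vector, the number of colourings realising it is a product of the multinomial coefficients $\prod_\sigma\binom{d_\sigma}{(k_{\sigma,\pi})_{\pi}}$ choosing which values play which role, times, for every type $\tau$, a surjection factor $\surj_{n_\tau\to m_\tau}$, where $m_\tau=\sum_{\sigma,\pi:\,\tau\in\pi}k_{\sigma,\pi}$ is the number of values at which $\tau$ must appear---the surjection enforcing that the $n_\tau$ nulls of type $\tau$ hit exactly those values and no others, generalising the ``span exactly $M'\cup R'$'' device of Example~\ref{expl:RxSx}. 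Multiplying by the free-null factor from the first step gives $N_S(D)$, and since surjection numbers are computable in \fp, the whole expression is evaluable in polynomial time.

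The main obstacle is the bookkeeping in this last step: because we allow naïve tables, a single null may belong to several relations, even across different $C_i$'s, so the type-profiles realised at distinct values are produced from a shared pool of nulls rather than independently. The surjection-number factorisation is precisely what decouples this shared pool---a null of type $\tau$ is distributed surjectively over the $m_\tau$ values whose profile contains $\tau$---and the real work is verifying that this product counts each valuation exactly once while the index set $(k_{\sigma,\pi})$ stays of constant size, so that the sum remains polynomial. Interleaving the constant facts (through $\sigma(v)$) and the values lying outside the range of every null adds case distinctions but no new ideas.
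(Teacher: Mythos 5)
Your proposal is correct, and it proves the proposition by a genuinely different decomposition than the paper's. Both arguments share the same basic devices: classifying nulls and constants by the set of relevant relations in which they occur, using surjection numbers $\surj_{n\to m}$ to pin down exact images of groups of nulls, and observing that every quantity involved depends only on cardinalities, so that sums over subsets of $\dom$ collapse to binomial/multinomial-weighted sums over sizes with only polynomially many terms (the schema being fixed). The difference lies in how the admissibility condition --- no $C_i$, $i\in S$, becomes satisfied --- is enforced. The paper works \emph{block-centrically}: it fixes a linear order on the null blocks $N_{\mathbf{s}}$ and builds a nested sum in which the image of each successive block is split across the cells of an intersection lattice $\I(Z_{i-1})$ generated by the constant blocks and the previously placed image pieces, discarding along the way any choice that would complete some $C_i$; this iterative bookkeeping is intricate and the paper describes it only informally. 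You work \emph{value-centrically}: each domain value carries the profile $\pi(v)$ of null-types that hit it, admissibility becomes the purely local condition that $\sigma(v)\cup\bigcup_{\tau\in\pi(v)}\tau$ contains no $C_i$, and the whole count is a single flat sum over the constantly many pairs $(\sigma,\pi)$ with multiplicities $k_{\sigma,\pi}\leq|\dom|$, weighted by $\prod_\sigma\binom{d_\sigma}{(k_{\sigma,\pi})_\pi}\prod_\tau\surj_{n_\tau\to m_\tau}$. What your transposition buys is a clean closed form and an exactness argument that is easy to verify (each valuation induces exactly one profile assignment, and each admissible profile assignment is realized by exactly $\prod_\tau\surj_{n_\tau\to m_\tau}$ valuations, since nulls of distinct types form disjoint pools); it also specializes directly to the formula of Example~\ref{expl:RxSx} after regrouping the terms involving the $S$-nulls. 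Your preliminary reductions (constants already witnessing some $C_i$, constants outside $\dom$, nulls occurring only in irrelevant relations) match the paper's assumptions and its restriction to $S=[m]$, so nothing is missing at the boundary cases either.
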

	\begin{proof}
		We strongly advise the reader to have well understood Example~\ref{expl:RxSx} before reading this proof.
		First, observe that to compute~$N_S(D)$ we can assume without loss of generality that the input database~$D$ only contains facts over relation names that occur in
		some~$C_i(x_i)$, for~$i\in S$.
		Indeed, $N_S(D)$ counts the valuations~$\nu$ of~$D$ that do not satisfy any of the~$C_i(x_i)$ for~$i\in S$,
		so that for any~$j \notin S$ we do not care if~$\nu$ satisfies~$C_j(x_j)$ or not; hence, we could simply multiply the result by the appropriate factor.
		Therefore, we can assume that~$S$ is~$[m]$.
		We now need to fix some notation.
		Let us write the conjunction of basic singleton sj-free query~$q$ as
		\[R_1(x_1) \land  \ldots \land  R_{m_1}(x_1) \land  R_{m_1 + 1}(x_2) \land  \ldots \land  R_{m_1 + m_2}(x_2) \land  \ldots \land  R_{\sum_{i=1}^{m-1} m_i}(x_m) \land  \ldots \land  R_{\sum_{i=1}^m m_i}(x_m)\]
		and let~$K$ be the number of atoms in~$q$, that is, $K \defeq \sum_{i=1}^m m_i$.
		Let~$\dom$ be the uniform domain of the nulls occurring in~$D$ and~$d$ its size.
		For~$\mathbf{s} \subseteq [K]$, we write~$C_\mathbf{s}$ the set of constants that occur in
		each of the relations~$D(R_i)$ for~$i\in \mathbf{s}$ but in none of the others, and write~$c_\mathbf{s}$ the size of that set. We call such a set a
		\emph{block} of constants.
		Similarly for the nulls, we write~$N_\mathbf{s}$ the set of nulls that occur in each of the relations~$D(R_i)$ for~$i\in \mathbf{s}$ but in none of the others (and we call this a block of nulls), and~$n_\mathbf{s}$ for its size. We can assume wlog that:
		\begin{description}
			\item[(a)] For every~$1 \leq i \leq m$, there is no constant that occurs in every~$D(R)$ for~$R$ a relation name in~$C_i(x_i)$.
				Indeed otherwise any valuation would satisfy~$C_i(x_i)$, thus~$N_{[m]}(D)$ would simply be~$0$.
			\item[(b)] Every constant~$c$ appearing in~$D$ is in~$\dom$. Indeed otherwise, with the last item, this constant would have no chance to
				be part of a match, so we could simply remove it (i.e., remove all tuples of the form~$R(c)$ from~$D$).
		\end{description}
		For a subset~$A \subseteq \dom$, let us write~$A^\complement \defeq \dom \setminus A$.
		Finally, for a set~$Z=\{A_1,\ldots,A_l\}$ of subsets of~$\dom$, we denote by~$\I(Z)$ the set
		\[\I(Z) \defeq \{ \bigcap_{i=1}^l B_i \mid (B_1,\ldots,B_l) \in \{A_1,A_1^\complement\} \times \ldots \times \{A_l,A_l^\complement\} \} \]

		We now explain informally how we can compute~$N_{[m]}(D)$.
		Let~$L=\mathbf{s}_1,\ldots,\mathbf{s}_{2^K}$ be an arbitrary linear order of the set of subsets of~$[K]$.
		We will define by induction on~$i\in [2^K]$ an expression computing~$N_{[m]}(D)$, which will be a nested sum of the form

		\begin{equation}
		\label{eq:huge-sum}
			\sum_{\text{something}_{\mathbf{s}_1}} f_{\mathbf{s}_1} \times \bigg( \sum_{\text{something}_{\mathbf{s}_2}} f_{\mathbf{s}_2} \times 
			\big( \ldots (\sum_{\text{something}_{\mathbf{s}_{2^K}}} f_{\mathbf{s}_{2^K}}) \ldots \big) \bigg)
		\end{equation}

		where each~$\text{something}_{\mathbf{s}_i}$ sums over the possible images~$A_{\mathbf{s}_i}$ of the nulls in~$N_{\mathbf{s}_i}$ by a valuation, and~$f_{\mathbf{s}_i}$ 
		will simply be~$\surj_{n_{\mathbf{s}_i} \to a_{\mathbf{s}_i}}$, where~$a_{\mathbf{s}_i} \defeq |A_{\mathbf{s}_i}|$, i.e., the number of valuations~$\nu$
		of~$N_{\mathbf{s}_i}$ with image exactly~$A_{\mathbf{s}_i}$. But there are two technicalities:
		\begin{itemize}
			\item First, we need to ensure that each basic singleton query~$C_i(x_i)$ of~$q$ will not be satisfied. In order to 
				do that, $\text{something}_{\mathbf{s}_i}$ will actually sum over all
				the possible partitions~$(B_{\mathbf{s_i}}^1,\ldots,B_{\mathbf{s_i}}^{|\I(Z_{i-1}|)})$ of~$A_{\mathbf{s}_i}$,
				where each of the~$B_{\mathbf{s}_i}^j$ 
				is included in one of the sets in~$\I(Z_{i-1})$, where~$Z_{i-1}$ contains all the blocs of constants and all the other~$B_{\mathbf{s_j}}^r$ for~$j < i$.
				We iteratively build that sum from the outside to the inside, starting with~$Z_0 \defeq \{\dom\} \cup \{C_{\mathbf{s}} \mid \mathbf{s} \subseteq [K]\}$.
				This will allow us to avoid summing over the~$B_{\mathbf{s_i}}^j$ that would render a basic singleton query true.
			\item Second, as is, such a sum is obviously not going to be computable in PTIME, as we are summing over subsets of~$\dom$.
				To fix this, observe that the value of the subsum for~$\mathbf{s}_i$ actually only depends on the \emph{sizes} of the sets in~$Z_{i-1}$.
				Hence, iterating from the outside to the inside, whenever~$\text{something}_{\mathbf{s}_i}$ contains a
				sum of the form, say, $B_{\mathbf{s_i}}^k \subseteq B_{\mathbf{s_j}}^{k'}$ for~$j<i$,
				we can replace this with a sum over~$0 \leq b_{\mathbf{s_i}}^k \leq b_{\mathbf{s_j}}^{k'}$, and add to~$f_{\mathbf{s}_i}$ a factor
				of~$\binom{b_{\mathbf{s_j}}^{k'}}{b_{\mathbf{s_i}}^k}$.
				Now, because of how~$Z_0$ is defined, and because of how~$\I$ works, all the initial numbers in the first sum are either 
				$|\dom \setminus \bigcup_{i=1}^K C_{\{i\}}|$ or	one of the numbers~$c_{\mathbf{s}}$
				for~$\mathbf{s} \subseteq [K]$. These can all be computed in polynomial time.
		\end{itemize}

		The resulting expression then indeed evaluates to~$N_{[m]}(D)$, and is in a form
		that allows us to directly compute it in polynomial time (but non-elementary in the query).
		This concludes the proof of Proposition~\ref{prp:compute-NSD}.
	\end{proof}

\end{toappendix}

	We conclude this section by turning our attention to the 
	case of Codd tables.
	Notice that none of the results proved so far 
	provides a hard pattern in this case.
	We identify in the following proposition a simple query for which the problem is intractable.

	\begin{toappendix}
		\subsection{Proof of Proposition~\ref{prp:RxSxyTy-hard-codd}}
	\end{toappendix}

	\begin{propositionrep}
	\label{prp:RxSxyTy-hard-codd}
		$\cucountvals(R(x) \land S(x,y) \land T(y))$ is \shp-hard.
	\end{propositionrep}
	\begin{proof}
		We reduce from the problem of counting the number of independent sets of a bipartite graph, written~\#BIS, which is \shp-hard~\cite{provan1983complexity}.
		Let~$G = (X \sqcup Y, E)$ be a bipartite graph.
		Without loss of generality,
		we can assume that~$|X|=|Y|=n$; indeed, if~$|X| < |Y|$ then we could simply
		add~$|Y|-|X|$ isolated nodes to complete the graph, which simply multiplies
		the number of independent sets
		by~$2^{|Y|-|X|}$.
		Also, observe that counting the number of independent sets of~$G$ is the same as counting the number of
		pairs~$(S_1,S_2)$ with~$S_1 \subseteq X, S_2 \subseteq Y$, such
		that~$(S_1 \times S_2) \cap E = \emptyset$. We will call such a pair an \emph{independent pair}.
		For~$0 \leq i,j \leq n$, let~$Z_{i,j}$ be the number of independent pairs~$(S_1,S_2)$ such
		that~$|S_1| = i$ and~$|S_2| = j$.
		It is clear that ($\star$) the number of independent sets of~$G$ is then~$\#\mathrm{BIS}(G) = \sum_{0 \leq i,j \leq n} Z_{i,j}$.
		The idea of the reduction is to construct in polynomial time $(n+1)^2$ incomplete databases~$D_{a,b}$
		for~$0 \leq a,b \leq n$ such that, letting~$C_{a,b}$ be the number of valuations~$\nu$
		of~$D_{a,b}$ with~$\nu(D_{a,b}) \not \models R(x)\land  S(x,y)\land  T(y)$,
		the values of the variables~$Z_{i,j}$ and~$C_{i,j}$ form a linear system of
		equations~$\mA \mZ = \mC$, with~$\mA$ an invertible matrix.
		This will allow us, using $(n+1)^2$ calls to an
		oracle for $\cucountvals(R(x)\land  S(x,y)\land  T(y))$, to recover the~$Z_{i,j}$ values,
		and then to compute~$\#\mathrm{BIS}(G)$ using ($\star$).
		We now explain how we construct~$D_{a,b}$ from~$G$
		for~$0 \leq a,b \leq n$, and define~$\mA$.
		First, we fix an arbitrary linear order~$x_1,\ldots,x_n$ of~$X$, and similarly
		$y_1,\ldots,y_n$ for~$Y$.
		The database~$D_{a,b}$ has constants~$a_i$ for~$1 \leq i \leq n$, and has a
		fact~$S(a_i,a_j)$ whenever~$(x_i,y_j) \in E$. It has nulls~$\bot_1,\ldots,\bot_a$ and
		facts~$R(\bot_i)$ for~$1 \leq i \leq a$ (if~$a=0$ there are no such nulls and facts),
		and nulls~$\bot_1',\ldots,\bot'_b$ and
		facts~$T(\bot'_i)$ for~$1 \leq i \leq b$; in particular, this is a Codd table.
		The uniform domain of the nulls is~$\{a_i \mid 1 \leq i \leq n\}$.
		Given a valuation~$\nu$ of~$D_{a,b}$, let $P(\nu)$ be the pair of subsets of~$V$ defined by
			\[P(\nu) \defeq (\{x_i \mid \exists 1 \leq k \leq a \text{ s.t. } \nu(\bot_k)=a_i\}, \\
			\{y_i \mid \exists 1 \leq k \leq b \text{ s.t. } \nu(\bot'_k)=a_i \})\]
		One can then easily check that the following two claims hold:
		\begin{itemize}
			\item For every valuation~$\nu$ of~$D_{a,b}$, we have that~$\nu(D_{a,b}) \not\models R(x)\land  S(x,y)\land  T(y)$ iff~$P(\nu)$ is an independent
				pair of~$G$;\footnote{This observation, and in fact the idea of reducing from \#BIS, is due to Antoine Amarilli.}
			\item For every independent pair~$(S_1,S_2)$ of~$G$, there are
				exactly $\surj_{a \to |S_1|} \times \surj_{b \to |S_2|}$ valuations~$\nu$ such that~$P(\nu) = (S_1,S_2)$.
		\end{itemize}

		But then, we have $C_{a,b} = \sum_{0 \leq i,j \leq n} (\surj_{a \to i} \times \surj_{b \to j}) Z_{i,j}$.
		In other words, we have the linear system of equations~$\mA \mZ = \mC$, where~$\mA$ is
		the~$(n+1)^2 \times (n+1)^2$ matrix
			defined by $\mA_{(a,b),(i,j)} \defeq \surj_{a \to i} \times \surj_{b \to j}$.
			This matrix is the Kronecker product~$\mA' \otimes \mA' $ of the~$(n+1) \times (n+1)$
			matrix with entries~$\mA'_{a,i} \defeq \surj_{a \to i}$.
			Since~$\mA'$ is a triangular matrix with non-zero coefficients on the diagonal, it is invertible, hence so is~$\mA$, which concludes the proof.  
\end{proof}
In Proposition~\ref{prp:RxSxyTy-RxySxy-hard}, we proved that $\ucountvals(R(x) \land S(x,y) \land T(y))$ is \shp-hard in the general case where na\"ive tables are allowed. Hence, Proposition~\ref{prp:RxSxyTy-RxySxy-hard} was in fact a consequence of Proposition~\ref{prp:RxSxyTy-hard-codd}, where only Codd tables are allowed. However, we decided to provide a separate proof for Proposition~\ref{prp:RxSxyTy-RxySxy-hard}, because this result includes another intractable case, and both cases in Proposition~\ref{prp:RxSxyTy-RxySxy-hard} can be established via a simple reduction from counting independent sets. By contrast, Proposition~\ref{prp:RxSxyTy-hard-codd} requires of a more complicated proof (we reduce from \sIS\ on bipartite graphs and use a Turing reduction with~$(\nicefrac{n}{2}+1)^2$ calls to the oracle, where~$n$ is the number of nodes of the input graph, to form a system of linear equations which we then invert to recover the number of independent sets).

  Up to this point, we have not been able to prove that when an \sjfbcq~$q$ does not contain~$R(x) \land S(x,y) \land T(y)$ as a pattern, it holds that~$\cucountvals(q)$ is in \fp.
  Thus, the possibility of having a dichotomy in this case is left as a problem for future research. 
  Nevertheless, we can still observe that restricting to Codd tables simplifies the problem of counting valuations in the non-uniform setting. Indeed, considering the query~$R(x,x)$, counting valuations is \shp-hard for naïve tables, while it is in FP for Codd tables by Theorem~\ref{thm:countvals-sjfcqs-codd}.

\section{Dichotomies for counting completions}
\label{sec:countcompls-sjfcqs}

In this section, we study the complexity of the problems of counting completions
satisfying an~\sjfbcq~$q$, 
in the four cases that can be obtained by considering naïve or Codd tables and non-uniform or uniform domains.
We will again use the notion of pattern as introduced in Definition~\ref{def:pattern}. Our first step is to show that Lemma~\ref{lem:pattern-parsimonious}, which we used in the last section for the problems or counting valuations, extends to the problems of counting completions. 
\begin{toappendix}
\subsection{Proof of Lemma~\ref{lem:pattern-parsimonious-compls}}
\end{toappendix}

\begin{lemmarep}
\label{lem:pattern-parsimonious-compls}
	Let~$q,q'$ be \sjfbcqs\ such that~$q'$ is a pattern of~$q$.
	Then we have that~$\countcompls(q') \pr \countcompls(q)$ and $\ucountcompls(q') \pr \ucountcompls(q)$. Moreover, the same results hold if we restrict to the case of Codd tables.
\end{lemmarep}
\begin{proof}
	The reduction is exactly the same as the one of Lemma~\ref{lem:pattern-parsimonious}.
		To show that this reduction works properly for counting completions, it is enough to observe
		that for every valuations~$\nu_1,\nu_2$ of~$D'$, we have that~$\nu_1(D')=\nu_2(D')$
		iff~$\nu_1(D) = \nu_2(D)$.
\end{proof}

We will then follow the same general strategy as in the last section, i.e., prove hardness for some simple patterns and combine these with Lemma~\ref{lem:pattern-parsimonious-compls} and tractability proofs to obtain dichotomies. 
Our findings are summarized in the last two columns of Table \ref{tab:dichos-count-intro} in the introduction. 
We start in Section~\ref{subsec:countcompls-non-uniform} with the non-uniform cases and continue in Section~\ref{subsec:countcompls-uniform} with the uniform cases. Again, we explicitly state when 
	a \shp-hardness result holds even in the restricted setting in which there is a fixed 
	domain over which nulls are interpreted.

\subsection{The complexity on the non-uniform case}
\label{subsec:countcompls-non-uniform}
	Here, we study the complexity of the problems $\countcompls(q)$ and $\ccountcompls(q)$,
	providing dichotomy results in both cases.
	In fact, it turns out that these problems are~\shp-hard for all \sjfbcqs.
	 To prove this, it is enough to show that the problem~$\ccountcompls(R(x))$ is hard, 
	 that is, even counting the completions of a single unary table is \shp-hard in the non-uniform setting.

\begin{proposition}
\label{prp:countcompls}
	$\ccountcompls(R(x))$ is \shp-hard.
\end{proposition}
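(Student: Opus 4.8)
The plan is to reduce from the problem $\sIS$ of counting independent sets of a graph, which is $\#P$-complete~\cite{provan1983complexity}; equivalently, via the complementation bijection $S \mapsto V \setminus S$ on vertex subsets, this amounts to counting vertex covers. The first observation is that for $q = \exists x\, R(x)$, every non-empty completion satisfies $q$, and every completion of a non-empty table is non-empty. Hence $\ccountcompls(R(x))(D)$ simply counts the number of \emph{distinct} completions of $D$. Since $R$ is unary and $D$ is a Codd table, a completion is nothing but the image $\mathrm{Im}(\nu) \defeq \{\nu(\bot) \mid \bot \text{ a null of } D\} \subseteq \consts$, so I need to count the number of distinct images of the valuations of $D$.

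The key structural fact I would isolate is a clean description of which sets $S \subseteq \consts$ arise as an image. Since $\mathrm{Im}(\nu) \subseteq S$ forces $\nu(\bot) \in S$ for every null, a necessary condition is (i) $\dom(\bot) \cap S \neq \emptyset$ for every null $\bot$, i.e. each null can be placed inside $S$; given (i), the set $S$ is realizable exactly when (ii) one can moreover choose the values $\nu(\bot)$ so that every element of $S$ is hit (a covering condition).

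Given a graph $G=(V,E)$ with $E \neq \emptyset$, I would build a non-uniform Codd table $D$ over the unary relation $R$ whose constants are the vertices $V$. For every edge $e=\{u,v\}\in E$ I add a null $\bot_e$ with $\dom(\bot_e) \defeq \{u,v\}$, and for padding I add $|V|$ further nulls $\gamma_1,\dots,\gamma_{|V|}$, each with domain $\dom(\gamma_i) \defeq V$; the table consists of the facts $R(\bot_e)$ and $R(\gamma_i)$, and is clearly built in polynomial time. Condition (i) restricted to the edge-nulls says exactly that $S$ meets every edge, i.e. that $S$ is a vertex cover of $G$, while for the padding nulls it reduces to $S \neq \emptyset$. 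Condition (ii) then becomes automatic: given any non-empty vertex cover $S$, assign each $\bot_e$ to an endpoint of $e$ lying in $S$ (one exists as $S$ is a vertex cover), match $|S|$ of the $\gamma_i$ bijectively onto $S$, and send the remaining $\gamma_i$ to an arbitrary element of $S$; the resulting image is exactly $S$. Consequently the distinct completions of $D$ are precisely the vertex covers of $G$, and since $E\neq\emptyset$ each is non-empty, so $\ccountcompls(R(x))(D) = \sVC(G) = \sIS(G)$, yielding a parsimonious reduction from $\sIS$ and hence $\#P$-hardness.

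The step I expect to require the most care is the realizability direction (ii): that every vertex cover is the \emph{exact} image of some valuation. A priori a large vertex cover need not be coverable by the edge-nulls alone, since covering it amounts to finding a matching saturating $S$ in the bipartite graph between $S$ and the incident edge-nulls (a Hall-type condition that can fail, e.g. for $S = V$). This is precisely why the $|V|$ full-domain padding nulls are introduced: they guarantee coverability of every non-empty set while, crucially, not constraining which vertex covers can appear, as their only effect on condition (i) is the harmless requirement $S \neq \emptyset$. Checking these two points—padding ensures realizability of all vertex covers, and the edge-nulls forbid exactly the non-covers—is the technical heart of the argument.
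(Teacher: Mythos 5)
Your proof is correct and takes essentially the same approach as the paper: both reduce from counting vertex covers (equivalently, independent sets, via complementation) using a unary Codd table with an edge null $\bot_e$ of domain $\{u,v\}$ for each edge, so that the distinct completions are exactly the realizable vertex covers. The only difference is the gadget ensuring every cover is \emph{exactly} realizable---you use $|V|$ padding nulls with full domain $V$ (and must set aside the case $E = \emptyset$), whereas the paper uses one null $\bot_u$ with $\dom(\bot_u)=\{u,a\}$ per vertex together with the ground fact $R(a)$ for a fresh constant $a$, which makes the reduction parsimonious on all inputs.
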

\begin{proof}
We provide a polynomial-time parsimonious reduction from 
	the problem of counting the vertex
	covers of a graph,
	which we denote by \sVC.
	Let~$G=(V,E)$
	be a graph. We construct a Codd table~$D$ using a single
	unary relation~$R$ such that the number of completions of~$D$ equals
	the number of vertex covers of~$G$.  For every edge~$e=\{u,v\}$ of~$G$,
	we have one null~$\bot_e$ with $\dom(\bot_e) = \{u,v\}$ and the
	fact~$R(\bot_e)$.  Let~$a$ be a fresh constant.  For every
	node~$u\in V$ we have a null~$\bot_u$ with~$\dom(\bot_u) = \{u,a\}$ and
	the fact~$R(\bot_u)$.  Last, we add the fact~$R(a)$.  We now show
	that the number of completions of~$D$ equals the number of vertex
	covers of~$G$. 
	
	Let~$\VC(G)$ be the set of vertex covers of~$G$.  For a
	valuation~$\nu$ of~$D$, define the set~$S_\nu := \{u \in V \mid
	R(u) \in D\}$.  Since the fact~$R(a)$ is in every completion
	of~$D$, it is clear that the number of completions of~$D$ is equal
	to~$|\{S_\nu \mid \nu \text{ is a valuation of } D\}|$.  We claim
	that~$\VC(G)=\{S_\nu \mid \nu \text{ is a valuation of } D\}$, which
	shows that the reduction works. ($\subseteq$) Let~$C \in \VC(G)$, and let us show that
			there exists a valuation~$\nu$ of~$D$ such that~$S_\nu
			= C$.  For a null of the form~$\bot_e$
			with~$e=\{u,v\}\in E$, assuming wlog that~$u \in C$, we
			define~$\nu(\bot_e)$ to be~$u$.  For a null of the
		form~$\bot_u$ with~$u\in V$, we define~$\nu(\bot_u)$ to be~$u$
	if~$u\in C$ and~$a$ otherwise.  It is then clear that~$S_\nu =
C$. ($\supseteq$) Let~$\nu$ be a valuation of~$D$, and let us show
	that~$S_\nu$ is a vertex cover.  Assume by contradiction that there is
		an edge~$e=\{u,v\}$ such that~$e \cap S_\nu = \emptyset$.  By
		definition of~$D$, we must have~$\nu(\bot_e) \in \{u,v\}$, so
		that one of~$u$ or~$v$ must be in~$S_\nu$, hence a
contradiction. Therefore, we conclude that $\sVC \pr \ccountcompls(R(x))$.
\end{proof}

Recall from Section~\ref{sec:preliminaries} that, to avoid trivialities, we assume all \sjfbcqs\ to contain at least one atom and that the arity of every atom is~$\geq 1$ (that is, all atoms have at least one variable).
Using Lemma~\ref{lem:pattern-parsimonious}, this allows us to obtain the following dichotomy result.

\begin{theorem}[Dichotomy]
\label{thm:countcompls-sjfcq-hard} 	
For every \sjfbcq~$q$, it holds that $\countcompls(q)$ and $\ccountcompls(q)$ are \shp-hard. 	
\end{theorem}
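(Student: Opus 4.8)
The plan is to reduce the general statement to the single hard instance already established, namely Proposition~\ref{prp:countcompls}, which says that $\ccountcompls(R(x))$ is $\shp$-hard. The key observation is that $R(x)$ — a single unary atom over a fresh relation symbol — is a pattern of \emph{every} \sjfbcq~$q$. Indeed, since we assume every \sjfbcq\ has at least one atom and every atom has at least one variable, I can take any atom $R_i(\bar x_i)$ of $q$, delete all other atoms, delete all but one occurrence of variables until a single variable remains in that atom, and rename the relation symbol to a fresh $R$. By Definition~\ref{def:pattern} these are exactly the allowed operations, so $R(x)$ is a pattern of $q$.

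With this in hand the proof is short. First I would invoke Lemma~\ref{lem:pattern-parsimonious-compls}, which states that whenever $q'$ is a pattern of $q$ we have $\countcompls(q') \pr \countcompls(q)$ and likewise $\ccountcompls(q') \pr \ccountcompls(q)$, even in the Codd-table setting. Applying this with $q' = R(x)$ gives $\ccountcompls(R(x)) \pr \ccountcompls(q)$. Since $\ccountcompls(R(x))$ is $\shp$-hard by Proposition~\ref{prp:countcompls}, and parsimonious reductions compose with the Turing reductions used to define $\shp$-hardness, it follows that $\ccountcompls(q)$ is $\shp$-hard for every \sjfbcq~$q$.

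Finally I would transfer the hardness from the Codd setting to the naïve setting for $\countcompls(q)$. This is immediate because every Codd table \emph{is} a naïve table (the restriction that each null occurs at most once is a special case), so any input to $\ccountcompls(q)$ is a valid input to $\countcompls(q)$ producing the same count; hence $\ccountcompls(q) \pr \countcompls(q)$ trivially, and the $\shp$-hardness of the former yields that of the latter. This establishes both claims of the theorem.

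There is essentially no serious obstacle here, since all the real work was done in proving Proposition~\ref{prp:countcompls} (the reduction from counting vertex covers $\sVC$) and in setting up the pattern machinery of Lemma~\ref{lem:pattern-parsimonious-compls}. The only point deserving a moment of care is checking that the assumption ``every atom has at least one variable of arity $\geq 1$'' is genuinely needed and genuinely available: without it one could not guarantee that $R(x)$ is obtainable as a pattern, which is precisely why the paper flags this standing assumption just before stating the theorem. So the main step to get right is simply the verification that $R(x)$ is a pattern of an arbitrary \sjfbcq, after which the result falls out by combining the two cited results.
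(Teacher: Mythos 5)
Your proposal is correct and follows essentially the same route as the paper: it combines Proposition~\ref{prp:countcompls} (hardness of $\ccountcompls(R(x))$) with the observation that $R(x)$ is a pattern of every \sjfbcq\ (using the standing assumption of at least one atom of arity $\geq 1$) and the pattern reduction of Lemma~\ref{lem:pattern-parsimonious-compls}, which preserves Codd tables. Your explicit lifting from the Codd to the naïve setting via the trivial inclusion reduction is a harmless reordering of the same argument, and you even cite the correct completions version of the pattern lemma, where the paper's text nominally points to the valuations version.
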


Notice here that we do not claim membership in \shp; in fact, we will come back to this issue in Section~\ref{sec:misc} to show that this is unlikely to be true for naïve tables.
However, we can still show that 
membership in \shp\ holds for Codd tables.
We then obtain:

	\begin{toappendix}
		\subsection{Proof of Proposition~\ref{thm:countcompls-sjfcqs-complete-codd}}
		\label{subsec:countcompls-sjfcqs-complete-codd}
	\end{toappendix}

\begin{theoremrep}[Dichotomy]
	\label{thm:countcompls-sjfcqs-complete-codd}
For every \sjfbcq~$q$, the problem $\ccountcompls(q)$ is \shp-complete.
\end{theoremrep} 
\begin{proof}
	Hardness is from Theorem~\ref{thm:countcompls-sjfcq-hard}. To show membership in \shp\, we will actually prove a more general result, which we prove in
	Appendix~\ref{apx:countcompls-codd-sharp-p}. There, we show that for every Boolean query~$q$ such that~$q$ has model checking in \ptime\, the problem~$\ccountcompls(q)$ is in \shp. This in particular applies to all \sjfbcqs.
\end{proof}

\begin{toappendix}
	\subsection{Proof of membership in \shp\ of~$\ccountcompls(q)$}
	\label{apx:countcompls-codd-sharp-p}
	We recall that the \emph{model checking} of a Boolean query~$q$ is the problem of deciding, given a (complete) database~$D$, whether $D \models q$.
	In this section, we call a fact that contains only constants a \emph{ground fact}.
	Our goal here is to prove the following.
\begin{proposition}
	\label{prp:countcompls-codd-sharp-p}
	If a Boolean query~$q$ has model checking is in \ptime, then we have that $\ccountcompls(q)$ is in~\shp.
\end{proposition}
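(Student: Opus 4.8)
The plan is to exhibit a nondeterministic polynomial-time Turing machine $M$ whose number of accepting runs on an input Codd table $D$ equals the number of completions of $D$ satisfying $q$. The obvious idea of guessing a valuation $\nu$ and testing $\nu(D)\models q$ fails, since distinct valuations can yield the same completion and we would thereby count \emph{valuations} rather than completions. Instead, $M$ will guess the completion $D'$ itself, in a canonical form so that each candidate complete database is produced on exactly one computation branch, and then (a) verify that $D'$ is genuinely a completion of $D$, i.e.\ that $D'=\nu(D)$ for some valuation $\nu$, and (b) verify that $D'\models q$ using the assumed \ptime\ model-checking procedure. Accepting exactly when both (a) and (b) hold, the accepting runs are in bijection with the completions of $D$ satisfying $q$.

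For the guessing phase, I would first delimit the polynomial-size pool $P$ of ground facts that can possibly occur in a completion: these use only relation symbols of $\sig(q)$ (of fixed arity, as $q$ is fixed) and constants drawn from those appearing in $D$ together with those appearing in the domains $\dom(\bot)$. The machine then guesses a subset $D'\subseteq P$ in a fixed order (reading the facts of $P$ lexicographically and guessing each one in or out), which guarantees exactly one branch per candidate set. Both computing $P$ and the guessing take polynomial time.

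The heart of the argument, and the step I expect to be the main obstacle, is checking condition (a) in polynomial time; this is precisely where Codd-ness is used. Since $D$ is a Codd table, each null occurs in exactly one fact, so the value a valuation assigns to a fact $f_i$ is chosen independently of the other facts. Writing $\mathrm{poss}(f_i)$ for the set of ground facts obtainable from $f_i$ by replacing each of its nulls with a value of its domain (note that testing $g\in\mathrm{poss}(f_i)$ is easy: match the constant positions and check each null position against its domain), one sees that $D'=\nu(D)$ for some valuation $\nu$ iff there is an assignment $f_i\mapsto g_i\in\mathrm{poss}(f_i)\cap D'$ whose image is exactly $D'$. I would then prove that such a surjective assignment exists iff (i) every fact $f_i$ of $D$ satisfies $\mathrm{poss}(f_i)\cap D'\neq\emptyset$, and (ii) the bipartite graph linking the facts of $D$ to the elements of $D'$, with $f_i$ adjacent to $g$ iff $g\in\mathrm{poss}(f_i)$, admits a matching saturating $D'$. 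The forward direction is immediate (pick one preimage per element of $D'$ to obtain the matching), and conversely a saturating matching already covers all of $D'$, after which the remaining facts can be sent to any allowed target by (i). Both (i) and the existence of a saturating matching are decidable in polynomial time (the latter by standard bipartite matching), so (a) is polynomial-time checkable.

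Assembling the pieces, $M$ runs in polynomial time on each branch, produces every candidate $D'$ on a unique branch, and accepts exactly on the completions of $D$ satisfying $q$; hence $\ccountcompls(q)$ is in \shp. Finally, since \sjfbcqs\ (indeed all first-order queries) have \ptime\ model checking, the proposition applies to them, yielding membership of $\ccountcompls(q)$ in \shp\ for every such $q$.
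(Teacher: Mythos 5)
Your proof is correct and takes essentially the same approach as the paper's: guess a candidate set of ground facts on a unique branch, verify it satisfies $q$ with the assumed \ptime\ model-checking procedure, and---the key step, isolated in the paper as a lemma---decide in polynomial time whether the candidate is a completion of the Codd table via bipartite matching between the facts of $D$ and the guessed ground facts. The paper phrases the matching criterion as ``the maximum-cardinality matching has size $|S|$'' rather than your ``there is a matching saturating $D'$,'' but these are the same condition, and both proofs use Codd-ness in the same way (per-fact assignments are independent) and handle unmatched facts via the same non-emptiness condition.
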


	To show this, we first prove that we can check in polynomial time if a given set of ground facts is a possible completion of an incomplete database:
	\begin{lemma}
		\label{lem:matchings}
		Given as input an incomplete Codd table~$D$ and a set~$S$ of ground facts, we can decide in polynomial time whether there exists a valuation~$\nu$ of~$D$ such that~$\nu(D) = S$.
	\end{lemma}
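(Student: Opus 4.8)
The plan is to reduce the question to a bipartite matching problem, crucially exploiting the fact that $D$ is a Codd table. Write $D = (T,\dom)$ and split $T$ into its \emph{ground facts} $g_1,\ldots,g_\ell$ (those containing only constants) and its \emph{null facts} $f_1,\ldots,f_k$ (those containing at least one null). Since $D$ is a Codd table, every null occurs exactly once in $T$; in particular the nulls appearing in distinct null facts are disjoint, and no null is repeated within a single fact. The key consequence is that the values a valuation $\nu$ assigns to the nulls of a given $f_j$ can be chosen \emph{independently} of all other facts, subject only to the domain constraints. Hence, for a fixed ground fact $s$, deciding whether some assignment of the nulls of $f_j$ turns $f_j$ into $s$ is immediate: it suffices that $f_j$ and $s$ use the same relation symbol and that, position by position, either $f_j$ carries a constant equal to that of $s$, or $f_j$ carries a null $\bot$ with the value of $s$ at that position lying in $\dom(\bot)$. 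I will say that $f_j$ \emph{can realize} $s$ in that case, and this is clearly testable in polynomial time.

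First I would perform two easy sanity checks. Since each ground fact of $T$ is mapped to itself by every valuation, we must have $\{g_1,\ldots,g_\ell\} \subseteq S$; if this fails, the answer is ``no''. Next, since $\nu(f_j) \in \nu(D) = S$ for every valuation, each null fact $f_j$ must be able to realize at least one fact of $S$; if some $f_j$ can realize no element of $S$, the answer is again ``no''.

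The remaining content is a coverage condition. Let $R \defeq S \setminus \{g_1,\ldots,g_\ell\}$ be the facts of $S$ that are not already ground facts of $T$; these are exactly the facts of $S$ that must be produced by some null fact. I claim that a valuation $\nu$ with $\nu(D) = S$ exists if and only if the two sanity checks above pass and, in the bipartite graph $H$ whose left vertices are the null facts, whose right vertices are the elements of $R$, and whose edges connect $f_j$ to $s$ exactly when $f_j$ can realize $s$, there is a matching saturating $R$. For the forward direction, from a valuation $\nu$ with $\nu(D) = S$ one reads off, for each $s \in R$, a null fact $f_j$ with $\nu(f_j) = s$; because $\nu$ assigns a single image to each $f_j$, distinct facts of $R$ receive distinct witnesses, giving the desired matching. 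For the converse, a matching saturating $R$ fixes the images of the matched null facts so as to cover all of $R$, and the second sanity check lets me send every remaining null fact to some element of $S$ it can realize; by the independence granted by the Codd property, these per-fact choices are jointly realizable by a single valuation, whose completion is then exactly $\{g_1,\ldots,g_\ell\} \cup R = S$.

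Finally, both the construction of $H$ and the test for a matching saturating $R$ run in polynomial time (via standard augmenting-path / bipartite maximum-matching algorithms), so the whole procedure is polynomial, establishing the lemma. I expect the main subtlety to be the equivalence of the previous paragraph, and in particular making explicit that the Codd restriction is precisely what decouples the per-fact null assignments: without it, shared nulls would couple the choices across facts and the clean matching formulation would break down.
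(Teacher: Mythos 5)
Your proof is correct and follows essentially the same route as the paper's: both reduce the question to a polynomial-time bipartite maximum-matching test between facts of $D$ and facts of $S$ under the same ``can realize'' relation, after the same two feasibility checks, and both use the Codd property to decouple the per-fact null assignments in the converse direction. The only (cosmetic) difference is that you remove the ground facts of $D$ and the corresponding facts of $S$ before building the graph and ask for a matching saturating the remainder, whereas the paper keeps all facts on both sides and tests whether the maximum matching has size $|S|$; these conditions are equivalent.
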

	\begin{proof}
		For every fact~$f$ of~$D$, let us denote by~$P(f)$ the
		set of ground facts that can be obtained from~$f$ via a valuation ($P(f)$ can
		be~$\{f\}$ if~$f$ is already a ground fact).
		The first step is to check that for every fact~$f$ of~$D$, it holds
		that ($\star$)~$P(f) \cap S \neq \emptyset$. If this is not the case, then we know for sure that for every
		valuation~$\nu$ of~$D$ we will have~$\nu(D) \not\subseteq S$, so that we can safely reject.
		Next, we build the bipartite graph~$G_{D,S}$ defined as follows:
		the nodes in the left partition of~$G_{D,S}$ are the facts of~$D$, the nodes in the right partition
		are the facts in~$S$, and we connect a fact~$f$ of~$D$ with all the ground facts in the right
		partition that are in~$S \cap P(f)$.
		It is clear that we can construct~$G_{D,S}$ in polynomial time.
		We then compute in polynomial time the size~$m$ of a maximum-cardinality matching of~$G_{D,S}$,
		for instance using~\cite{edmonds1965paths}. It is clear that we have~$m \leq |S|$.
		At this stage, we claim that there exists a valuation~$\nu$ of~$D$ such that~$\nu(D) = S$
		if and only if~$m= |S|$.
		We prove this by analysing the two possible cases:
		\begin{itemize}
			\item If~$m < |S|$, then let us show that there is no such valuation. Indeed,
				assume by way of contradiction that such a valuation~$\nu$ exists.
				Let~$B$ be a subset of~$D$ of minimal size such that~$\nu(B) = S$.
				It is clear that such a subset exists, and moreover that its size is
				exactly~$|S|$. But then, consider the set~$M$ of edges of~$G_{D,S}$
				defined by~$M \defeq \{(f,\nu(f)) \mid f \in B\}$.
				Then~$M$ is a matching of~$G_{D,S}$ of size~$|S| > m$,
				contradicting the fact that~$m$ is the size of a maximum-cardinality matching.
			\item If~$m = |S|$, let us show that such a valuation exists.
				Let~$M$ be a matching of~$G_{D,S}$
				of size~$|S|$.
				By the pigeonhole principle, it is clear that every node corresponding to a ground fact~$f \in S$ is incident to (exactly)
				one edge of~$M$; let us denote that edge by~$e_f$.
				Moreover, since~$M$ is a matching, the mapping that associates to a ground fact~$f \in S$ the
				fact~$f'_f$ at the other end of~$e_f$ is injective.
				Hence, we can define~$\nu(\bot)$ of every null~$\bot$ occurring in such a fact $f'_f \in D$
				to be the unique constant such that~$\nu(f'_f) = f$ holds, and
				for every other fact~$f'$ in~$D$ not incident to an edge in~$M$, we chose a value for its nulls
				so that~$\nu(f') \in S$,
				which we can do thanks to~($\star$).
				It is then clear that we have~$\nu(D) = S$.
		\end{itemize}
		But then, we can simply accept if~$m = |S|$ and reject otherwise.
	\end{proof}

	We can now prove Proposition~\ref{prp:countcompls-codd-sharp-p}:

	\begin{proof}[Proof of Proposition~\ref{prp:countcompls-codd-sharp-p}.]
	We define a non-deterministic turing machine~$M_q$ such that, on input incomplete Codd table~$D$,
	its number of accepting computation paths is exactly the number of completions of~$D$
	that satisfy~$q$. 
	First, compute in polynomial time the set~$A \defeq \bigcup_{f \in D} P(f)$, where~$P(f)$ is defined just as in Lemma~\ref{lem:matchings}.
	Then, the machine~$M_q$ guesses a subset~$S$ of~$A$.
	It then checks in polynomial time if~$S$, when seen as a database, satisfies~$q$, and rejects if it is not the case.
		Then, using Lemma~\ref{lem:matchings}, it checks in polynomial time
		whether there exists a valuation~$\nu$ of~$D$ such that~$\nu(D) = S$, and accepts iff this is the case.
		It is then clear that~$M_q$ satisfies the conditions, which shows that~$\ccountcompls(q)$
		is in~\#P.
	\end{proof}

\end{toappendix}

\subsection{The complexity on the uniform case}
\label{subsec:countcompls-uniform}
We now investigate the complexity of~$\ucountcompls(q)$ and $\cucountcompls(q)$.
Recall that in the non-uniform case, even counting the completions of a single unary table is 
a \shp-hard problem. This no longer holds in the uniform case, as 
we will 
show that $\ucountcompls(q)$ is in \fp\ for every \sjfbcq\ that is defined 
over a schema consisting exclusively of unary 
relation symbols.

Such a positive result, however, cannot be extended much further. 
In fact, we show next that~$R(x,x)$ and~$R(x,y)$ are hard patterns  
(and, thus, 
we also conclude that the problem of 
counting the completions of a single binary table is 
a \shp-hard problem). Moreover, \shp-hardness holds even if restricted to one of the following settings:   
(a) Na\"ive  tables where nulls are interpreted over a fixed domain, and (b) Codd tables. 

\begin{toappendix}
	\subsection{Proof of Proposition~\ref{prp:Rxx-Rxy-hard-compls}, item (b)}
	\label{apx:Rxx-Rxy-hard-compls}
	In this section we prove point (b) of the following claim (we recall that (a) was proved in Section~\ref{subsec:countcompls-uniform}).
\end{toappendix}

	\begin{propositionrep}
	\label{prp:Rxx-Rxy-hard-compls}
	We have that:
	\begin{enumerate}
	\item[(a)] $\ucountcompls(R(x,x))$ and $\ucountcompls(R(x,y))$ are both \shp-hard, even
		when nulls are interpreted over the same fixed domain~$\{0,1\}$.
	\item[(b)] $\cucountcompls(R(x,x))$ and $\cucountcompls(R(x,y))$ are \shp-hard.
		\end{enumerate} 
	\end{propositionrep}
	\begin{inlineproof}
		We only present the proof of (a) here. The proof of (b) requires more work and can be found in Appendix~\ref{apx:Rxx-Rxy-hard-compls}. 
	We reduce from $\sIS$, the problem of counting the number of independent sets of a graph. 
	
	Let~$G=(V,E)$ be a 
		graph. We will construct an incomplete database~$D$ containing
		a single binary predicate~$R$ such that each completion of~$D$
		satisfies~$R(x,x)$ and 
		the number of completions
		of~$D$ is~$2^{|V|} + \#\IS(G)$, thus establishing hardness for
		the two queries.
		For every node~$u \in V$, we have a
		null~$\bot_u$ with~$\dom(\bot_u) = \{0,1\}$. We then construct the na\"ive 
		table~$D$ as follows:
		\begin{itemize}
			\item for every node~$u \in V$ we add to $D$ the 
			fact~$R(u,\bot_u)$; 
			\item then for every edge~$\{u,v\}
			\in E$, we add the facts~$R(\bot_u,\bot_v)$
		and~$R(\bot_v,\bot_u)$ to $D$; and
			\item last, we add the facts~$R(0,0)$, $R(0,1)$, $R(1,0)$,
		and~$R(\bot,\bot)$, where~$\bot$ is a fresh null. 
		\end{itemize}
		It is clear that every completion of~$D$ satisfies~$R(x,x)$.
		
		Let us now count the number of completions of~$D$.  First, we
		observe that, thanks to the facts of the form $R(u,\bot_u)$, for $u \in V$, for every two
		valuations~$\nu,\nu'$ that do not assign the same value to the
		nulls of the form~$\bot_u$, it is the case that $\nu(D) \neq \nu(D')$.  We then partition the
		completions of~$D$ into those that contain the fact~$R(1,1)$,
		and those that do not contain~$R(1,1)$. 
		Because of the 
		facts of the form $R(u,\bot_u)$, for $u \in V$, and thanks to the fact~$R(\bot,\bot)$ which
		becomes~$R(1,1)$ when we assign~$1$ to~$\bot$, there are
		exactly~$2^{|V|}$ completions of~$D$ that contain~$R(1,1)$.  Moreover, it is
		easy to see that there are~$\#\IS(G)$ valuations~$\nu$ of~$D$
		that assign~$0$ to~$\bot$ and that yield a completion not containing~$R(1,1)$.
		Indeed, one can check that a valuation of~$D$ that
		assigns~$0$ to~$\bot$ yields a completion not containing~$R(1,1)$
		if
		and only if the set~$\{u \in V \mid \nu(\bot_u)=1 \}$ is an
		independent set of~$G$. Therefore, we conclude that~$\sIS \tr \ucountcompls(q)$, where~$q$ can be~$R(x,x)$ or~$R(x,y)$.
	\end{inlineproof} 

\begin{toappendix}
	That is, we deal with the problems~$\cucountcompls(R(x,x))$ and~$\cucountcompls(R(x,y))$.
	We will use the problem of counting the number of induced \emph{pseudoforests} of a graph, as defined next.

\begin{definition}
	\label{def:pseudoforest}
	 A graph~$G$ is a \emph{pseudoforest} if every connected component of~$G$ contains at most one cycle.
	Let~$G=(V,E)$ be a graph. For~$S \subseteq E$, let us denote by~$G[S]$ the graph~$(V',S)$, where~$V'$ is the set of nodes of~$G$ that appear in some edge of~$S$. The problem~$\shpf$ is the problem that takes as input a graph~$G=(V,E)$ and outputs the number of edge sets~$S \subseteq E$ such that~$G[S]$ is a pseudoforest.
\end{definition}

	Using techniques from matroid theory, the authors of~\cite{gimenez2006complexity} have shown that~$\shpf$ is \shp-hard on 
	graphs. We explain in Appendix~\ref{apx:shpf-hard-bipartite} how their proof actually shows hardness of this problem for \emph{bipartite graphs}.

To prove that the reduction that we will present is correct, we will also need the following folklore lemma about pseudoforests.
We recall that an \emph{orientation} of an undirected graph~$G=(V,E)$ is a directed graph that can be obtained from~$G$ by orienting every edge of~$G$. Equivalently, one can see such an orientation as a function~$f:E \to V$ that assigns to every edge in~$G$ a node to which it is incident.
We then have: 

	\begin{lemma}
		\label{lem:orientation}
		A graph~$G$ is a pseudoforest if and only if there exists an orientation of~$G$ such that every node has outdegree at most~$1$.
	\end{lemma}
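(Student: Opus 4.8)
The plan is to use the standard edge-counting characterization of pseudoforests and to treat the two implications separately: the easy direction by a handshake-style count, and the other by an explicit orientation. First I would record the elementary fact that a \emph{connected} graph $H=(W,F)$ is a pseudoforest (i.e.\ has at most one cycle) if and only if $|F| \leq |W|$. This holds because a connected graph on $|W|$ nodes always has at least $|W|-1$ edges (it contains a spanning tree) and its number of independent cycles is exactly $|F|-|W|+1$; hence having at most one cycle is equivalent to $|F|-|W|+1 \leq 1$. Since being a pseudoforest is a property of each connected component in isolation, $G=(V,E)$ is a pseudoforest if and only if every connected component $C$ satisfies $|E(C)| \leq |V(C)|$, where $E(C)$ and $V(C)$ denote the edge and node sets of $C$.

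For the direction stating that an orientation of maximum outdegree $1$ forces $G$ to be a pseudoforest, I would use a counting argument. Viewing the orientation as assigning to each edge a unique tail, every edge contributes exactly $1$ to the outdegree of its tail, so $|E| = \sum_{v \in V} \mathrm{outdeg}(v)$. Restricting this identity to a fixed connected component $C$ — and using that every edge incident to a node of $C$ lies entirely inside $C$ — gives $|E(C)| = \sum_{v \in V(C)} \mathrm{outdeg}(v) \leq |V(C)|$, since each outdegree is at most $1$. By the characterization above, every component of $G$ has at most one cycle, so $G$ is a pseudoforest.

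For the converse, I would build the required orientation one component at a time. In a tree component, I would pick an arbitrary root $r$ and orient every edge towards its parent, so that each non-root node is the tail of exactly its single parent edge (outdegree $1$) and the root has outdegree $0$. In a unicyclic component, I would first orient the unique cycle cyclically, giving each cycle node outdegree $1$ along the cycle, and then orient every remaining (tree) edge towards the cycle, so that each off-cycle node is the tail of exactly the edge leading towards the cycle while no cycle node gains an extra outgoing edge. In both cases every node has outdegree at most $1$, and assembling these per-component orientations yields the desired global orientation.

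The construction is routine and presents no serious obstacle; the only point requiring a little care is the combinatorial characterization of pseudoforests by the inequality $|E(C)| \leq |V(C)|$ — in particular the claim that a connected graph with $|F| \leq |W|$ has at most one cycle — which is where the spanning-tree / cyclomatic-number bookkeeping must be stated correctly.
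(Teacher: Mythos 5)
Your proof is correct, but there is nothing in the paper to compare it against: the paper does not prove Lemma~\ref{lem:orientation} at all, it simply declares it folklore and points to the literature (\cite{kowalik2006approximation,fan2015decomposing,grout2019decomposing}). Your argument is the standard self-contained way to establish the fact, and it would make the paper more self-contained on this point. The structure is sound: (i) the reduction to the per-component edge count, namely that a connected graph $H=(W,F)$ has at most one cycle iff $|F|\leq|W|$; (ii) the backward implication via the handshake identity $|E(C)|=\sum_{v\in V(C)}\mathrm{outdeg}(v)\leq|V(C)|$ on each component $C$ (valid because every edge incident to a node of $C$ lies in $C$); (iii) the forward implication by the explicit construction (root each tree component and orient edges toward parents; in a unicyclic component, orient the cycle cyclically and the pendant trees toward the cycle). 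The one step you should spell out fully is the passage from ``cyclomatic number $|F|-|W|+1\leq 1$'' to ``at most one cycle'': this follows because any two distinct cycles are distinct nonzero vectors of the cycle space over $\mathrm{GF}(2)$ and hence linearly independent, so two distinct cycles force $|F|-|W|+1\geq 2$, while conversely cyclomatic number at least $2$ yields two distinct fundamental cycles with respect to a spanning tree. You correctly flag this as the delicate point; with it made explicit, the proof is complete.
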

	\begin{proof}
	Folklore, see, e.g., \cite{kowalik2006approximation,fan2015decomposing,grout2019decomposing}.
	\end{proof}

Using hardness of~$\shpf$ on bipartite graphs, are able show hardness of~$\cucountcompls(R(x,y))$ and~$\cucountcompls(R(x,y))$ for Codd tables, as follows.

	\begin{proof}[Proof of Proposition~\ref{prp:Rxx-Rxy-hard-compls}, item (b)]
	We reduce both problems from $\shpf$ on bipartite graphs. Let~$G=(U \sqcup V,E)$ be a bipartite  graph. We will construct a uniform Codd table~$D$ over binary relation~$R$ such that (1) all the completions of~$D$ satisfy both queries; and (2) the number of completions of~$D$ is equal to~$\shpf(G)$, thus establishing hardness.
	For every~$(t,t') \in (U \cup V)^2 \setminus E$, we add to~$D$ the fact~$R(t,t')$; we call these the \emph{complementary facts}.
	For every~$u \in U$ we add to~$D$ the fact~$R(u,\bot_u)$ and for every~$v \in V$ the fact~$R(\bot_v,v)$. 
	Finally, we add to~$D$ a fact~$R(f,f)$ where~$f$ is a fresh constant.
	The uniform domain of the nulls if~$\dom = U \cup V$. It is clear that~$D$ is a Codd table and that every completion of~$D$ satisfies both queries (thanks to the fact~$R(f,f))$, so (1) holds.
	We now prove that (2) holds.
	First of all, observe that a completion~$\nu(D)$ of~$D$ is uniquely determined by the set of edges~$\{(u,v)\in E \mid R(u,v) \in \nu(D)\}$: this is because~$\nu(D)$ already contains all the complementary facts. For a set~$S\subseteq E$ of edges, let us define~$D_S$ to be the complete database that contains all the complementary facts and all the facts~$R(u,v)$ for~$(u,v) \in S$ (note that $D_S$ is not necessarily a completion of~$D$). We now argue that for every set~$S \subseteq E$, we have that~$D_S$ is a completion of~$D$ if and only if~$G[S]$ is a pseudoforest, which would conclude the proof.
		By lemma~\ref{lem:orientation} we only need to show that~$D_S$ is a completion of~$D$ if and only if~$G[S]$ admits an orientation with maximum outdegee~$1$.
		We show each direction in turn. $(\Rightarrow)$ Assume~$D_S$ is a completion of~$D$, and let~$\nu$ be a valuation witnessing this fact, i.e., such that~$\nu(D)=D_S$. 
		First, observe that we can assume without loss of generality that~($\star$) for every~$e =(u,v) \in S$, we have either~$\nu(\bot_u)=v$ or~$\nu(\bot_v)=u$ but not both. Indeed, if we had both then we could modify~$\nu$ into~$\nu'$ by redefining, say,~$\nu'(\bot_u)$ to be~$u$, and we would still have that~$\nu'(D)=D_S$ (because~$R(u,u)$ is already present in~$D$: it is a complementary fact).
		We now define an orientation~$f_{\nu}:S \to U\cup V$ of~$G[S]$ from~$\nu$ as follows. Let~$e=(u,v) \in S$. Then: if we have~$\nu(\bot_u)=v$ we define~$f_\nu((u,v))$ to be~$v$, i.e., we orient the (undirected) edge~$(u,v)$ from~$u$ to~$v$. Else, if we have~$\nu(\bot_v)=u$ we define~$f_\nu((u,v))$ to be~$u$, i.e., we orient the (undirected) edge~$(u,v)$ from~$v$ to~$u$. Observe that by~($\star$) $f_\nu$ is well defined. It is then easy to check that the maximal outdegree or the directed graph defined by~$f_\nu$ is~$1$: this is because for every~$u \in U$ (resp.,~$v \in V$), there is only one fact in~$D$ of the form~$R(u,\text{null})$ (resp.,~$R(\text{null},v)$), namely, the fact~$R(u,\bot_u)$ (resp., $R(\bot_v,v)$).
		$(\Leftarrow)$ let~$f:S \to U\cup V$ an orientation of~$G[S]$ with maximum outdegree~$1$. Let~$\nu_f$ be the valuation of~$D$ defined from~$f$ as follows:
		for every~$u\in U$ (resp., $v\in V$), if there is an edge~$(u,v)\in S$ such that~$f((u,v)) = v$ (resp., such that $f((u,v)) = u$), then define~$\nu_f(\bot_u)$ to be~$v$ (resp., define~$\nu_f(\bot_v)$ to be~$u$). Observe that there can be at most one such edge because~$f$ has maximum outdegree~$1$, so this is well defined. If there is no such edge, define~$\nu_f(\bot_u)$ to be~$u$ (resp., define $\nu_f(\bot_v)$ to be~$v$). Since all edges in~$S$ are given an orientation by~$f$, it is clear that for every~$(u,v)\in S$ we have~$R(u,v) \in \nu_f(D)$. Moreover, since~$\nu_f(D)$ contains all the complementary facts, we have that~$\nu_f(D)=D_S$, which shows that~$D_S$ is a completion of~$D$ and concludes this proof.
\end{proof}
\end{toappendix}

\begin{toappendix}
	\subsection{Proof for Proposition~\ref{prp:shpf-hard-bipartite}}
	\label{apx:shpf-hard-bipartite}
	In this section we explain how to obtain the following hardness result. 

\begin{proposition}[Implied by {\cite{gimenez2006complexity}}]
\label{prp:shpf-hard-bipartite}
	The problem~$\shpf$ restricted to bipartite graphs is \shp-hard.
\end{proposition}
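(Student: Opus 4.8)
The plan is to use the \shp-hardness of $\shpf$ on arbitrary graphs (given by \cite{gimenez2006complexity}, which amounts to the hardness of counting the independent sets of the bicircular matroid) as a black box, and to give a polynomial-time Turing reduction from $\shpf$ on arbitrary graphs to $\shpf$ restricted to bipartite graphs. The reduction replaces each edge of the input by a path of \emph{even} length -- which is exactly what forces bipartiteness -- and recovers the desired count by polynomial interpolation over several choices of the path length.

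Concretely, fix an input graph $G=(V,E)$ and an even integer $\ell\geq 2$, and let $G_\ell$ be obtained from $G$ by replacing every edge $e=\{u,v\}$ with a fresh path $u - p^e_1 - \cdots - p^e_{\ell-1} - v$ of length $\ell$. Since each original edge becomes a path of even length, every cycle of $G$ of length $c$ becomes a cycle of length $c\ell$, which is even; hence $G_\ell$ is bipartite. The heart of the argument is the following identity. Given $S\subseteq E(G_\ell)$, let $T=T(S)\subseteq E$ be the set of original edges whose \emph{entire} path lies in $S$. I claim that $G_\ell[S]$ is a pseudoforest if and only if $G[T]$ is a pseudoforest. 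Indeed, the edges of $S$ coming from original edges in $T$ induce exactly a subdivision of $G[T]$, and subdivision preserves the cyclomatic number (first Betti number) of each connected component, so this part is a pseudoforest iff $G[T]$ is; meanwhile, for every $e\notin T$ the edges of $S$ lying on its path form one or more sub-paths, each attached to the rest of $G_\ell[S]$ through at most one endpoint (the internal path vertices $p^e_i$ are private to $e$), so they only add pendant or floating trees and can neither create nor destroy a cycle. Counting the states of each path -- a single way to include $e\in T$ (the full path) and $2^\ell-1$ non-transmitting ways to leave $e\notin T$ -- then yields
\[
  \shpf(G_\ell) \;=\; \sum_{\substack{T\subseteq E \\ G[T]\text{ pseudoforest}}} (2^\ell-1)^{|E|-|T|} \;=\; \sum_{k=0}^{|E|} p_k\,(2^\ell-1)^{|E|-k},
\]
where $p_k$ is the number of $k$-edge subsets $T\subseteq E$ with $G[T]$ a pseudoforest.

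Viewing the right-hand side as a polynomial of degree $|E|$ in the variable $z=2^\ell-1$, I would evaluate it at the $|E|+1$ distinct points coming from $\ell=2,4,\dots,2(|E|+1)$ (which give the pairwise distinct values $z=3,15,63,\dots$). Each value $\shpf(G_\ell)$ is obtained by one oracle call to $\shpf$ on the bipartite graph $G_\ell$, which is of polynomial size, so solving the resulting nonsingular Vandermonde system recovers all coefficients $p_k$ in polynomial time, and finally $\shpf(G)=\sum_k p_k$. This establishes $\shpf \tr \shpf$ restricted to bipartite graphs, and hence the proposition. The routine parts are the bipartiteness of $G_\ell$ and the Vandermonde interpolation (all integers involved have polynomial bit-length). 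The main obstacle -- and the only place requiring genuine care -- is justifying the combinatorial identity above, i.e.\ proving that partially selected subdivided paths act purely as pendant or floating trees while fully selected paths reproduce the cycle structure of $G[T]$; the clean way to see this is the invariance of each component's cyclomatic number under subdivision (one may equivalently use the orientation characterization of Lemma~\ref{lem:orientation}), which makes "$G_\ell[S]$ is a pseudoforest" depend on $S$ only through $T(S)$.
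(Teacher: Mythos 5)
Your proof is correct, and it reaches the result by a route that is genuinely different in structure from the paper's, even though both rest on the same gadget (even-length edge stretching) and on polynomial interpolation. The paper does not reduce from $\shpf$ on general graphs at all: it retraces the matroid-theoretic argument of~\cite{gimenez2006complexity}, starting from the \shp-hardness of computing $\T(B(G);1,1)$ (the number of bases of the bicircular matroid), citing a stretch identity attributed to Brylawski/Jaeger, namely $\T(B(\s_k(G));2,1)=(2^k-1)^{|E|-\rk_{B(G)}(E)}\,\T(B(G);2^k,1)$, interpolating the polynomial $P_G(x)=\T(B(G);x,1)$ from its values at the points $x=2^k$, computing $\rk_{B(G)}(E)$ in polynomial time to handle the prefactor, and finally observing that restricting to \emph{even} $k$ makes every oracle graph bipartite. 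You instead take the \shp-hardness of $\shpf$ on arbitrary graphs as a black box and give a Turing self-reduction to the bipartite case, proving the required identity from scratch: your combinatorial claim (fully selected paths form a subdivision, which preserves per-component cyclomatic numbers, while partially selected paths contribute only pendant or floating trees) is exactly the elementary content of the cited Tutte identity, since $\sum_{T \text{ pseudoforest}}(2^\ell-1)^{|E|-|T|}=(2^\ell-1)^{|E|-\rk_{B(G)}(E)}\,\T(B(G);2^\ell,1)$. What your approach buys is self-containment: no bicircular matroid, no Tutte polynomial, no bases-counting hardness, and no need to compute $\rk_{B(G)}(E)$ (the normalization is absorbed into the coefficients $p_k$, whose sum you recover directly). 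The mild cost is that your interpolation sits on top of the interpolation already hidden inside the black-box hardness of $\shpf$ on general graphs, whereas the paper performs a single interpolation by going back to $\T(B(G);1,1)$; the paper's route also keeps the delta over~\cite{gimenez2006complexity} minimal (only the parity restriction on $k$ is added). Both reductions are legitimate here, since the paper defines \shp-hardness via polynomial-time Turing reductions.
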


	This result is proven for 
	(non-necessarily bipartite) graphs
	in~\cite{gimenez2006complexity} using techniques from matroid theory, in particular using the notions of \emph{bicircular matroid} of a graph and of \emph{Tutte polynomial} of a matroid.
	We did not find a way to show that the result holds on bipartite graphs without explaining their proof for general graphs, and we did not find a way to explain the proof for general graphs without introducing these concepts. Therefore, we need
	to define these concepts here.
	We have tried to keep this exposition as brief as possible, but more detailed introductions to matroid theory and to the Tutte polynomial can be found in~\cite{oxley2003matroid,welsh1999tutte}.
	First, we define what is a matroid.
	\begin{definition}
		\label{def:matroid}
		A matroid~$M=(E,\I)$ is a pair where~$E$ is a finite set (called the \emph{ground set}) and~$\I$ is a set of subsets of~$E$ whose elements are called \emph{independent sets} and that satisfies the following properties:
		\begin{description}
			\item[Non emptiness.] $\I \neq \emptyset$;
			\item[Heritage.] For every~$A' \subseteq A \subseteq E$, if~$A \in \I$ then~$A' \in \I$;
			\item[Independent set exchange.] For every~$A,B \in \I$, if~$|A| > |B|$ then there exists~$x\in A \setminus B$ such that~$B\cup \{x\} \in \I$.
		\end{description}
	\end{definition}
	
	In a matroid~$M=(E,\I)$, an independent set~$A \in \I$ is called a
	\emph{basis} if every strict superset~$A \subsetneq A' \subseteq E$ is
	not in~$\I$. Notice that, thanks to the independent set exchange
	property, all bases of~$M$ have the same number of elements. The \emph{rank} of~$M$ is defined as the number of elements in any basis of $M$. Given a matroid~$M=(E,\I)$
	and~$A \subseteq E$, we can define the \emph{submatroid} of~$M$ generated
	by~$A$ to be~$M_A = (A,\I')$, where for~$A' \subseteq A$ we
	have~$A' \in \I'$ iff~$A' \in \I$ (one should check that this is indeed
	a matroid).  The \emph{rank function}~$\rk_M:\{A \mid A \subseteq E\} \to \N$
	of~$M$ is then defined with~$\rk_M(A)$ being the rank of the
	matroid~$M_A$. We will now ommit the subscript in~$\rk_M$ as this will not cause confusion.
	We are ready to define the Tutte polynomial of a matroid.

	\begin{definition}
		\label{def:tutte}
		Let~$M=(E,\I)$ be a matroid. The \emph{Tutte polynomial} of~$M$, denoted~$\T(M;x,y)$, is the two-variables polynomial defined by
		\begin{eqnarray*}
		\label{eq:tutte}
			\T(M;x,y) & = & \sum_{A \subseteq E} (x-1)^{\rk(M)-\rk(A)} (y-1)^{|A|-\rk(A)}
		\end{eqnarray*}
	\end{definition}

	We will use the following observation:
	\begin{observation}
		\label{obs:2-1}
		Let~$M=(E,\I)$ be a matroid. Then~$\T(M;2,1) = |\I|$, i.e., evaluating the Tutte polynomial of a matroid at point~$(2,1)$ simply counts its number of independent sets.
	\end{observation}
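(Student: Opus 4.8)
The plan is to directly substitute the point $(x,y) = (2,1)$ into the definition of the Tutte polynomial and simplify the resulting sum. With $x = 2$ we have $x - 1 = 1$, so every factor of the form $(x-1)^{\rk(M) - \rk(A)}$ equals $1$ regardless of its exponent; thus, upon setting $y = 1$, the summand for each $A \subseteq E$ reduces to $(y-1)^{|A| - \rk(A)} = 0^{|A| - \rk(A)}$. Using the standard convention $0^0 = 1$ and $0^k = 0$ for $k \geq 1$, this factor equals $1$ precisely when $|A| - \rk(A) = 0$ and $0$ otherwise, so that
\[
	\T(M;2,1) = \sum_{A \subseteq E} 0^{|A| - \rk(A)} = |\{A \subseteq E \mid |A| = \rk(A)\}|.
\]

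The only genuine content is then to verify that $|A| = \rk(A)$ holds if and only if $A \in \I$, i.e., $A$ is independent. I would argue this from the definition of the rank function: $\rk(A)$ is the size of a basis of the submatroid $M_A$, equivalently the size of a maximal independent subset of $A$. Since every independent subset of $A$ has size at most $|A|$, we always have $\rk(A) \leq |A|$. If $A$ is itself independent, then $A$ is its own (unique maximal) independent subset, giving $\rk(A) = |A|$. Conversely, if $A \notin \I$, then no independent subset of $A$ can equal $A$, so every basis of $M_A$ is a proper subset of $A$ and hence $\rk(A) < |A|$. This establishes the claimed equivalence.

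Combining the two steps, the surviving terms in the sum are exactly those indexed by independent sets, each contributing $1$, so $\T(M;2,1) = |\I|$. There is no real obstacle here: the statement is an immediate specialization of the Tutte polynomial, and the only point requiring a moment's care is the convention $0^0 = 1$ together with the elementary characterization of rank via maximal independent subsets of a given set.
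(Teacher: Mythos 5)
Your proof is correct and follows essentially the same route as the paper's: substitute $(2,1)$ into the definition, reduce the sum to $\sum_{A \subseteq E} 0^{|A|-\rk(A)}$ using the convention $0^0=1$, and observe that $\rk(A) \leq |A|$ with equality precisely when $A \in \I$. You merely spell out in more detail the equivalence $\rk(A)=|A| \iff A \in \I$, which the paper states without elaboration.
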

	\begin{proof}
		We have~$\T(M;2,1) =\sum_{A \subseteq E} 0^{|A|-\rk(A)}$. We recall the convention that~$0^0 = 1$, and the fact that~$0^k=0$ for~$k>0$.
		Observe then that we always have~$\rk(A) \leq |A|$, and that we have~$\rk(A) = |A|$ if and only if~$A\in \I$, which proves the claim.
	\end{proof}
	
	Next, we define what is called the \emph{bicircular matroid} of a graph~$G=(V,E)$.
	Recall from Section~\ref{apx:Rxx-Rxy-hard-compls} the definition of the induced subgraph~$G[S]$ for~$S \subseteq E$.

	\begin{definition}
		\label{def:bicircular}
		Let~$G=(V,E)$ be a graph and~$\I = \{S\subseteq E \mid G[S] \text{ is a pseudoforest}\}$.
		Then one can check that~$(E,\I)$ is a matroid~\cite{zaslavsky1982bicircular}.
		This matroid is called the \emph{bicircular matroid of~$G$}, and is denoted by~$B(G)$.
	\end{definition}

	Notice then that the problem~$\shpf$ is exactly the same as the problem of computing, given as input a graph~$G$, the quantity~$\T(B(G);2,1)$.
	We now explain the steps used in~\cite{gimenez2006complexity} to prove that computing~$\T(B(G);2,1)$ is \shp-hard for graphs.
	The starting point of our explanation is that computing~$\T(B(G);1,1)$ is \shp-hard.

	\begin{proposition}[{\cite[Corollary 4.3]{gimenez2006complexity}}]
		\label{prp:1-1}
		The problem of computing, given a graph~$G$, the quantity~$\T(B(G);1,1)$ is \shp-hard.
	\end{proposition}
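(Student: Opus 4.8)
The plan is to obtain $(1,1)$-hardness as a special case of a general hardness theorem for the bicircular Tutte polynomial, following the Jaeger--Vertigan--Welsh methodology adapted to this matroid family (this is how Corollary~4.3 of~\cite{gimenez2006complexity} is read off). The first ingredient is the matroid analogue of Observation~\ref{obs:2-1}: for \emph{any} matroid $M$, the evaluation $\T(M;1,1)$ equals the number of bases of $M$. Specializing to $B(G)$, a basis is a maximal spanning pseudoforest of $G$; using the bicircular rank function $\rk(A) = |V(A)| - \tau(A)$, where $\tau(A)$ is the number of tree components of $G[A]$, one checks that for a connected graph containing a cycle the bases are exactly the spanning subgraphs in which every connected component is unicyclic. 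Thus $\T(B(G);1,1)$ counts maximal spanning pseudoforests, and the goal reduces to showing that counting these is \shp-hard.

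The next step is to establish a \emph{seed}: a single evaluation of $\T(B(\cdot);x,y)$ that is already \shp-hard, obtained by a direct reduction from a classical \shp-hard problem (for instance the permanent / counting perfect matchings, or counting proper colorings). Here I would exploit the rank-generating-function form of Definition~\ref{def:tutte}: substituting $\rk(A) = |V(A)| - \tau(A)$ turns $\T(B(G);x,y)$ into a sum over spanning subgraphs weighted by the number of tree components and by $|A|$, so a suitable gadget construction can make this weighted subgraph count encode the source instance at one chosen point.

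The heart of the argument is then the \emph{spreading} step, which transfers hardness from the seed point to $(1,1)$ by polynomial interpolation over a family of modified graphs. Concretely, I would use the behaviour of $\T(B(\cdot);x,y)$ under local edge operations --- $k$-stretching (subdividing each edge into a path of $k$ edges) and $k$-thickening (replacing each edge by $k$ parallel edges). The key computation is that, for a fixed target point, $\T(B(O_k(G));x,y)$ is a polynomial in $k$ (through quantities such as the number of tree and unicyclic components) whose coefficients determine the values of $\T(B(G);\cdot,\cdot)$ along a curve. Querying an oracle for $\T(\cdot;1,1)$ on the family $\{O_k(G)\}_k$ and solving the resulting Vandermonde-type linear system recovers $\T(B(G);\cdot,\cdot)$ at the seed point, so the seed hardness propagates to $(1,1)$. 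A final bookkeeping check confirms that $(1,1)$ does not lie on the finitely many exceptional points where the interpolation degenerates or the polynomial is genuinely easy.

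The main obstacle is the spreading step, and specifically deriving the exact transformation formulas for $\T(B(G);x,y)$ under stretching and thickening. Unlike the graphic matroid, where these formulas are classical and $(1,1)$ is in fact \emph{easy} (it counts spanning trees), the bicircular rank counts tree components, so subdividing or duplicating an edge changes which subgraphs are pseudoforests in a non-uniform way, and one must carefully track how tree-versus-unicyclic components evolve. Getting these formulas right --- and in a form where the induced interpolation matrix is provably non-singular at $(1,1)$, so that this point avoids the easy exceptional set --- is the delicate part; by comparison the basis-counting identity and the seed reduction are comparatively routine. This contrast with the graphic case (easy at $(1,1)$) also explains, downstream, why $\shpf$ (the point $(2,1)$) ends up being hard and motivates extracting the bipartite refinement in Proposition~\ref{prp:shpf-hard-bipartite}.
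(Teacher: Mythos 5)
First, a point of calibration: the paper does not prove this statement at all --- Proposition~\ref{prp:1-1} is imported verbatim from \cite[Corollary 4.3]{gimenez2006complexity}, and the surrounding appendix only \emph{uses} it as the seed from which hardness of $\shpf$ (the point $(2,1)$) is then obtained by stretching plus interpolation. So your proposal is an attempt to reprove the cited external result, and judged as such it is an outline rather than a proof. Its sound parts are the routine ones: $\T(M;1,1)$ counts the bases of any matroid, the bicircular rank is $\rk(A)=|V(A)|-\tau(A)$, and the bases of $B(G)$ (for connected $G$ containing a cycle) are exactly the spanning subgraphs all of whose components are unicyclic. But the two load-bearing steps are missing. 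The ``seed'' is never constructed: ``a suitable gadget construction can make this weighted subgraph count encode the source instance'' is where the entire hardness content lives, and you defer it without naming a source problem, a point, or a gadget.

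More seriously, the spreading mechanism you propose degenerates at exactly the point you need. The stretch identity the paper itself records, $\T(B(\s_k(G));2,1) = (2^k-1)^{|E|-\rk_{B(G)}(E)} \times \T(B(G);2^k,1)$, is an instance of the general pattern: an oracle at a point $(a,1)$, applied to the stretched graphs, returns evaluations of the one-variable polynomial $P_G(x)=\T(B(G);x,1)$ at the nodes $x=a^k$. For $a=2$ these nodes are distinct, which is precisely how the paper transfers hardness \emph{from} $(1,1)$ \emph{to} $(2,1)$. For $a=1$, however, $a^k=1$ for every $k$, so stretching never leaves $(1,1)$: your family $\{O_k(G)\}$ produces the same evaluation node over and over, the ``Vandermonde-type'' system is singular, and no interpolation is possible --- $(1,1)$ is a fixed point of the one local operation (subdivision, i.e.\ series extension) that interacts cleanly with $B(\cdot)$. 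Thickening cannot substitute for it: replacing an edge of $G$ by $k$ parallel edges is \emph{not} a parallel extension of $B(G)$, because two parallel graph edges form a $2$-cycle, which is unicyclic and hence an independent pair in the bicircular matroid, so no analogue of the classical thickening formula exists. This is exactly the non-uniformity you flag as ``delicate'' and then leave unresolved; it is presumably why Gim\'enez and Noy needed genuinely different machinery at $(1,1)$, and why the paper simply cites their Corollary~4.3 instead of sketching a JVW-style argument. (A minor aside: your closing claim that hardness of $(2,1)$ is explained by the contrast with the graphic case is off --- the graphic evaluation at $(2,1)$, counting forests, is also $\shp$-hard; what is special here is only that $(1,1)$ is easy for graphic matroids but hard for bicircular ones.)
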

	Second, let us define the following univariate polynomial: for a graph~$G$, let~$P_G(x)$ be
	\[P_G(x) \ = \ \T(B(G);x,1).\]
	Notice that this is indeed a polynomial and that its degree is at most~$|E|$ (the degree is exactly~$|E|$ iff~$G$ is itself a pseudoforest).
	If we could compute efficiently the coefficients of~$P_G$, then we could in particular compute the value~$P_G(1) = \T(B(G);1,1)$, which is \shp-hard by the previous proposition.
	We recall that to compute the coefficients of a polynomial of degree~$n$, it is enough to know its value on~$n+1$ distinct points; in fact, given these values in $n+1$ distinct points, it is possible to efficiently compute the coefficients of the polynomial by using standard  interpolation techniques (for example, by using Lagrange polynomials).

	We need one last definition.
	\begin{definition}
		\label{def:stretch}
		Let~$G$ be a graph. For~$k\in \N$, let~$\s_k(G)$ be the graph obtained from~$G$ by replacing each edge of~$G$ by a path of lenght~$k$; this graph is called the \emph{$k$-stretch} of~$G$.
	\end{definition}

	Then, using a result attributed to Brylawski (see~\cite{jaeger1990computational}), the authors of~\cite{gimenez2006complexity} obtain that, “up to a trivial factor”, we have
	\[\T(B(\s_k(G));2,1) \ \simeq \ \T(B(G);2^k,1). \]
	A careful inspection of~\cite{jaeger1990computational} reveals\footnote{To be precise, we use Equations (7.1) and (7.2) of~\cite{jaeger1990computational} with~$x=1,y=0$, and Equation (2.2) with~$x=2,y=1$.} that, in fact, we have
	\[\T(B(\s_k(G));2,1) \ = \ (2^k -1)^{|E|-\rk_{B(G)}(E)} \times \T(B(G);2^k,1). \]
	Notice that~$\rk_{B(G)}(E)$ is the size (number of edges) of a pseudoforest of~$G$ that is maximal by inclusion of edges, which we can compute in polynomial time.\footnote{This is because, since~$B(G)$ is a matroid, any two such pseudoforests have the same number of edges. We can then simply start from the empty subgraph and iteratively add edges until it is not possible to add an edge such that the resulting graph is a pseudoforest. This also relies on the fact that we can check in polynomial time whether a graph is a pseudoforest.}
	
	With this, the authors of~\cite{gimenez2006complexity} can conclude the proof that computing~$\T(B(G);2,1)$ is hard for (non-necessarily bipartite) graphs, i.e., that~$\shpf$ is \shp-hard.
	Indeed, given as input~$G=(V,E)$, we can construct in polynomial time the graphs~$\s_k(G)$ for~$|E|+1$ distinct values of~$k$,
	then use oracle calls to obtain the numbers~$\T(B(\s_k(G));2,1)$, which gives us the value of~$P_G$ on~$|E|+1$ distinct points.
	With that we can recover the coefficients of~$P_G$ and compute~$P_G(1) = \T(B(G);1,1)$ as argued above, thus proving hardness for general graphs.
	To obtain hardness for bipartite graphs, it is enough to observe that when~$k$ is even then the~$k$-stretch of~$G$ is bipartite (even if $G$ is not bipartite).
	Hence, to obtain a proof of Proposition~\ref{prp:shpf-hard-bipartite} for bipartite graphs, we can simply change that proof and specify that we make~$|E|+1$ calls to the oracle~$\T(B(\s_k(G));2,1)$ for~$|E|+1$ disctinct \emph{even} values of~$k$.
\end{toappendix}

	The patterns in Proposition 
	\ref{prp:Rxx-Rxy-hard-compls}
	suffice to characterize the complexity of $\ucountcompls(q)$ and $\cucountcompls(q)$.

	\begin{toappendix}
		\subsection{Proof of Theorem~\ref{thm:ucc-naive-fp}}
		\label{apx:ucc-naive-fp}
	\end{toappendix}

	\begin{theoremrep}[Dichotomy]
	\label{thm:ucc-naive-fp}
	Let $q$ be an \sjfbcq. 
	 If $R(x,x)$ or $R(x,y)$ is a pattern of $q$, 
	 then~$\ucountcompls(q)$ and~$\cucountcompls(q)$ are \shp-hard. Otherwise, these problems are in~\fp.
	\end{theoremrep}

\begin{toappendix}
	We only need to prove the tractability part of that claim, and this only for uniform incomplete databases.
	Remember from Section~\ref{apx:countvals-naive-uniform} that what we call a \emph{conjunction of basic singleton \sjfbcq} is an \sjfbcq\ of the form~$C_1(x_1) \land \ldots \land C_m(x_m)$,
	where each~$C_i(x_i)$ is a conjunction of unary atoms over the same variable~$x_i$.
	Since~$q$ does not contain the pattern~$R(x,x)$ nor the pattern~$R(x,y)$,~$q$ is in fact a conjunction of basic singleton \sjfbcq.
	The main difficulty is to decompose the computation in such a way that we do not count the same completion twice. Moreover, the fact that the database is naïve and not Codd, and the fact that constants can appear everywhere, complicate a lot the description of the algorithm.
For these reasons, and to give the intuition of the general proof, we first present a few warm-up examples of increasing difficulty. We strongly advise the reader to read these before reading the proof.
	In what follows we will always denote by~$\dom$ the uniform domain of the nulls, and~$d \geq 1$ its size.

	\subsubsection{Warm-up example 1: $\ucountcompls(R(x))$ without constants}
	The database~$D$ consists of the facts~$\{R(\bot_1),\ldots,R(\bot_{n_R})\}$. If~$n_R=0$ then the result is~$1$ (the empty completion), so let us assume~$n_R \geq 1$.
	Notice then that for every subset~$I_R \subseteq \dom$, the database~$\{R(a) \mid a \in I_R\}$ is a completion of~$D$ if and only if we have~$1 \leq |I_R| \leq n_R$.
	But then the answer is simply~$\sum_{1 \leq i_R \leq n_R} \binom{d}{i_R}$, which we can compute in polynomial time.\footnote{With the convention that~$\binom{a}{b} = 0$ when~$b > a$.} Note that the expression $\sum_{0 \leq i_R \leq n_R} \binom{d}{i}$ would give the right answer only when~$n_R=0$ because there has to be at least one tuple in the completion if~$n_R > 1$ so the sum should start at~$1$.
	We can also compute the result using the following expression, which works for all~$n \in \mathbb{N}$:
	\begin{equation}\sum_{0 \leq i_R \leq d} \binom{d}{i_R} \times \checkp(i_R) \end{equation}
		where~$\checkp(i_R) \in \{0,1\}$ is defined as~$\checkp(i_R)\defeq \begin{cases}
		\text{ if } i_R > n_R \text{ then }0\\
		\text{ if } i_R = 0 \text{ and } n_R \geq 1 \text{ then } 0 \\
			    \text{ otherwise } 1 
	\end{cases}$.

	\subsubsection{Warm-up example 2: $\ucountcompls(R(x))$ with constants}

	Let~$D$ be the database~$\{R(a_1),\ldots,R(a_{c_R}),R(\bot_1),\ldots,R(\bot_{n_R})\}$, and let~$C_R = \{a_1,\ldots,a_{c_R}\}$ be the set of constants.
		First, observe that we can assume wlog that~$C_R \subseteq \dom$; indeed, letting~$D'$ be the incomplete database obtained from~$D$ by removing all the facts~$R(a)$ for which~$a \in C_R \setminus \dom$, then~$D$ and~$D'$ actually have the same number of completions.
		Moreover, we can assume that~$c_R \geq 1$, otherwise we are in the previous example.
		Then, observe that for every subset~$I_R \subseteq \dom \setminus C_R$, the database~$\{R(a) \mid a \in I_R \text{ or } a \in C_R\}$ is a completion of~$D$ if and only if we~$0 \leq |I_R| \leq n_R$.
		But then the answer is simply~$\sum_{0\leq i_R \leq n_R} \binom{d-c_R}{i_R}$.
		Note that this expression would not give the right answer in case we had~$c_R=0,n_R\geq 1$ because the nulls cannot be “absorbed” by~$C_R$, so in that case the sum should start at~$1$.
		We can also compute the result using the following expression, which works for all~$c_R,n_R \in \mathbb{N}$:
	\begin{equation}\sum_{0\leq i_R \leq d} \binom{d-c_R}{i_R} \times \checkp(i_R)\end{equation}
	where~$\checkp(i_R) \defeq \begin{cases}
		\text{ if } i_R > n_R\text{ then } 0\\
		\text{ if } i_R = 0 \text{ and } c_R = 0 \text{ and }n_R \geq 1 \text{ then } 0 \\
			    \text{ otherwise } 1 
	\end{cases}$.

	\subsubsection{Warm-up example 3: $\ucountcompls(R(x)\land S(y))$ without constants}
	Let~$D$ be an incomplete naïve table over~$R,S$ that do not have constants. Let~$n_{RS}$ be the number of nulls that occur in both~$R$ and~$S$, $n_R$ the number of nulls that occur only in~$R$ and~$n_S$ the number of nulls that occur only in~$S$. We further assume that~$n_{RS} \geq 1$, otherwise we can simply compute independently the number of completions of the~$R$ and~$S$ tables as in warm-up example 1 and multiply the two numbers. We claim the following:
	\begin{claim}
		Let~$D'$ be a complete database over~$R,S$ with constants in~$\dom$, and let~$I_R \defeq \{a \in \dom \mid R(a) \in D', S(a) \notin D'\}$, $I_S \defeq \{a \in \dom \mid S(a) \in D', R(a) \notin D'\}$, and $I_{RS} \defeq \{a \in \dom \mid \{R(a),S(a)\} \subseteq D'\}$. Then~$D'$ is a completion of~$D$ if and only if
		we have~$|I_R| \leq n_R$, $|I_S| \leq n_S$, and~$1 \leq |I_{RS}| \leq \min(n_{RS} + n_R - |I_R|, n_{RS} + n_S - |I_S|)$.
	\end{claim}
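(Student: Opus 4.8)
The plan is to prove the two directions of the equivalence separately: the forward direction (completion $\Rightarrow$ the stated inequalities) by simple counting arguments, and the backward direction by an explicit construction of a valuation. Throughout, I would rely on elementary observations about which nulls can ``land'' on which constants. If $\nu(D) = D'$, then a shared null $\bot$ (one with both $R(\bot)$ and $S(\bot)$ in $D$) must satisfy $\nu(\bot) \in I_{RS}$, since it produces both $R(\nu(\bot))$ and $S(\nu(\bot))$, and both facts must belong to $D'$. Likewise an $R$-only null must map into $I_R \cup I_{RS}$ and an $S$-only null into $I_S \cup I_{RS}$. Dually, to generate the facts of $D'$: each $a \in I_R$ forces $R(a) \in D'$ but $S(a) \notin D'$, so $R(a)$ can only be produced by an $R$-only null mapping to $a$ (a shared null mapping to $a$ would wrongly produce $S(a)$); symmetrically for $I_S$.

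For the forward direction I would argue as follows. Covering the $|I_R|$ distinct constants of $I_R$ requires $|I_R|$ distinct $R$-only nulls, giving $|I_R| \leq n_R$; symmetrically $|I_S| \leq n_S$. Since $n_{RS} \geq 1$, any shared null lands in $I_{RS}$, so $I_{RS} \neq \emptyset$, i.e.\ $1 \leq |I_{RS}|$. For the upper bound, note that the only nulls able to produce an $R$-fact are the shared and $R$-only nulls, $n_{RS} + n_R$ in total, and they must collectively cover the $|I_R| + |I_{RS}|$ distinct constants on which $R$ holds in $D'$; since one null covers one constant this yields $|I_R| + |I_{RS}| \leq n_{RS} + n_R$, i.e.\ $|I_{RS}| \leq n_{RS} + n_R - |I_R|$. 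The symmetric argument on $S$-facts gives $|I_{RS}| \leq n_{RS} + n_S - |I_S|$, and together these give the claimed minimum.

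For the backward direction I would build a valuation explicitly. First assign $|I_R|$ of the $R$-only nulls bijectively onto $I_R$ and $|I_S|$ of the $S$-only nulls bijectively onto $I_S$; this is possible because $|I_R| \leq n_R$ and $|I_S| \leq n_S$, and it leaves $n_R - |I_R|$ and $n_S - |I_S|$ ``spare'' nulls of each kind. It remains to cover $I_{RS}$, which I would handle by a case split on whether $n_{RS} \geq |I_{RS}|$. If $n_{RS} \geq |I_{RS}|$, the shared nulls alone can be spread to hit every constant of $I_{RS}$ (dumping any surplus shared nulls onto an already-hit constant), and no spare nulls are needed. If $n_{RS} < |I_{RS}|$, I place the $n_{RS}$ shared nulls on $n_{RS}$ distinct constants of $I_{RS}$ and cover the remaining $|I_{RS}| - n_{RS}$ constants with spare $R$-only nulls (for the $R$-facts) and spare $S$-only nulls (for the $S$-facts); there are enough of each precisely because the upper-bound inequalities rearrange to $n_R - |I_R| \geq |I_{RS}| - n_{RS}$ and $n_S - |I_S| \geq |I_{RS}| - n_{RS}$. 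Any still-unused spare null can be harmlessly mapped onto one of the already-covered constants (which exists since $|I_{RS}| \geq 1$), producing only a fact already in $D'$. One then checks that the resulting $\nu$ satisfies $\nu(D) = D'$.

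The main obstacle I expect is the bookkeeping in the backward direction around the ``double duty'' of shared nulls: a shared null contributes to both the $R$- and the $S$-coverage of the same constant, so the $R$- and $S$-side covering problems on $I_{RS}$ are not independent, and one must ensure the two upper bounds are \emph{simultaneously} satisfiable. The case split on $n_{RS}$ versus $|I_{RS}|$ is exactly what decouples them, and verifying that no spurious facts are created when parking leftover nulls (which is what forces the use of $|I_{RS}| \geq 1$) is the delicate point.
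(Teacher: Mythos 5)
Your proof is correct and follows essentially the same route as the paper's: the backward direction uses the identical explicit construction (bijective assignment onto $I_R$ and $I_S$, then the same case split on $|I_{RS}|$ versus $n_{RS}$, parking leftover nulls on already-covered constants), while the forward direction spells out the counting argument that the paper dismisses as ``easy to check.'' The only difference is that you make the bookkeeping explicit, which is a welcome expansion rather than a deviation.
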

	\begin{proof}
		The only if part is easy to check. Suppose then that these conditions hold on~$D'$. We can assign the first~$|I_R|$ nulls that are only in~$R$ so that they span~$I_R$, and assign the first~$|I_S|$ nulls that are only in~$S$ so that they span~$I_S$. If~$|I_{RS}| \leq n_{RS}$ then we can assign the nulls that are in both~$R$ and~$S$ so that they span~$I_{RS}$, and then we can assign the remaining nulls (that appear only in~$R$ or only in~$S$) to a value that has already been assigned, and we indeed obtain~$D'$ as a completion. If~$|I_{RS}| > n_{RS}$, we assign~$n_{RS}$ nulls so that they span~$n_{RS}$ elements of~$I_{RS}$ (say they span a set~$I'_{RS} \subseteq I_{RS}$), then, because we have~$|I_{RS}| \leq \min(n_{RS} + n_R - |I_R|, n_{RS} + n_S - |I_S|)$, we can use the remaining nulls that occur only in~$R$ or in~$S$ to span~$I_{RS} \setminus I'_{RS}$. Again we obtain~$D'$ as a completion.
	\end{proof}

	But then this implies that the result can be expressed as \[\sum_{\substack{I_R \subseteq \dom\\ |I_R|\leq n_R}} \sum_{\substack{I_S \subseteq \dom \setminus I_R\\|I_S|\leq n_S}} \sum_{\substack{I_{RS} \subseteq \dom \setminus (I_R \cup I_S)\\ 1 \leq |I_{RS}| \leq \min(n_{RS} + n_R - |I_R|, n_{RS} + n_S - |I_S|)}} 1.\]
	We cannot compute this expression as-is because we are summing over subsets of~$\dom$. However, since each subsum depends only on the sizes of the sets introduced before it, we can simplify this expression to
	\[\sum_{0 \leq i_R \leq n_R} \binom{d}{i_R} \sum_{0 \leq i_S \leq n_S} \binom{d-i_R}{i_S} \sum_{1 \leq i_{RS} \leq  \min(n_{RS} + n_R - i_R, n_{RS} + n_S - i_S)} \binom{d-i_R-i_S}{i_{RS}}\]
	which we can compute in polynomial time.
	The result can also be expressed as the following expression which work for all~$n_R,n_S,n_{RS} \in \mathbb{N}$:
	\begin{equation}
		\label{eq:RxSy}
	\sum_{0 \leq i_R,i_S,i_{RS} \leq d} \binom{d}{i_R} \binom{d-i_R}{i_S} \binom{d-i_R-i_S}{i_{RS}} \times \checkp(i_R,i_S,i_{RS}) \end{equation}
	where~$\checkp(i_R,i_S,i_{RS}) \defeq \begin{cases}
			    \text{ if } i_R>n_R \text{ then } 0\\
			    \text{ if } i_S>n_S \text{ then } 0\\
		\text{ if } n_{RS} \geq 1 \text{ and } i_{RS}=0 \text{ then } 0\\
			    \text{ if } i_R=0, n_R \geq 1 \text{ and } n_{RS} = 0 \text{ then } 0\\
			    \text{ if } i_S=0, n_S \geq 1 \text{ and } n_{RS} = 0 \text{ then } 0\\
			    \text{ if } i_{RS} >  \min(n_{RS} + n_R - i_R, n_{RS} + n_S - i_S) \text{ then } 0\\
			    \text{ otherwise } 1
	\end{cases}$.

	\subsubsection{Warm-up example 4: $\ucountcompls(R(x)\land S(x))$ without constants}
	We use again Equation~\ref{eq:RxSy}, but to ensure that the query is satisfied we add that~$\checkp(i_R,i_S,i_{RS})$ becomes~$0$ when~$i_{RS} = 0$.

	\subsubsection{Warm-up example 5: $\ucountcompls(R(x)\land S(y))$ with constants}
	This example is much more involved, and we will mimick all the steps of the general proof.
	Let~$C_{RS},C_R,C_S$ (resp, $N_{RS},N_R,N_S$) be the sets of constants (resp., nulls) that occur respectively: in~$R$ and in~$S$, only in~$R$, only in~$S$, and denote~$c_{RS},c_R,c_S$ (resp., $n_{RS},n_R,n_S$) their sizes.
	For the same reason as in warm-up example 2, we can assume wlog that~$C \defeq C_{RS} \cup C_R \cup C_S \subseteq \dom$.
	Let~$c = c_{RS} + c_R + c_S$.
	We claim the following:
	\begin{claim}
		\label{claim:expl-disjoint}
		For a triplet~$(I_R,I_S,I_{RS})$ of subsets of~$\dom$ satisfying the conditions~($\star$) $I_R \subseteq \dom \setminus C$, $I_S \subseteq \dom \setminus (C \cup I_R)$, and $I_{RS} \subseteq \dom \setminus (C_{RS} \cup I_R \cup I_S)$,
		let us define~$P(I_R,I_S,I_{RS})$ to be the complete database consisting of the following facts:
		\begin{enumerate}
			\item $R(a)$ and~$S(a)$ for~$a \in C_{RS} \cup I_{RS}$;
			\item $R(a)$ for~$a \in I_R \cup (C_R \setminus I_{RS})$;
			\item $S(a)$ for~$a \in I_S \cup (C_S \setminus I_{RS})$;
		\end{enumerate}
		Then, for any two such triplets of sets~$(I_R,I_S,I_{RS})$ and~$(I'_R,I'_S,I'_{RS})$ that are different, the complete databases~$P(I_R,I_S,I_{RS})$ and~$P(I'_R,I'_S,I'_{RS})$ are distinct.
	\end{claim}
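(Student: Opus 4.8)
The plan is to prove the stronger statement that the map $P$ is injective, by exhibiting an explicit left inverse: from the complete database $D' = P(I_R,I_S,I_{RS})$ alone (together with the fixed constant sets $C_{RS}, C_R, C_S$), I will recover the triplet $(I_R,I_S,I_{RS})$, which immediately yields that distinct valid triplets produce distinct databases.

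First I would read off from $D'$ the three sets that classify each value of $\dom$ according to which relations it populates: let $J_{RS} \defeq \{a \in \dom \mid R(a) \in D' \text{ and } S(a) \in D'\}$, let $J_R \defeq \{a \in \dom \mid R(a) \in D' \text{ and } S(a) \notin D'\}$, and let $J_S \defeq \{a \in \dom \mid S(a) \in D' \text{ and } R(a) \notin D'\}$. These three sets are manifestly determined by $D'$. Writing $A_R$ (resp.\ $A_S$) for the set of values carrying an $R$-fact (resp.\ $S$-fact) in $D'$, the definition of $P$ gives $A_R = C_{RS} \cup C_R \cup I_R \cup I_{RS}$ and $A_S = C_{RS} \cup C_S \cup I_S \cup I_{RS}$, where I fold items~1 and~2 together using the elementary identity $I_{RS} \cup (C_R \setminus I_{RS}) = I_{RS} \cup C_R$, and symmetrically for $S$.

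The core computation is then to express $J_{RS}, J_R, J_S$ in terms of the triplet. Invoking the disjointness conditions ($\star$)—namely that $I_R, I_S$ avoid all of $C$, that $I_R, I_S, I_{RS}$ are pairwise disjoint, and that $I_{RS}$ avoids $C_{RS}$—I would establish that $J_{RS} = A_R \cap A_S = C_{RS} \cup I_{RS}$, that $J_R = A_R \setminus A_S = I_R \cup (C_R \setminus I_{RS})$, and that $J_S = A_S \setminus A_R = I_S \cup (C_S \setminus I_{RS})$. I expect the verification of these three equalities to be the main obstacle: one must carefully track the elements of $C_R$ and $C_S$, which—unlike $C_{RS}$—are allowed by ($\star$) to lie in $I_{RS}$, and check that such elements land in $J_{RS}$ (where they are already contributed by $I_{RS}$) rather than in $J_R$ or $J_S$. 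This is exactly what the $C_R \setminus I_{RS}$ and $C_S \setminus I_{RS}$ terms in the definition of $P$ are designed to handle, and it is where the hypotheses ($\star$) are used in full force.

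Finally, I would invert these three equalities to recover the triplet. Since $I_{RS}$ is disjoint from $C_{RS}$, we get $I_{RS} = J_{RS} \setminus C_{RS}$; since $I_R$ is disjoint from $C_R$ while $C_R \setminus I_{RS} \subseteq C_R$, we get $I_R = J_R \setminus C_R$; and symmetrically $I_S = J_S \setminus C_S$. As $J_{RS}, J_R, J_S$ depend only on $D'$, and $C_{RS}, C_R, C_S$ are fixed, the triplet $(I_R,I_S,I_{RS})$ is uniquely determined by $D'$. Hence two distinct valid triplets cannot yield the same complete database, which is precisely the claim.
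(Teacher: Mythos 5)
Your proof is correct, and it takes a genuinely different route from the paper's. The paper argues by case analysis on the first coordinate where two triplets differ, in the order $I_{RS}$, then $I_R$, then $I_S$: for a witness $a$ in the symmetric difference it exhibits a fact ($R(a)$ and $S(a)$ together, or $R(a)$ alone) that belongs to one database but not the other, using the disjointness consequences of~($\star$) (together with the pairwise disjointness of $C_{RS},C_R,C_S$, which holds by their very definition as ``occurs in both'' / ``only in $R$'' / ``only in $S$'') to rule out the fact arising from another clause in the definition of $P$. You instead exhibit an explicit left inverse: you show $A_R \cap A_S = C_{RS} \cup I_{RS}$, $A_R \setminus A_S = I_R \cup (C_R \setminus I_{RS})$, $A_S \setminus A_R = I_S \cup (C_S \setminus I_{RS})$, and then peel off the constant sets to recover the triplet. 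The underlying disjointness computations are the same in both arguments, but your packaging buys something concrete: the recovery formulas $I_{RS} = (A_R \cap A_S) \setminus C_{RS}$, $I_R = (A_R \setminus A_S)\setminus C_R$, $I_S = (A_S \setminus A_R)\setminus C_S$ are precisely the map the paper uses in its proof of the \emph{next} claim (Claim~\ref{claim:expl-have-all}, that every completion of $D$ is of the form $P(I_R,I_S,I_{RS})$), so your argument effectively sets up both claims at once, at the cost of a slightly heavier set-theoretic verification than the paper's three short cases. One small nitpick: you attribute all the needed disjointness to~($\star$), but the disjointness of $C_R$, $C_S$, and $C_{RS}$ among themselves is not part of~($\star$); it comes from how these sets are defined, and your verification of the three displayed equalities does rely on it.
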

	\begin{proof}
		To help the reader, we have drawn in Figure~\ref{fig:drawing} how the sets can intersect.
		If we have~$I_{RS} \neq I'_{RS}$ with~$a \in I_{RS}$ and~$a
		\notin I'_{RS}$, then one can check
		that~$P(I_R,I_S,I_{RS})$ contains both
		facts~$R(a)$ and~$S(a)$,
		while~$P(I'_R,I'_S,I'_{RS})$ does not. So let us
		assume now that~$I_{RS} = I'_{RS}$. If we have~$I_R \neq I'_R$
		with~$a \in I_R$ and~$a\notin I'_R$ then one can check
		that~$P(I_R,I_S,I_{RS})$ contains the
		fact~$R(a)$ while~$P(I'_R,I'_S,I'_{RS})$ does
		not. Hence let us assume that~$I_R = I'_R$. Using the same
		reasoning we obtain that~$I_S = I'_S$, thus completing the
		proof.
	\end{proof}

	\begin{figure}
		\centering
	\begin{tikzpicture}
		\draw (0,0) -- (5,0) -- (5,6) -- (0,6) -- (0,0);
		\draw \boundellipse{2.5,5}{1.5}{.6};
		\draw \boundellipse{2.5,3.5}{1.5}{.6};
		\draw \boundellipse{1,2.5}{.5}{1.5};
		\draw \boundellipse{4,2.5}{.5}{1.5};
		\draw \boundellipse{1.8,.8}{.5}{.5};
		\draw \boundellipse{3.2,.8}{.5}{.5};
		\node at (.6,5.5) {$\dom$};
		\node at (2.5,5) {$C_{RS}$};
		\node at (2.5,3.5) {$I_{RS}$};
		\node at (1,2.5) {$C_R$};
		\node at (4,2.5) {$C_S$};
		\node at (1.8,.8) {$I_R$};
		\node at (3.2,.8) {$I_R$};
	\end{tikzpicture}
		\caption{How the sets $\dom, I_R,I_S,I_{RS},C_{RS},C_R$ and $C_S$ from Claim~\ref{claim:expl-disjoint} are allowed to intersect when they satisfy~($\star$). The sets themselves and the intersections can be empty.}
		\label{fig:drawing}

	\end{figure}

	Our next step is to show that every completion of~$D$ is of the form~$P(I_R,I_S,I_{RS})$ for some triplet~$(I_R,I_S,I_{RS})$ satisfying~($\star$):
	\begin{claim}
		\label{claim:expl-have-all}
		For every completion~$D'$ of~$D$, there exist a triplet~$(I_R,I_S,I_{RS})$ satisfying~($\star$) such that~$D' = P(I_R,I_S,I_{RS})$.
	\end{claim}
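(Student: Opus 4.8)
The plan is to reconstruct the triplet $(I_R,I_S,I_{RS})$ directly from $D'$ by reading off, for each constant of $\dom$, which of $R$ and $S$ the completion $D'$ assigns to it. Concretely, I would set
\[
O_R \defeq \{a\in\dom \mid R(a)\in D',\, S(a)\notin D'\},\quad O_S \defeq \{a\in\dom \mid S(a)\in D',\, R(a)\notin D'\},
\]
together with $B \defeq \{a\in\dom \mid \{R(a),S(a)\}\subseteq D'\}$, and then define $I_{RS}\defeq B\setminus C_{RS}$, $I_R \defeq O_R\setminus C_R$ and $I_S \defeq O_S\setminus C_S$. The whole argument then reduces to two verifications: that this triplet satisfies~$(\star)$, and that $P(I_R,I_S,I_{RS})$ recovers exactly $D'$.

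The first verification is where the hypothesis that $D'$ is a \emph{completion} (and not an arbitrary complete database) is used. Since $D'=\nu(D)$ for some valuation $\nu$, it contains every ground fact of $D$; in particular $R(a)\in D'$ for all $a\in C_{RS}\cup C_R$ and $S(a)\in D'$ for all $a\in C_{RS}\cup C_S$. From this I would deduce that $O_R$ is disjoint from $C_{RS}\cup C_S$ and $O_S$ is disjoint from $C_{RS}\cup C_R$, as any such constant is forced to carry the relation whose absence would place it in $O_R$ or $O_S$. Combined with the definitions and the fact that $B,O_R,O_S$ are pairwise disjoint, this yields $I_R,I_S\subseteq\dom\setminus C$ with $I_R,I_S$ disjoint, and $I_{RS}\subseteq\dom\setminus(C_{RS}\cup I_R\cup I_S)$, which is precisely~$(\star)$.

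Finally I would check $D'=P(I_R,I_S,I_{RS})$ by comparing the three kinds of facts. Using $C_{RS}\subseteq B$ one gets that the constants carrying both relations in $P$ are $C_{RS}\cup I_{RS}=B$, matching $D'$. The delicate point, and the main obstacle, is the bookkeeping around the term $C_R\setminus I_{RS}$ in item~(2) of the definition of $P$, which is present precisely because $I_{RS}$ may overlap $C_R$ and $C_S$ (as in Figure~\ref{fig:drawing}). Here I would argue that for $a\in C_R$ we have $a\in I_{RS}$ iff $a\in B$ (since $C_R\cap C_{RS}=\emptyset$) iff $S(a)\in D'$, so that $C_R\setminus I_{RS}=C_R\cap O_R$; together with $I_R=O_R\setminus C_R$ this makes the $R$-only facts of $P$ equal to $O_R$, and symmetrically the $S$-only facts equal to $O_S$. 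Hence $P(I_R,I_S,I_{RS})$ and $D'$ have identical $R$- and $S$-facts, establishing the claim. Read alongside Claim~\ref{claim:expl-disjoint}, this shows that every completion arises from a unique triplet satisfying~$(\star)$, as needed for the subsequent counting of completions.
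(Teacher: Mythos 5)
Your construction is exactly the paper's: the paper defines $I_R = D'(R)\setminus(C_R\cup D'(S))$, $I_S = D'(S)\setminus(C_S\cup D'(R))$ and $I_{RS} = (D'(R)\cap D'(S))\setminus C_{RS}$, which coincide with your $O_R\setminus C_R$, $O_S\setminus C_S$ and $B\setminus C_{RS}$, and then simply asserts that one can check $D'=P(I_R,I_S,I_{RS})$. Your proposal is correct and takes the same approach, merely carrying out in detail (via the disjointness facts forced by the ground facts of $D$) the verification of $(\star)$ and of $D'=P(I_R,I_S,I_{RS})$ that the paper leaves to the reader.
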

	\begin{proof}
		We define:
		\begin{itemize}
			\item $I_R \defeq D'(R) \setminus (C_R \cup D'(S))$; where we see~$D'(R)$ as the set of constants occurring in relation~$R$ of~$D'$.
			\item $I_R \defeq D'(S) \setminus (C_S \cup D'(R))$;
			\item $I_{RS} \defeq (D'(R) \cap D'(S)) \setminus C_{RS}$.
		\end{itemize}
		Then one can easily check that we have~$D' = P(I_R,I_S,I_{RS})$.
	\end{proof}

	By combining these two claims, we have that the result that we wish to compute is equal to
	\[\sum_{I_R \subseteq \dom \setminus C} \sum_{I_S \subseteq \dom \setminus (C \cup I_R)} \sum_{I_{RS} \subseteq \dom \setminus (C_{RS} \cup I_R \cup I_S)} \checkp(I_R,I_S,I_{RS}) \]
	where~$\checkp(I_R,I_S,I_{RS}) \defeq \begin{cases}
		1 \text{ if } P(I_R,I_S,I_{RS}) \text{ is a possible completion of } D \\
		0 \text{ otherwise}
	\end{cases}$.

	Next, we show that the value of~$\checkp(I_R,I_S,I_{RS})$ can be computed in polynomial time and actually only depends on the sizes of these sets. 
	In order to show this, we will use the following:
	\begin{claim}
		\label{claim:expl-conditions}
		We have~$\checkp(I_R,I_S,I_{RS}) = 1$ if and only if the following conditions hold:
		\begin{enumerate}
			\item if~$n_R \geq 1$ and $|C_R \cup C_{RS} \cup I_{RS}| = 0$, then we have~$|I_R| \neq 0$. Intuitively, this means that the value of a null in~$N_R$ cannot be absorbed by~$C_R \cup C_{RS} \cup I_{RS}$.
			\item if~$n_S \geq 1$ and $|C_S \cup C_{RS} \cup I_{RS}| = 0$, then we have~$|I_S| \neq 0$.
			\item if~$n_{RS} \geq 1$ and $|C_{RS}| = 0$, then we have~$|I_{RS}| \neq 0$.
			\item the following system of equations, whose variables are natural numbers between~$0$ and~$d$, has a solution:
				\begin{align*}
					z_{N_R}^{\{N_R\}} + z_{N_R}^{\{N_R,C_S\}} +  z_{N_R}^{\{N_R,N_S\}} &\leq n_R  \\
					z_{N_S}^{\{N_S\}} + z_{N_S}^{\{N_S,C_R\}} +  z_{N_S}^{\{N_S,N_R\}} &\leq n_R  \\
					z_{C_R}^{\{C_R,N_S\}} &\leq c_R  \\
					z_{C_S}^{\{C_S,N_R\}} &\leq c_S  \\
					z_{N_R}^{\{N_R\}} &\geq |I_R| \\
					z_{N_S}^{\{N_S\}} &\geq |I_S| \\
					n_{RS} + \min(z_{N_R}^{\{N_R,C_S\}}, z_{C_S}^{\{C_S,N_R\}}) + 
					\min(z_{N_R}^{\{N_R,N_S\}}, z_{N_S}^{\{N_S,N_R\}}) + 
					\min(z_{N_S}^{\{N_S,C_R\}}, z_{C_R}^{\{C_R,N_S\}})
					&\geq |I_{RS}|
				\end{align*}
		\end{enumerate}
	\end{claim}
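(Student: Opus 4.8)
The plan is to establish the stated equivalence by passing back and forth between valuations $\nu$ of $D$ with $\nu(D) = P(I_R,I_S,I_{RS})$ and solutions of the integer program in item~(4), handling items~(1)--(3) separately as degenerate side conditions. The guiding principle is a case analysis of where each kind of null may be sent once we demand $\nu(D) = P(I_R,I_S,I_{RS})$ \emph{exactly}: a null of $N_{RS}$ may only map to a value carrying both an $R$- and an $S$-fact, i.e.\ to $C_{RS}\cup I_{RS}$; a null of $N_R$ only to a value carrying an $R$-fact, i.e.\ to $C_R \cup C_{RS}\cup I_R \cup I_{RS}$; and symmetrically for $N_S$. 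Dually, each value of $I_R$ (resp.\ $I_S$) can be produced \emph{only} by a null of $N_R$ (resp.\ $N_S$), since no constant lands there and any $N_{RS}$-null sent there would create the wrong companion fact; and each value of $I_{RS}$ must be covered on both its $R$- and its $S$-side, by either one $N_{RS}$-null or a matching pair of an $R$-source and an $S$-source placed on it.

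For the direction $\checkp = 1 \Rightarrow (1)$--$(4)$, I would fix a realizing valuation $\nu$ and read a solution of the program off it, letting each variable $z_X^{Y}$ record how many sources of type $X$ the valuation $\nu$ routes towards the role indicated by $Y$ (e.g.\ $z_{N_R}^{\{N_R\}}$ is the number of $N_R$-nulls sent into $I_R$, and $z_{N_R}^{\{N_R,C_S\}}$ the number sent onto values of $I_{RS}$ whose $S$-side is supplied by a $C_S$-constant). The budget inequalities then hold because no type of source can be used more often than it occurs; the constraints $z_{N_R}^{\{N_R\}}\geq |I_R|$ and $z_{N_S}^{\{N_S\}}\geq|I_S|$ hold because every value of $I_R$, resp.\ $I_S$, needs its own dedicated null; and the final inequality holds because every value of $I_{RS}$ is covered either by one of the $n_{RS}$ nulls of $N_{RS}$ or by one of the matched source-pairs whose available number is exactly the corresponding $\min$. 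Items~(1)--(3) follow by contradiction: if, say, $n_{RS}\geq 1$ while $C_{RS}=\emptyset$ and $I_{RS}=\emptyset$, then some $N_{RS}$-null has no legal target under the case analysis above, contradicting $\nu(D)=P(I_R,I_S,I_{RS})$.

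For the converse, I would take a solution of the program together with items~(1)--(3) and assemble $\nu$ in stages. First use the two ``$\geq$'' constraints to place $|I_R|$ nulls of $N_R$ onto $I_R$ and $|I_S|$ nulls of $N_S$ onto $I_S$; then cover $I_{RS}$ by spending the $N_{RS}$-nulls and the matched pairs, which the last inequality guarantees suffice in total. Every source not yet committed is then ``parked'' on an already-covered value of a compatible type so that it produces no new fact: a leftover $N_R$-null, for instance, is sent to any value carrying an $R$-fact, i.e.\ in $C_R\cup C_{RS}\cup I_R\cup I_{RS}$, which item~(1) guarantees is nonempty whenever $n_R\geq 1$ (and symmetrically for items~(2) and~(3)). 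A final verification that the assembled $\nu$ produces exactly the facts of $P(I_R,I_S,I_{RS})$ and no others yields $\checkp(I_R,I_S,I_{RS}) = 1$.

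The main obstacle I anticipate is the bookkeeping around $I_{RS}$ and its overlaps with $C_R$ and $C_S$. A value of $I_{RS}$ admits several structurally different realizations---by a single $N_{RS}$-null, or by pairing an $R$-source ($N_R$-null or upgraded $C_R$-constant) with an $S$-source ($N_S$-null or upgraded $C_S$-constant)---and the three $\min$ terms encode, for each pairing type, how many values it can simultaneously realize under the shared source budgets, while the flexible $N_{RS}$-nulls absorb whatever portion is left. Getting this double counting right, and checking that the parking step never upgrades a one-sided value into a two-sided one, is the delicate part; it is also what explains why items~(1)--(3) are genuinely needed \emph{in addition} to feasibility of the program. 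Once the equivalence is proved, the downstream observation that the program has a bounded number of variables each ranging over $\{0,\dots,d\}$ makes $\checkp$ computable in polynomial time, which is what Warm-up example~5 ultimately requires.
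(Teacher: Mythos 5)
Your plan follows the same route as the paper's (explicitly informal) proof: read the $z$-variables off a realizing valuation for necessity, and for sufficiency build a valuation in stages and absorb the leftover nulls (your ``parking'' is the paper's ``absorption''). The necessity direction and the treatment of items (1)--(3) are sound. The genuine gap is in the converse, precisely at the step ``cover $I_{RS}$ by spending the $N_{RS}$-nulls and the matched pairs, which the last inequality guarantees suffice in total.'' It does not suffice: a pair consisting of an $N_R$-null and a $C_S$-constant can only realize a value of $I_{RS}$ that \emph{lies in} $C_S$ (the null supplies $R(a)$, the pre-existing constant fact supplies $S(a)$, so $a$ is itself a $C_S$-constant), yet the system bounds $z_{C_S}^{\{C_S,N_R\}}$ only by $c_S$, not by $|I_{RS}\cap C_S|$. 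Concrete failure: let $D$ consist of $5$ nulls occurring only in $R$ and $5$ constants occurring only in $S$ (so $n_R=c_S=5$ and $n_S=n_{RS}=c_R=c_{RS}=0$), and take $I_R=I_S=\emptyset$ and $I_{RS}$ of size $5$ disjoint from $C_S$. Then setting $z_{N_R}^{\{N_R,C_S\}}=z_{C_S}^{\{C_S,N_R\}}=5$ and every other variable to $0$ satisfies all of item (4), and items (1)--(3) hold vacuously; but $P(I_R,I_S,I_{RS})$ is not a completion, since no null occurs in $S$ and the facts $S(a)$ for $a\in I_{RS}$ can never be produced. So the implication you want in the converse direction is false as the system is written, and no refinement of the staging argument can rescue it.

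The upshot is that what you flagged as ``the delicate part'' is not bookkeeping that a careful writer can push through: the equivalence only becomes true after adding the constraints $z_{C_S}^{\{C_S,N_R\}}\leq |C_S\cap I_{RS}|$ and $z_{C_R}^{\{C_R,N_S\}}\leq |C_R\cap I_{RS}|$ to the system. With those constraints your proof does work: the variables read off a realizing valuation clearly satisfy them, and in the sufficiency direction each pairing type is then guaranteed enough room inside the region of $I_{RS}$ it is actually able to cover, after which your parking step (justified exactly as you say, via items (1)--(3)) finishes the construction. Be aware that the paper's own proof is informal and leaves this same hole (the printed system also has a typo: the second budget constraint should end in $n_S$, not $n_R$), and that the repair is not free downstream: with the corrected system, $\checkp$ depends on $|I_{RS}\cap C_R|$ and $|I_{RS}\cap C_S|$ and not merely on $|I_{RS}|$, so the follow-up claim that $\checkp$ is a function of the sizes alone, and the final binomial summation, need corresponding adjustments (e.g., summing separately over the sizes of $I_{RS}\cap C_R$, $I_{RS}\cap C_S$, and $I_{RS}\setminus(C_R\cup C_S)$).
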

		\begin{proof}
			We prove the claim informally by explaining the main ideas, because a formal proof would be too long and not that interesting.
			Conditions~(1-3) are easily checked to be necessary. We now explain why condition~(4) is also necessary.
			Suppose that~$P(I_R,I_S,I_{RS})$ is a completion of~$D$.
			Observe that~($\dagger$) to obtain the constants in~$I_{RS}$, we had to use some or all of the following:
		\begin{itemize}
			\item the nulls in~$N_{RS}$; or
			\item the nulls in~$N_R$ together with those in~$N_S$; or
			\item the nulls in~$N_R$ together with the constants in~$C_S$; or
			\item the nulls in~$N_S$ together with the constants in~$C_R$.
		\end{itemize}

			But then, to obtain~$P(I_R,I_S,I_{RS})$ as a completion, we must have used three disjoint (possibly empty) sets $Z_{N_R}^{\{N_R\}}, Z_{N_R}^{\{N_R,C_S\}},  Z_{N_R}^{\{N_R,N_S\}}$ of the nulls in~$N_R$ of sizes~$0\leq z_{N_R}^{\{N_R\}}, z_{N_R}^{\{N_R,C_S\}},  z_{N_R}^{\{N_R,N_S\}} \leq d$, we have done the same for the nulls in~$N_S$ and we also used a subset of the constants of~$C_R$ (and~$C_S$) in such a way that, according to~($\dagger$):
		\begin{itemize}
			\item the nulls in~$Z_{N_R}^{\{N_R\}}$ have been used to obtain the set~$I_R$ (which, we recall, is the set of constants that occur only in~$R$ and that are not in~$C_R$). Note that only the nulls in~$N_R$ could have been used to obtain constants in~$I_R$. This is what the fifth equation expresses.
			\item the nulls in~$Z_{N_R}^{\{N_R,C_S\}}$ have values in~$Z_{C_S}^{\{C_S,N_R\}}$, which gives us constants in~$I_{RS}$. Observe that at maximum we could obtain $\min(z_{N_R}^{\{N_R,C_S\}}, z_{C_S}^{\{C_S,N_R\}})$ constants in this manner.
			\item the nulls in~$Z_{N_R}^{\{N_R,N_S\}}$ and those in~$Z_{N_R}^{\{N_R,N_S\}}$ have common values, which gives us constants in~$I_{RS}$.
				Again, observe that we can get at most~$\min(z_{N_R}^{\{N_R,N_S\}}, z_{N_S}^{\{N_S,N_R\}})$ constants using these.
		\end{itemize}
			The first~$4$ equations express the partitioning process, and
			the last equation then expresses that by combining all these constants we indeed obtained the whole set~$I_{RS}$.

			We now explain why conditions~(2-4) are sufficient. If~$|I_R|,|I_S|$ and~$I_{RS}$ are all~$\geq 1$ then condition~(4) is sufficient, because we can use the nulls and constants as explained above, and we have enough of them to obtain the sets~$I_R,I_S,I_{RS}$. We explain what happens when~$I_R = \emptyset$ for instance. In that case, condition~(1) ensures us that we have either~$n_R=0$ or $C_R \cup C_{RS} \cup I_{RS} \neq \emptyset$. If we have~$n_R=0$ then it is fine, since the only nulls that could be used to fill~$I_R$ are those in~$N_R$. If we have~$n_R \geq 1$ and~$C_R \cup C_{RS} \cup I_{RS} \neq \emptyset$ then we can use these to absorb the values of the nulls in~$N_R$, and we are fine (i.e., we will be able to obtain~$I_R = \emptyset$). We leave it to the reader to complete the small gaps in this proof.

		\end{proof}

	Using this, we have that the value of~$\checkp(I_R,I_S,I_{RS})$ only depends on the sizes of~$I_R,I_S,I_{RS}$, and moreover can be computed in polynomial time
	\begin{claim}
		The value of~$\checkp(I_R,I_S,I_{RS})$ only depends on~$|I_R|,|I_S|,|I_{RS}|,n_R,n_S,n_{RS},c_R,c_S,c_{RS}$, and can be computed in \ptime.
	\end{claim}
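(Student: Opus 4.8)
The plan is to read both parts of the claim directly off the characterization just established in Claim~\ref{claim:expl-conditions}. For the first part---that $\checkp(I_R,I_S,I_{RS})$ depends only on the nine integers $|I_R|,|I_S|,|I_{RS}|,n_R,n_S,n_{RS},c_R,c_S,c_{RS}$---I would go through conditions~(1)--(4) and check that each refers to $I_R,I_S,I_{RS}$ only through their cardinalities. The only subtlety is the quantity $|C_R \cup C_{RS} \cup I_{RS}|$ occurring in condition~(1) (and symmetrically in~(2)): since the condition only tests whether this union is \emph{empty}, and a union is empty iff each of its members is, this is equivalent to testing $c_R=0$, $c_{RS}=0$ and $|I_{RS}|=0$, all of which are sizes; the same remark handles $|C_{RS}|=c_{RS}$ in condition~(3). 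Hence $\checkp$ is a well-defined function of the nine sizes, which is exactly the first assertion.

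For the second part---computability in polynomial time---I would treat the conditions in turn. Conditions~(1)--(3) are Boolean combinations of (in)equalities among the nine integers and are evaluated in constant time. The crux is condition~(4), which asks whether a system of (in)equalities admits a solution with every variable ranging over $\{0,\dots,d\}$, where $d=|\dom|$. The key observation is that, since the query $R(x)\land S(y)$ is fixed and we work in data complexity, this system has a \emph{constant} number of variables: there are eight of them, namely $z_{N_R}^{\{N_R\}}$, $z_{N_R}^{\{N_R,C_S\}}$, $z_{N_R}^{\{N_R,N_S\}}$, their three $N_S$-counterparts, together with $z_{C_R}^{\{C_R,N_S\}}$ and $z_{C_S}^{\{C_S,N_R\}}$.

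Because each of these constantly many variables takes one of the $d+1$ values in $\{0,\dots,d\}$, there are at most $(d+1)^{8}$ candidate assignments, which is polynomial in the input size (as $d$ is bounded by the size of $D$). I would therefore decide condition~(4) by brute force: enumerate all candidate assignments and, for each one, evaluate every (in)equality of the system---the $\min$ terms cause no difficulty, being minima of two explicitly given integers---accepting as soon as some assignment satisfies all constraints and rejecting otherwise. This runs in polynomial time and settles computability. The single step that requires an idea rather than bookkeeping is precisely this one: recognizing that solvability of the system in~(4) is decidable efficiently because, in data complexity, it has a fixed number of variables, so exhaustive search over their polynomially many values is affordable. (Alternatively, one could eliminate the $\min$ operators by branching on which argument attains each minimum, turning condition~(4) into a fixed-size disjunction of integer-linear feasibility problems in a fixed number of variables, each solvable in polynomial time; but the direct enumeration is simpler and already suffices.)
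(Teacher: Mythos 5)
Your proof is correct and follows essentially the same route as the paper's: dependence on the nine sizes is read off by inspecting the conditions of Claim~\ref{claim:expl-conditions}, and condition~(4) is decided in polynomial time by exhaustively enumerating the $(d+1)^{O(1)}$ assignments of its fixed number of variables. Your handling of the union-emptiness tests and the explicit count of eight variables are just slightly more detailed versions of what the paper leaves to inspection.
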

	\begin{proof}
		The fact that this value only depends on the sizes of these sets is simply by inspection of the conditions in Claim~\ref{claim:expl-conditions}. Conditions~(1-3) can obviously be checked in \ptime. The fact that condition~(4) can be checked in \ptime\ is because we can test all possible assignments between~$0$ and~$d$ for all these variables and see if there is one assignment that satisfies the equations (note that the number of variables is fixed).
	\end{proof}

	But then, we can express the result as follows
	\[\sum_{0 \leq i_R,i_S,i_{RS} \leq d} \binom{d-c}{i_R} \binom{d-c-i_R}{i_S} \binom{d-c_{RS} - i_R - i_S}{i_{RS}}  \times \checkp(i_R,i_S,i_{RS}) \]
	and we can evaluate this expression as-is in FP because computing~$\checkp(I_R,I_S,I_{RS}) \in \{0,1\}$ is in \ptime\ by the last claim. This concludes this example.

	\subsubsection{Proof of Theorem~\ref{thm:ucc-naive-fp}}
	We now present the general proof.
	Let~$\sigma = \{R_1,\ldots,R_l\}$ be the set of relation symbols, and~$D$ be an incomplete database over these relations.
	For every~$\mathbf{s} \subseteq \sigma$, $\mathbf{s} \neq \emptyset$, let:
	 \begin{itemize}
		\item $C_\mathbf{s}$ be the set of constants that occur in all relations of~$\mathbf{s}$ and in none of the others;~$c_\mathbf{s}$ be its size;
		\item~$N_\mathbf{s}$ be the set of nulls that occur in all relations of~$\mathbf{s}$ and in none of the others;~$n_\mathbf{s}$ be its size.
	 \end{itemize}
	We also define~$c$ as $\sum_{\emptyset \neq \mathbf{s} \subseteq \sigma} c_\mathbf{s}$.
	We can assume wlog that~$C_\mathbf{s} \subseteq \dom$ for all~$\emptyset \neq \mathbf{s} \subseteq \sigma$, otherwise we can simply remove from~$D$ the corresponding facts. 
	Let~$L \defeq 2^l-1$, and let~$\mathbf{s}_1,\ldots,\mathbf{s}_L$ be an arbitrary linear order of~$\{\mathbf{s} \subseteq \sigma \mid \mathbf{s} \neq \emptyset\}$ (for instance, by non-decreasing size).
	We will follow the same steps as in the last example.
	The following lemma is the generalization of Claim~\ref{claim:expl-disjoint}, and explains how we can guide the computation so that we do not count the same completion twice:

	\begin{lemma}
		\label{lem:disjoint}
		For a tuple~$(I_{\mathbf{s}_1},\ldots,I_{\mathbf{s}_L})$ of subsets of~$\dom$ satisfying~($\star$)
		\[I_{\mathbf{s}} \subseteq (\dom \setminus (C \cup \bigcup_{\substack{\emptyset \neq \mathbf{s'}\subseteq \sigma \\ \mathbf{s'} \neq \mathbf{s}}} I_{\mathbf{s'}})) \cup \bigcup_{\emptyset \neq \mathbf{s'}\subsetneq \mathbf{s}} C_\mathbf{s'}\]
		for every~$\mathbf{s} \in (\mathbf{s}_1,\ldots,\mathbf{s}_L)$ (in other words, all the sets~$I_\mathbf{s}$ are mutually disjoint subsets of~$\dom$, and a set~$I_\mathbf{s}$ can only contain a constant~$b\in C$ if~$b$ is in one of the sets~$C_\mathbf{s'}$ for which~$\mathbf{s'}$ is striclty included in~$\mathbf{s}$),
		let us define~$P(I_{\mathbf{s}_1},\ldots,I_{\mathbf{s}_L})$ to be the complete database consisting of the following facts, for every~$\emptyset \neq \mathbf{s} \subseteq \sigma$:
		\begin{itemize}
			\item $R(a)$ for every~$R \in \mathbf{s}$ and~$a \in I_\mathbf{s}$ and~$a \in C_\mathbf{s} \setminus \bigcup_{\mathbf{s} \supsetneq \mathbf{s'}} I_\mathbf{s'}$
		\end{itemize}
		Then, for every two such tuples~$(I_{\mathbf{s}_1},\ldots,I_{\mathbf{s}_L})$ and~$(I'_{\mathbf{s}_1},\ldots,I'_{\mathbf{s}_L})$ satisfying ($\star$) and that are distinct, we have that~$P(I_{\mathbf{s}_1},\ldots,I_{\mathbf{s}_L}) \neq P(I'_{\mathbf{s}_1},\ldots,I'_{\mathbf{s}_L})$.
	\end{lemma}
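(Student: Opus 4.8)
The plan is to prove the injectivity claim by exhibiting an explicit \emph{left inverse}: for tuples satisfying~($\star$), I will read each set $I_\mathbf{s}$ back off directly from the complete database $P(I_{\mathbf{s}_1},\ldots,I_{\mathbf{s}_L})$. This is the natural generalization of the case analysis in Claim~\ref{claim:expl-disjoint}, but organizing it as an inversion — rather than as a hunt for a ``largest set on which the two tuples differ'' — keeps the bookkeeping manageable once original constants enter the picture. The central object is the \emph{profile} of a domain element: for a complete database $P$ over $\sigma$ and $a \in \dom$, set $\mathrm{prof}_P(a) \defeq \{R \in \sigma \mid R(a) \in P\}$.

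First I would compute $\mathrm{prof}_P(a)$ explicitly for $P = P(I_{\mathbf{s}_1},\ldots,I_{\mathbf{s}_L})$. Condition~($\star$) supplies two facts used throughout: the sets $I_\mathbf{s}$ are pairwise disjoint, so $a$ lies in at most one of them; and $I_\mathbf{s}$ may contain a constant of $C$ only if that constant lies in some $C_{\mathbf{s'}}$ with $\mathbf{s'} \subsetneq \mathbf{s}$, so in particular $a \in C_\mathbf{s}$ forces $a \notin I_\mathbf{s}$. Unwinding the definition of $P$, I claim that $\mathrm{prof}_P(a) = \mathbf{s}$ when $a \in I_\mathbf{s}$ (the unique such $\mathbf{s}$), that $\mathrm{prof}_P(a) = \mathbf{s}$ when $a \in C_\mathbf{s}$ and $a$ lies in no $I_\mathbf{t}$, and that $\mathrm{prof}_P(a) = \emptyset$ otherwise. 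The one delicate verification is the case where $a \in I_\mathbf{s}$ while simultaneously $a \in C_{\mathbf{s''}}$ for some $\mathbf{s''} \subsetneq \mathbf{s}$: here the $C_{\mathbf{s''}}$-contribution to $P$ is switched off, because the defining condition $a \in C_{\mathbf{s''}} \setminus \bigcup_{\mathbf{s'} \supsetneq \mathbf{s''}} I_{\mathbf{s'}}$ fails (since $a \in I_\mathbf{s}$ with $\mathbf{s} \supsetneq \mathbf{s''}$), so the larger set $\mathbf{s}$ alone determines the profile.

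Granting this computation, the inversion is immediate: I would show $I_\mathbf{s} = \{a \in \dom \mid \mathrm{prof}_P(a) = \mathbf{s}\} \setminus C_\mathbf{s}$ for every nonempty $\mathbf{s} \subseteq \sigma$. The inclusion $\subseteq$ is the profile computation together with the implication $a \in C_\mathbf{s} \Rightarrow a \notin I_\mathbf{s}$ from~($\star$); the inclusion $\supseteq$ is the contrapositive, since a profile equal to $\mathbf{s}$ can arise only from $a \in I_\mathbf{s}$ or from $a \in C_\mathbf{s}$, and subtracting $C_\mathbf{s}$ kills the latter. Since this formula recovers each $I_\mathbf{s}$ from $P$ alone, two tuples satisfying~($\star$) that produce the same database must induce the same profile function, hence the same $I_\mathbf{s}$'s, and so must be equal — which is exactly the statement of the lemma. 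The main obstacle throughout is the interaction between the original constants and the values spanned by nulls, precisely the phenomenon of a constant of $C_{\mathbf{s''}}$ being ``absorbed'' into a strictly larger profile set $I_\mathbf{s}$; it is the constant-restriction clause of~($\star$) that makes the profile well defined and the inverse unambiguous, and carefully discharging this interaction is where the real work lies.
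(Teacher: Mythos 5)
Your proof is correct, and it takes a genuinely different route from the paper's. The paper proves the lemma by contradiction, localized at a single element: it picks $a \in I_\mathbf{s} \setminus I'_\mathbf{s}$, shows $P$ contains exactly the facts $\{R(a) \mid R \in \mathbf{s}\}$, and then must rule out that $P'$ contains exactly these same facts even though $a \notin I'_\mathbf{s}$ and $a \notin C_\mathbf{s}$; the crux there is a covering argument — such facts could only arise from clauses indexed by sets $\mathbf{s'}_j \subsetneq \mathbf{s}$ whose union is $\mathbf{s}$, which forces two incomparable $\mathbf{s'}_{j_1},\mathbf{s'}_{j_2}$, and ($\star$) makes $I'_{\mathbf{s'}_{j_1}} \cup C_{\mathbf{s'}_{j_1}}$ and $I'_{\mathbf{s'}_{j_2}} \cup C_{\mathbf{s'}_{j_2}}$ disjoint, contradicting that both contain $a$. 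You avoid that covering step entirely by establishing something stronger up front: under ($\star$), at most one clause of the definition of $P$ ever contributes facts about a given element (two $I$-clauses are excluded by mutual disjointness, two $C$-clauses because the sets $C_\mathbf{t}$ partition the constants of $D$, and an $I$-clause together with a $C$-clause because absorption switches the latter off — the ``delicate verification'' you flag, which is exactly right). Hence profiles are never assembled as unions of smaller contributions, and the explicit inverse $I_\mathbf{s} = \{a \in \dom \mid \mathrm{prof}_P(a) = \mathbf{s}\} \setminus C_\mathbf{s}$ recovers the tuple from the database alone, giving injectivity immediately. What your route buys: the reconstruction formula is precisely the map the paper uses for surjectivity in Lemma~\ref{lem:have-all} (there written $I_\mathbf{s} \defeq D_\mathbf{s} \setminus C_\mathbf{s}$), so your single profile computation in effect proves both lemmas at once and makes the bijection between ($\star$)-tuples and candidate completions explicit; what the paper's route buys is locality, since it only analyzes the one element where the two tuples differ, at the price of the trickier incomparability argument. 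One cosmetic remark: you silently (and correctly) read the definition of $P$ as $a \in I_\mathbf{s} \cup \bigl(C_\mathbf{s} \setminus \bigcup_{\mathbf{s'} \supsetneq \mathbf{s}} I_{\mathbf{s'}}\bigr)$, i.e., a disjunction with the subtracted sets indexed by strictly \emph{larger} $\mathbf{s'}$; that is indeed the intended reading (compare Claim~\ref{claim:expl-disjoint}), despite the statement's literal ``and'' and its inverted subscript.
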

	\begin{proof}
		Let us write~$P = P(I_{\mathbf{s}_1},\ldots,I_{\mathbf{s}_L})$
		and~$P' = P(I'_{\mathbf{s}_1},\ldots,I'_{\mathbf{s}_L})$.
		Assume that~$P=P'$, and let us show
		that~$(I_{\mathbf{s}_1},\ldots,I_{\mathbf{s}_L}) =
		(I'_{\mathbf{s}_1},\ldots,I'_{\mathbf{s}_L})$.  Assume by way
		of contradiction that for some~$\emptyset \neq \mathbf{s}
		\subseteq \sigma$ we have~$I_\mathbf{s} \neq I'_\mathbf{s}$. Then
		(wlog) there exists~$a \in I_\mathbf{s} \setminus
		I'_\mathbf{s}$. By the definition of~$P$, we have that~$P$
		contains all the facts~$R(a)$ for~$R \in \mathbf{s}$. Let us
		show that~$P$ does not contain any fact~$R(a)$
		for~$R\notin\mathbf{s}$. Otherwise, assume that~$P$
		contains~$R(a)$ with~$R \notin \mathbf{s}$. Then there
		exists~$\mathbf{s'} \subseteq \sigma$ such that~$R \in
		\mathbf{s'}$ and such that~$a \in I_{\mathbf{s'}} \cup (
		C_\mathbf{s'} \setminus \bigcup_{\mathbf{s'} \supsetneq
		\mathbf{s''}} I_\mathbf{s''})$.
		Since~$\mathbf{s}$ does not contain~$R$ while~$\mathbf{s'}$
		does, we have~$\mathbf{s'} \not\subseteq \mathbf{s}$. But then
		by~($\star$) we have that~$I_\mathbf{s}$ and~$I_{\mathbf{s'}}
		\cup  C_\mathbf{s'}$ are disjoint, which is a contradiction
		because~$a$ is supposed to be in both~$I_\mathbf{s}$
		and~$I_{\mathbf{s'}} \cup ( C_\mathbf{s'} \setminus
		\bigcup_{\mathbf{s'} \supsetneq \mathbf{s''}}
		I_\mathbf{s''})$.  Therefore, it is indeed the case that~$P$
		does not contain any fact~$R(a)$ for~$R\notin\mathbf{s}$.  Now,
		if~$P'$ contains a fact~$R(a)$ for some~$R \notin \sigma$ then
		we are done since this would imply~$P \neq P'$, a
		contradiction. Hence we can assume that~$P'$ does not contain any
		fact~$R(a)$ for~$R \notin \sigma$.  We will now prove
		that~$P'$ does not contain all the facts~$R(a)$ for~$R \in
		\sigma$, thus establishing a contradiction (because~$P$ does, so we would have~$P
		\neq P'$) and concluding this proof.  Assume by contradiction
		that~$P'$ contains all the facts~$R(a)$ for~$R \in \mathbf{s}$.
		First of all, observe that we have~$a \notin C_\mathbf{s}$
		because by~($\star$) we have that~$I_\mathbf{s}$
		and~$C_\mathbf{s}$ are disjoint, and we know that~$a \in
		I_\mathbf{s}$. Hence, the only way in which~$P'$ could contain
		all the facts~$R(a)$ for~$R \in \mathbf{s}$ is if there
		exist~$\mathbf{s'}_1,\ldots,\mathbf{s'}_k$ with~$k \geq 1$
		and~$\mathbf{s'}_j \subsetneq \mathbf{s}$ for~$1 \leq j\leq k$
		such that~$\bigcup_{1 \leq j \leq k} \mathbf{s'}_j =
		\mathbf{s}$ and such that for every~$1 \leq j \leq k$ we have
		that~(i) $a \in I_{\mathbf{s'}_j} \cup ( C_{\mathbf{s'}_j}
		\setminus \bigcup_{\mathbf{s'}_j \supsetneq \mathbf{s''}} I_\mathbf{s''})$.  Observe that there must
		exist~$1 \leq j_1,j_2 \leq k$ such that~$\mathbf{s'}_{j_1}$
		and~$\mathbf{s'}_{j_2}$ are incomparable by inclusion
		(otherwise, since all~$\mathbf{s}_j$ are strictly included
		in~$\mathbf{s}$, their union could not be equal
		to~$\mathbf{s}$). Also observe that by~($\star$) we have that
		the sets~$I_{\mathbf{s'}_{j_1}} \cup C_{\mathbf{s'}_{j_1}}$ and
		$I_{\mathbf{s'}_{j_2}} \cup C_{\mathbf{s'}_{j_2}}$ must be
		disjoint. But then~(i) applied to~$j_1$ and~$j_2$ gives a
		contradiction (namely, these two sets are not disjoint since
		they both contain~$a$).  This finishes the proof.
	\end{proof}

	This next Lemma generalizes Claim~\ref{claim:expl-have-all} and tells us that by summing over all such tuples~$(I_{\mathbf{s}_1},\ldots,I_{\mathbf{s}_L})$ we cannot miss a completion of~$D$:
	\begin{lemma}
		\label{lem:have-all}
		Let~$D'$ be a completion of~$D$. Then there exists a tuple~$(I_{\mathbf{s}_1},\ldots,I_{\mathbf{s}_L})$ of subsets of~$\dom$ satisfying~($\star$) such that~$D' = P(I_{\mathbf{s}_1},\ldots,I_{\mathbf{s}_L})$.
	\end{lemma}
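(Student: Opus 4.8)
The plan is to define the tuple $(I_{\mathbf{s}_1},\ldots,I_{\mathbf{s}_L})$ directly from $D'$, generalizing the construction used in Claim~\ref{claim:expl-have-all}. For a constant $a$ occurring in $D'$, let its \emph{signature} be $\mathbf{t}(a) \defeq \{R \in \sigma \mid a \in D'(R)\}$, which is a nonempty subset of $\sigma$. I would then set, for every nonempty $\mathbf{s} \subseteq \sigma$, $I_\mathbf{s} \defeq \{a \in \dom \mid \mathbf{t}(a) = \mathbf{s}\} \setminus C_\mathbf{s}$. Since $D'$ is a completion of $D$, every constant of $D'$ either already occurs in $D$ (hence lies in $\dom$ by the wlog assumption $C_\mathbf{s} \subseteq \dom$) or is the value of some null (hence also lies in $\dom$), so each $I_\mathbf{s}$ is a subset of $\dom$ as required.

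First I would check that this tuple satisfies $(\star)$. The sets $I_\mathbf{s}$ are pairwise disjoint, because every constant has a unique signature and $a$ can lie only in $I_{\mathbf{t}(a)}$; this already yields $a \notin \bigcup_{\mathbf{s'} \neq \mathbf{s}} I_{\mathbf{s'}}$ for $a \in I_\mathbf{s}$. It then remains to control which constants of $C$ a set $I_\mathbf{s}$ may contain. Suppose $a \in I_\mathbf{s} \cap C$; as the blocks $C_{\mathbf{s'}}$ partition $C$, there is a unique $\mathbf{s'}$ with $a \in C_{\mathbf{s'}}$. Then the facts $R(a)$ for $R \in \mathbf{s'}$ belong to $D$ and are preserved in the completion $D'$, so $\mathbf{t}(a) \supseteq \mathbf{s'}$, i.e.\ $\mathbf{s} \supseteq \mathbf{s'}$; moreover $\mathbf{s'} \neq \mathbf{s}$ since $a \notin C_\mathbf{s}$ by construction. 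Hence $\mathbf{s'} \subsetneq \mathbf{s}$, which is exactly what $(\star)$ permits, so $(\star)$ holds and $P(I_{\mathbf{s}_1},\ldots,I_{\mathbf{s}_L})$ is well defined.

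It then remains to prove the equality $D' = P(I_{\mathbf{s}_1},\ldots,I_{\mathbf{s}_L})$, which I would establish by double inclusion using the two types of facts produced by $P$. For $P \subseteq D'$, a fact $R(a)$ of $P$ comes either from $a \in I_\mathbf{s}$ with $R \in \mathbf{s}$, in which case $\mathbf{t}(a) = \mathbf{s} \ni R$ gives $R(a) \in D'$, or from $a \in C_\mathbf{s}$ with $R \in \mathbf{s}$, in which case $R(a) \in D \subseteq D'$. For $D' \subseteq P$, take $R(a) \in D'$ and let $\mathbf{s} = \mathbf{t}(a) \ni R$; if $a \notin C_\mathbf{s}$ then $a \in I_\mathbf{s}$ by definition and $R(a)$ is produced, while if $a \in C_\mathbf{s}$ then $a \notin I_{\mathbf{s'}}$ for every $\mathbf{s'} \supsetneq \mathbf{s}$ (the signature of $a$ is exactly $\mathbf{s}$), so $a$ survives in the $C_\mathbf{s}$-part of $P$ and $R(a)$ is again produced. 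The step I expect to require the most care is the case of a constant whose signature in $D'$ has \emph{strictly grown} relative to $D$, i.e.\ $a \in C_{\mathbf{s'}}$ with $\mathbf{t}(a) = \mathbf{s} \supsetneq \mathbf{s'}$ because some nulls were valuated to $a$: such a constant must be routed into $I_\mathbf{s}$ rather than counted among the original $C$-facts, and it is precisely to accommodate these constants that $(\star)$ allows $I_\mathbf{s}$ to contain elements of $C_{\mathbf{s'}}$ for $\mathbf{s'} \subsetneq \mathbf{s}$. Once this bookkeeping is checked, the equality follows and the lemma is proved.
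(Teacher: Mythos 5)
Your proof is correct and takes essentially the same approach as the paper: the paper's proof sets $I_\mathbf{s} \defeq D_\mathbf{s} \setminus C_\mathbf{s}$, where $D_\mathbf{s}$ is the set of constants occurring in $D'$ in all relations of $\mathbf{s}$ and in none of the others, which is exactly your signature-based definition $\{a \mid \mathbf{t}(a) = \mathbf{s}\} \setminus C_\mathbf{s}$. The paper then dismisses the verification of~($\star$) and of $D' = P(I_{\mathbf{s}_1},\ldots,I_{\mathbf{s}_L})$ as ``routine to check''; your write-up simply carries out that routine check (disjointness via uniqueness of signatures, the $\mathbf{s'} \subsetneq \mathbf{s}$ condition via preservation of ground facts, and the double inclusion) correctly.
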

	\begin{proof}
		For~$\emptyset \neq \mathbf{s} \subseteq \sigma$, let us define~$D_\mathbf{s}$ to be the set of constants that occur in all relation of~$\mathbf{s}$ and in none of the others. Define the set~$I_\mathbf{s}$ for~$\emptyset \neq \mathbf{s} \subseteq \sigma$ as follows: $I_\mathbf{s} \defeq D_\mathbf{s} \setminus C_\mathbf{s}$.
		It is then routine to check that~$(I_{\mathbf{s}_1},\ldots,I_{\mathbf{s}_L})$ satisfies~($\star$) and is such that~$D' = P(I_{\mathbf{s}_1},\ldots,I_{\mathbf{s}_L})$.
	\end{proof}

	Lemma~\ref{lem:disjoint} and~\ref{lem:have-all} allows us to express the result as

	\begin{equation}
		\sum_{I_{\mathbf{s}_1} \subseteq \dom \setminus C} \ldots 
		\sum_{I_{\mathbf{s}_j} \subseteq (\dom \setminus (C \cup \bigcup_{1 \leq k < j} I_{\mathbf{s}_k})) \cup \bigcup_{\emptyset \neq \mathbf{s'}\subsetneq \mathbf{s}} C_\mathbf{s'}} \ldots
		\sum_{I_{\mathbf{s}_L} \subseteq \dom \setminus (C_{\mathbf{s}_L} \cup \bigcup_{1 \leq k < L} I_{\mathbf{s}_k}) }
		\checkp(I_{\mathbf{s}_1},\ldots,I_{\mathbf{s}_L})
	\end{equation}

	where~$\checkp(I_{\mathbf{s}_1},\ldots,I_{\mathbf{s}_L}) \in \{0,1\}$ is defined by~$\checkp(I_{\mathbf{s}_1},\ldots,I_{\mathbf{s}_L})\defeq \begin{cases}
		1 \text{ if } P(I_{\mathbf{s}_1},\ldots,I_{\mathbf{s}_L}) \text{ is a completion of } D \text{ that satisfies } q \\
		0 \text{ otherwise}
	\end{cases}$.
	
	As such we cannot evaluate this expression in \ptime.
	The next step is to show that the value of~$\checkp(I_{\mathbf{s}_1},\ldots,I_{\mathbf{s}_L})$ only depends on~$(|I_{\mathbf{s}_1}|,\ldots,|I_{\mathbf{s}_L}|)$, which would allow us to rewrite the result as
	\begin{equation}
		\label{eq:final}
		\sum_{0 \leq i_{\mathbf{s}_1},\ldots,i_{\mathbf{s}_L} \leq d} \prod_{1 \leq j \leq L} 
		\binom{d - c - \sum_{1\leq k < j} i_{\mathbf{s}_k} + \sum_{\emptyset \neq \mathbf{s'}\subsetneq \mathbf{s}} C_\mathbf{s'}}{i_{\mathbf{s}_j}}
		\times \checkp(i_{\mathbf{s}_1},\ldots,i_{\mathbf{s}_L})
	\end{equation}

	We give here the necessary and sufficient conditions for~$P(I_{\mathbf{s}_1},\ldots,I_{\mathbf{s}_L})$ to be a completion of~$D$ that satisfies~$q$.

	\begin{lemma}
	We have~$\checkp(I_{\mathbf{s}_1},\ldots,I_{\mathbf{s}_L}) = 1$ if and only if the following conditions hold:
		\begin{enumerate}
			\item for every basic singleton query~$C_i(x)$ of~$q$, letting~$\mathbf{s}$ be its sets of relation symbols,
				there exists~$\mathbf{s} \subseteq \mathbf{s'} \subseteq \sigma$ such that we have~$|I_\mathbf{s'}|\geq 1$ or~$c_\mathbf{s'}\geq 1$.
			\item for every~$\emptyset \neq \mathbf{s} \subseteq \sigma$, if~$n_\mathbf{s} \geq 1$ and
				$|\bigcup_{\mathbf{s'} \supseteq \mathbf{s}} C_\mathbf{s'} \cup \bigcup_{\mathbf{s'} \supsetneq \mathbf{s}} I_\mathbf{s'}| = 0$ then~$|I_\mathbf{s}| \neq 0$.
			\item consider the following system of equations, with integer variables between~$0$ and~$d$:
		\begin{itemize}
			\item for every two sets~$A,A'$ of subsets of~$\{\emptyset \neq \mathbf{s} \subseteq \sigma \}$, we have a variable~$z_{N_\mathbf{s}}^{A,A'}$ for every~$\mathbf{s} \in A$ and a variable~$z_{C_\mathbf{s}}^{A,A'}$ for every~$\mathbf{s} \in A'$.
				For instance if~$\sigma = \{R,S,T,U\}$ and if~$A = \{\{R,S\},\{S,T\}\}$ and~$A' = \{\{U\}\}$ we have the variables~$z^{\{\{R,S\},\{S,T\}\},\{\{U\}\}}_{N_{\{R,S\}}}$ and $z^{\{\{R,S\},\{S,T\}\},\{\{U\}\}}_{N_{\{S,T\}}}$ and $z^{\{\{R,S\},\{S,T\}\},\{\{U\}\}}_{C_{\{U\}}}$. The intuition is that we will use $z^{\{\{R,S\},\{S,T\}\},\{\{U\}\}}_{N_{\{R,S\}}}$ of the nulls in~$N_{\{R,S\}}$ and combine them with $z^{\{\{R,S\},\{S,T\}\},\{\{U\}\}}_{N_{\{S,T\}}}$ of the nulls in~$N_{\{S,T\}}$ and with $z^{\{\{R,S\},\{S,T\}\},\{\{U\}\}}_{C_{\{U\}}}$ of the constants in~$C_{\{U\}}$ in order to obtain constants in~$I_{\{R,S,T,U\}}$. Let us write~$V$ this set of variables.
				(we note here that we are using sligthly different notation than for the last warm-up example; this is for readability reasons only.)
			\item Now, for every~$\emptyset \neq \mathbf{s} \subseteq \sigma$ we have the constraint
				\[\sum_{z_{N_\mathbf{s}}^{A,A'} \in V} z_{N_\mathbf{s}}^{A,A'} \leq n_\mathbf{s}\]
				as well as the constraint
				\[\sum_{z_{C_\mathbf{s}}^{A,A'} \in V} z_{N_\mathbf{s}}^{A,A'} \leq c_\mathbf{s}\]
				intuitively expressing that we do not use more nulls and constants than there are available.
			\item for every~$\emptyset \neq \mathbf{s} \subseteq \sigma$ we have a constraint

				\[\sum_{\substack{A,A' \subseteq \{\emptyset \neq \mathbf{s} \subseteq \sigma\}\\A\cup A' = \mathbf{s}}} \min_{z_{*}^{A,A'} \in V} z_{*}^{A,A'}
				\geq I_\mathbf{s}
				\]
								intuitively meaning that we have allocated the groups of nulls and constants in a way that allows us to fill the set~$I_\mathbf{s}$.
		\end{itemize}
		Then this system of equations must have a solution.
		\end{enumerate}
	\end{lemma}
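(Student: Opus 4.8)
The plan is to prove the two directions of the biconditional separately, isolating in each whether $P \defeq P(I_{\mathbf{s}_1},\dots,I_{\mathbf{s}_L})$ satisfies $q$ (condition~(1)) from whether $P$ is realizable as a completion of $D$ (conditions~(2) and~(3)). I would first record a description of where each element of $\dom$ sits in $P$: using block disjointness and property~($\star$), and noting that absorbing a constant of some $C_{\mathbf{s'}}$ into a strictly larger $I_{\mathbf{t}}$ only enlarges the set of relations in which it appears, one gets that a value $a$ occurs in every relation of a set $\mathbf{s}$ in $P$ if and only if $a \in I_{\mathbf{s'}}$ or $a \in C_{\mathbf{s'}}$ for some $\mathbf{s'} \supseteq \mathbf{s}$. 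Condition~(1) then follows at once: a basic singleton $C_i(x)$ of $q$ over relation set $\mathbf{s}$ holds in $P$ exactly when some value appears in all of $\mathbf{s}$, i.e.\ when there is $\mathbf{s'} \supseteq \mathbf{s}$ with $|I_{\mathbf{s'}}| \geq 1$ or $c_{\mathbf{s'}} \geq 1$.

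It remains to characterize realizability, that is, the existence of a valuation $\nu$ with $\nu(D) = P$; this is where the bulk of the work goes, and I would argue it semi-formally, as is done for the binary Claim~\ref{claim:expl-conditions}. The inclusion $\nu(D) \subseteq P$ forces every null of a block $N_\mathbf{s}$ to be sent to a value lying in all relations of some superset of $\mathbf{s}$, while $\nu(D) \supseteq P$ forces every ``new'' value of each $I_\mathbf{s}$ to be produced by the nulls and promoted constants. For the forward direction I would take such a $\nu$, assume without loss of generality that it uses no redundant nulls, and read off, for each realized value $a \in I_\mathbf{s}$, the group of blocks that $\nu$ combines onto $a$: the null-blocks $N_\mathbf{t}$ that send a null to $a$, together with the constant-block $C_{\mathbf{s'}}$ if $a$ is a promoted constant, whose relation-sets must union to exactly $\mathbf{s}$. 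Counting how many nulls of each block are used in each group type $(A,A')$ yields values $z_{N_\mathbf{s}}^{A,A'}, z_{C_\mathbf{s}}^{A,A'}$; the budget inequalities $\sum z_{N_\mathbf{s}}^{A,A'} \leq n_\mathbf{s}$ and $\sum z_{C_\mathbf{s}}^{A,A'} \leq c_\mathbf{s}$ hold because each null and each constant is used once, and the $\min$-inequalities $\sum_{A \cup A' = \mathbf{s}} \min z_*^{A,A'} \geq |I_\mathbf{s}|$ hold because realizing a value in all relations of $\bigcup A \cup \bigcup A'$ costs one witness from every block of the group, so a group contributes at most its bottleneck allocation, yet together the groups cover all of $I_\mathbf{s}$. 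Condition~(2) is then the degenerate case: a block $N_\mathbf{s}$ with no legal ``dump'' target, i.e.\ with $\bigcup_{\mathbf{s'} \supseteq \mathbf{s}} C_{\mathbf{s'}} \cup \bigcup_{\mathbf{s'} \supsetneq \mathbf{s}} I_{\mathbf{s'}} = \emptyset$, leaves $I_\mathbf{s}$ as the only possible image for its nulls, forcing $|I_\mathbf{s}| \neq 0$ whenever $n_\mathbf{s} \geq 1$.

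For the converse I would turn a solution of the system, together with conditions~(1) and~(2), into an explicit $\nu$. Processing the $I_\mathbf{s}$ in some order, for each group type $(A,A')$ with $\bigcup A \cup \bigcup A' = \mathbf{s}$ I reserve $\min z_*^{A,A'}$ not-yet-used nulls from each block of $A$ and constants from each block of $A'$ and pour them onto a fresh batch of that many still-uncovered elements of $I_\mathbf{s}$; the $\geq |I_\mathbf{s}|$ constraints guarantee each $I_\mathbf{s}$ is fully covered, and the budget inequalities guarantee that this greedy peeling never demands more nulls or constants than a block contains, so the reservations across all groups can be made disjoint. Any null left unused is sent to an element of the dump set, which condition~(2) makes available, and one checks that the resulting $\nu$ satisfies $\nu(D) = P$. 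I expect the genuine obstacle to be exactly this simultaneous-allocation step in the converse: the $\min$-terms and budget constraints together form a transportation/flow-feasibility condition, and proving that a feasible solution always lifts to a consistent disjoint assignment of nulls and constants over all the $I_\mathbf{s}$ at once---rather than merely block by block---is the delicate point, which (as in the paper's binary warm-up) I would argue by a careful greedy ordering of the blocks rather than through a fully formal flow argument.
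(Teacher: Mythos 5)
Your overall strategy is exactly the paper's: the paper proves this lemma with a one-line deferral to the informal argument of Claim~\ref{claim:expl-conditions} (the binary warm-up), and your forward direction --- reading off a group type $(A,A')$ for each realized element of each $I_\mathbf{s}$ from a witnessing valuation, then checking the budget and $\min$ inequalities --- together with your derivation of condition~(1) from the description of which elements of $\dom$ occur in which relations of $P$, is a faithful (and somewhat more explicit) reconstruction of that argument, and is sound.

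The genuine gap is in your converse, and it is not the one you flagged. You ``reserve $\min z_{*}^{A,A'}$ not-yet-used nulls from each block of $A$ and constants from each block of $A'$ and pour them onto a fresh batch of that many still-uncovered elements of $I_\mathbf{s}$''. Nulls can be poured onto any value, but constants cannot: a group whose type has $A' \neq \emptyset$ can only realize elements of $I_\mathbf{s}$ that \emph{are} constants of the blocks in $A'$, and neither the system nor conditions~(1)--(2) constrain the intersections $I_\mathbf{s} \cap C_{\mathbf{s'}}$. Concretely, take $q = R(x) \land S(y)$, $D = \{R(a), S(\bot)\}$ with uniform domain containing $\{a,b\}$, and the tuple of sets $I_{\{R\}} = I_{\{S\}} = \emptyset$, $I_{\{R,S\}} = \{b\}$, which satisfies~($\star$). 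Conditions~(1) and~(2) hold, and the system is satisfiable by setting $z_{N_{\{S\}}}^{A,A'} = z_{C_{\{R\}}}^{A,A'} = 1$ for the group type $A = \{\{S\}\}$, $A' = \{\{R\}\}$ (both budgets are $1$, and the $\min$ is $1 \geq |I_{\{R,S\}}|$); yet $P = \{R(a), R(b), S(b)\}$ is not a completion of $D$, since every completion of $D$ has exactly two facts, so $\checkp = 0$. Hence the ``if'' direction of the lemma fails as stated, and your greedy construction breaks precisely at the pouring step. Note also that replacing $b$ by $a$ gives a tuple with the same sizes for which $\checkp = 1$, so $\checkp$ is not even a function of the block sizes, which is what the surrounding argument (the final polynomial-size sum) requires. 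Your stated worry --- making reservations disjoint across groups --- is actually unproblematic given the budget constraints; the real failure is that constants, unlike nulls, are not interchangeable tokens. In fairness, the paper's own proof, and the proof of Claim~\ref{claim:expl-conditions} it points to (which ends by leaving ``small gaps'' to the reader), glosses over exactly the same point; a repair would have to track, for each $\mathbf{s}$ and each $\mathbf{s'} \subsetneq \mathbf{s}$, the quantity $|I_\mathbf{s} \cap C_{\mathbf{s'}}|$ rather than just $|I_\mathbf{s}|$, and tie the $z_{C_{\mathbf{s'}}}^{A,A'}$ variables to these refined parameters.
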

	\begin{proof}
		The idea is the same as in Claim~\ref{claim:expl-conditions}. The only difference is that we added condition (1), which ensures that the guessed completion indeed satisfies the query.
	\end{proof}

	As in the last warm-up example, this implies that the value of~$\checkp(I_{\mathbf{s}_1},\ldots,I_{\mathbf{s}_L})$ only depends on~$(|I_{\mathbf{s}_1}|,\ldots,|I_{\mathbf{s}_L}|)$ and can be computed in FP (by testing all assignments of the~$z_*^*$ variables; keep in mind that the schema is fixed so there are only a fixed number of such variables).
	But then we can compute the result in FP by evaluating the expression~\ref{eq:final}, which finishes the proof. 
\end{toappendix}

The proof of the tractability part of this theorem is combinatorial and very technical, and we present it in Appendix~\ref{apx:ucc-naive-fp}, where we also give several examples to provide the main intuitions.
Note that, as in the last section, we do not claim membership in~\shp. However, 
and also as in the last section, 
we can show that these problems are in \shp\ for Codd tables, which allows us to obtain our last dichotomy for exact counting.

\begin{toappendix}
	\subsection{Proof of Theorem~\ref{thm:ucc-codd-fp}}
\end{toappendix}
\begin{theoremrep}[Dichotomy]
	\label{thm:ucc-codd-fp}
	Let~$q$ be an \sjfbcq.
	If $R(x,x)$ or~$R(x,y)$ is a pattern of $q$, then 
$\cucountcompls(q)$ is \shp-complete. Otherwise, this problem is in~\fp.
\end{theoremrep}
\begin{proof}
	Hardness follows from Theorem~\ref{thm:ucc-naive-fp}, while membership in \shp\ follows from the result proven in Appendix~\ref{apx:countcompls-codd-sharp-p}.
\end{proof}

\section{Approximating the numbers of valuations and completions}
\label{sec:approx} 

As we saw in the previous sections, counting valuations and completions of an incomplete database are usually intractable problems. 
However, this does not necessarily rule out the existence of efficient approximation algorithms for such counting problems, in particular if some source of randomization is allowed. 
In this section, we investigate this question by focusing on the well-known notion of Fully Polynomial-time Randomized Approximation Scheme (FPRAS) for counting problems~\cite{jerrum1986random}.
Formally, 
let~$\Sigma$ be a finite alphabet and $f : \Sigma^* \to \mathbb{N}$ be a counting problem. 
Then~$f$
is said to have an FPRAS if there is a randomized algorithm~$\mathcal{A} : \Sigma^* \times (0,1) \to \mathbb{N}$ and a polynomial~$p(u,v)$ such that, given~$x \in \Sigma^*$ and~$\epsilon \in (0,1)$,
algorithm~$\mathcal{A}$ runs in time~$p(|x|,\nicefrac{1}{\epsilon})$ and satisfies the following condition:
\begin{eqnarray*}
\Pr\big(|f(x) - \mathcal{A}(x,\epsilon)| \, \leq \, \epsilon f(x)\big) & \geq &\frac{3}{4}.
\end{eqnarray*}
Observe that the property of having an FPRAS is closed under polynomial-time parsimonious reductions, that is, if we have an FPRAS for a counting problem~$A$ and for counting problem~$B$ we have that~$B \pr A$, then we also have an FPRAS for~$B$.

In the following sections, we investigate the existence of FPRAS for the problems of counting valuations and completions of an incomplete database. 
We first deal with counting valuations in Section~\ref{sec:approx-countvals}, where we show a general condition under which
this problem has an FPRAS (which will apply, in particular, to all Boolean conjunctive queries). Then, in Section~\ref{sec:approx-countcompls}, we show that the situation is quite different for counting completions, as in most cases this problem does not admit an FPRAS.

\subsection{Approximating the number of valuations}
\label{sec:approx-countvals}

To prove the main result of this section, we need to consider the 
counting complexity class \spanl~\cite{alvarez1993very}. Given a finite alphabet~$\Sigma$, an \nl-transducer~$M$ over~$\Sigma$ is a nondeterministic Turing Machine with input and output alphabet~$\Sigma$, a read-only input tape, a write-only output tape (where the head is always moved to the right once a symbol in~$\Sigma$ is written in it, so that the output cannot be read by~$M$), and a work-tape of which, on input~$x$, only the first~$c \cdot \log(|x|)$ cells can be used for a fixed constant~$c > 0$ (so that the space used by~$M$ is logarithmic). Moreover,~$y \in \Sigma^*$ is said to be an output of~$M$ with input~$x$, if there exists an accepting run of~$M$ with input~$x$ such that~$y$ is the string in the output tape when~$M$ halts. Then a function $f: \Sigma^* \to \mathbb{N}$ is said to be in \spanl\ if there exists an \nl-transducer~$M$ over~$\Sigma$ such that for every~$x \in \Sigma^*$, the value~$f(x)$ is equal to the number of distinct outputs of~$M$ with input~$x$. 
In \cite{alvarez1993very}, it was proved that $\spanl \subseteq \shp$, and also that this inclusion is strict unless~$\nl = \np$.

Very recently, the authors of~\cite{arenas2019efficient} have shown 
that every problem in \spanl\ has an~FPRAS.
\begin{theorem}[{\cite[Corollary 3]{arenas2019efficient}}]
\label{thm:spanl-fpras}
	Every problem in \spanl\ has an FPRAS.
\end{theorem}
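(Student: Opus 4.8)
The plan is to reduce an arbitrary \spanl{} problem to the problem of counting, for a nondeterministic finite automaton $A$ and a length $N$ given in unary, the number of distinct length-$N$ words accepted by $A$ (the problem $\#\mathrm{NFA}$), and then to exhibit an FPRAS for $\#\mathrm{NFA}$. Fix a \spanl{} problem witnessed by an \nl-transducer $M$ and an input $x$. Since $M$ uses only $O(\log|x|)$ work space, it has polynomially many configurations on $x$, and every accepting run writes a string of length at most some fixed polynomial $p(|x|)$. I would build in polynomial time an NFA $A_x$ whose states are the configurations of $M$ on $x$, putting a transition labeled $a \in \Sigma$ (resp.\ an $\epsilon$-transition) between two configurations whenever $M$ can pass from one to the other while writing $a$ (resp.\ while writing nothing), and taking the initial and accepting configurations as initial and accepting states. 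By construction $L(A_x)$ is exactly the set of outputs of $M$ on $x$, so $f(x) = |L(A_x)|$. To pass from counting words of all admissible lengths to counting words of a single length, I would pad with a fresh symbol so that every accepted word is stretched to length exactly $N \defeq p(|x|)$; then $f(x)$ equals the number of distinct length-$N$ words of the padded automaton. As the property of having an FPRAS is closed under the polynomial-time parsimonious reduction thus obtained, it suffices to give an FPRAS for $\#\mathrm{NFA}$.

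The core of the argument, and its main obstacle, is the FPRAS for $\#\mathrm{NFA}$ itself. The difficulty is that $A$ is nondeterministic: a single accepted word may be produced along many distinct runs, so the number of \emph{distinct} accepted words is not the (easily computed) number of accepting paths, and in fact $\#\mathrm{NFA}$ is \shp-hard. The plan is to process words by increasing length. For $0 \le k \le N$ and a state $q$, let $W_k(q)$ be the set of distinct words of length $k$ that drive $A$ from an initial state to $q$; the target is $|\bigcup_{q \in F} W_N(q)|$, where $F$ is the set of accepting states. I would maintain, by induction on $k$ and for every state $q$, a good multiplicative estimate of $|W_k(q)|$ together with a data structure supporting approximately uniform sampling from $W_k(q)$. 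To pass from level $k$ to level $k+1$, note that each $W_{k+1}(q)$ is a union, over predecessors $q'$ of $q$ under a letter, of one-letter extensions of the $W_k(q')$, and these unions overlap precisely because of nondeterminism. I would estimate the size of such a union, and sample from it, by a Karp--Luby--style coverage estimator: draw from the individual sets $W_k(q')$ using the level-$k$ samplers and correct for multiplicity via membership tests, yielding both a size estimate and an approximately uniform sampler for $W_{k+1}(q)$.

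The hard part is controlling the propagation of error across the $N = \mathrm{poly}(|x|)$ levels. Each inductive step contributes both a sampling error and a statistical error from the union estimator, and I must keep every estimate within a $(1 \pm \delta)$ multiplicative factor for a $\delta$ that shrinks polynomially, so that the accumulated error over all levels stays below the target $\epsilon$ with probability at least $3/4$. This demands (i) a careful analysis proving that approximate uniformity together with approximate counts at level $k$ suffice to reestablish the same two guarantees at level $k+1$ with only a polynomial increase in sample size, and (ii) standard median-of-means amplification to drive the per-step failure probability low enough to survive a union bound over all states and all $N$ levels. Crucially, the fact that outputs — and hence $N$ — are only polynomially long is a direct consequence of the logarithmic space bound of $M$, and is exactly what keeps this level-by-level scheme polynomial-time; this is where the \spanl{} hypothesis is essential and where the argument would break for \spanp. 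Combining the FPRAS for $\#\mathrm{NFA}$ with the parsimonious reduction of the first paragraph then yields an FPRAS for the original \spanl{} problem, completing the proof.
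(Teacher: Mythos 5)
First, a framing remark: the paper you are working against does not prove this statement at all --- Theorem~\ref{thm:spanl-fpras} is imported verbatim from the cited work of Arenas, Croquevielle, Jayaram and Riveros, so the real comparison is against that paper's proof. Your first paragraph correctly reconstructs its overall route: turn a \spanl\ function into a \#NFA instance by taking the configuration graph of the \nl-transducer as the state set of an NFA whose transitions are labelled by the symbols written (eliminating $\epsilon$-moves in the standard way), then pad with a fresh symbol so that counting all outputs becomes counting accepted words of one fixed polynomial length $N$; this reduction is parsimonious, and FPRAS existence is closed under such reductions. Two small caveats: the polynomial bound on output length requires the (standard, but worth stating) convention that \nl-transducers halt on every computation branch, and your closing claim about where \spanl\ is essential is slightly misplaced --- outputs of \spanp\ machines are also polynomially long; what breaks for \spanp\ is that the configuration graph, i.e., the NFA itself, would be exponentially large.

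The genuine gap is the FPRAS for \#NFA, which you treat as an exercise in combining a Karp--Luby union estimator with median-of-means amplification. It is not, and this is precisely why approximate \#NFA counting was open for decades (Kannan, Sweedyk and Mahaney obtained only a quasi-polynomial-time randomized approximation in 1995); the \#NFA FPRAS is the central technical contribution of the cited paper. Concretely, your inductive step faces a dilemma that the proposal never resolves. If samples from $W_{k+1}(q)$ are produced by rejection sampling that invokes the level-$k$ samplers on demand, then one level-$(k{+}1)$ sample costs up to $m$ level-$k$ samples in expectation (where $m$ bounds the number of overlapping sets in the union, coming from the coverage ratio $|U|/\sum_i |A_i| \geq 1/m$), so a single level-$N$ sample costs on the order of $m^{N}$ basic samples --- exponential, not polynomial. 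If instead you store a polynomial-size pool of samples per (state, level) pair and reuse it --- which any polynomial-time algorithm must do --- then the level-$(k{+}1)$ estimates and samples become correlated functions of a single shared pool, the independence assumptions underlying both the Karp--Luby analysis and ``standard median-of-means amplification'' fail, and a union bound over states and levels no longer controls the error. Your clause ``(i) a careful analysis \ldots with only a polynomial increase in sample size'' is exactly where the theorem lives: a polynomial increase \emph{per level} compounds to an exponential total over $N$ levels, so one needs a fixed polynomial budget for all levels simultaneously, together with a concentration argument that remains valid under sample reuse. Designing sketches and an error-propagation analysis that survive this reuse is what the cited paper actually does, and no assembly of off-the-shelf estimators is known to substitute for it.
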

By using this result, we can give a general condition on a Boolean query~$q$ under which~$\countvals(q)$ has an FPRAS, as this condition ensures that~$\countvals(q)$ is in \spanl. More precisely,
a Boolean query~$q$ is said to be \emph{monotone} if for every pair of (complete) databases~$D$, $D'$ such that~$D \subseteq D'$, if $D \models q$, then $D' \models q$. Moreover,~$q$ is said to have \emph{bounded minimal models} if there exists a constant~$C_q$ (that depends only on~$q$) satisfying that for every (complete) database~$D$, if~$D \models q$, then there exists
$D'\subseteq D$ such that~$D' \models q$ and the number of facts in~$D'$ is at most~$C_q$.
Finally, the \emph{model checking problem} for~$q$, denoted by~$\mc(q)$, is the problem of deciding, given a (complete) database~$D$, whether~$D \models q$. Then~$q$ is said to have a model checking in a complexity class~$\CC$ if $\mc(q) \in \CC$. With this terminology, we can state the main result of this section. 
	\begin{toappendix}
		\subsection{Proof of Proposition~\ref{prp:countvals-in-spanl}}
	\end{toappendix}
\begin{propositionrep}
\label{prp:countvals-in-spanl}
\begin{sloppypar}
	Assume that a Boolean query~$q$ is monotone, has model checking in nondeterministic linear space, and has bounded minimal models.
	Then~$\countvals(q)$
	is in~\spanl.
\end{sloppypar}
\end{propositionrep}
\begin{proof}
	Let~$D$ be the input incomplete database, with the domains for each null.
	First, the machine guesses a subset~$D'\subseteq D$ of size~$\leq C_q$, such that each fact of~$D'$ is over a relation symbol that appears in~$q$.
	Observe that~$D'$ contains at most~$|D'| \times \arity(q) \leq C_q \times \arity(q)$ distinct nulls, and that this is a constant.
	The machine then guesses and remembers a valuation~$\nu$ of~$D'$ and computes~$\nu(D')$.
	The encoding size~$||\nu(D')||$ of~$\nu(D')$ is~$O(\log |D|)$, so the machine can check in nondeterministic linear space
	whether~$\nu(D') \models Q$, and stops and rejects in the branches that fail the test.
	Then, the machine reads the input tape left to right and for every occurrence of a null~$\bot$ (appearing in~$D$) that it finds, it does the following:
	\begin{itemize}
		\item It checks whether~$\bot$ appears before on the input tape and if so it simply continues;
		\item Else if~$\bot$ does not appear before on the input tape but appears in~$D'$ then the machine writes~$\nu(\bot)$ on its output tape;
		\item Else if~$\bot$ does not appear before on the input tape and does not appear in~$D'$ then it guesses a
			value for it and writes that value on the output tape (but it does not remember that value).
	\end{itemize}
	It is easy to see that this procedure can be carried out by a logspace nondeterministic transducer, so we only need to show that the distinct outputs of the machine
	correspond exactly to the distinct valuations~$\nu$ of~$D$ such that~$\nu(D) \models Q$.
	Since the machine writes values for nulls in order of first appearance on the input tape, it is clear that every valuation is outputted exactly once.
	Let~$\nu$ be a valuation that is outputted, and let~$D'$ be the subdatabase such that~$\nu(D') \models Q$. Since~$\nu(D') \subseteq \nu(D)$ and~$q$
	is monotone, we have~$\nu(D)\models Q$. Inversely, let~$\nu$ be a valuation of~$D$ such that~$\nu(D) \models Q$, and let us show that it must be outputted.
	Since~$\nu(D) \models Q$ and~$q$ has bounded minimal models, there exists~$D_\nu \subseteq \nu(D)$ of size~$\leq C_q$ such that~$D_\nu \models Q$.
	But~$D_\nu$ is~$\nu(D')$ for some~$D' \subseteq D$ of size~$\leq C_q$.
	Then it is clear that one of the branches of the machine has guessed~$D'$
	and then~$\nu_{|D'}$ and then has written~$\nu$ on the output tape.
\end{proof}

\begin{toappendix}
We note here that the same proof does not work for counting completions.
Informally, there are two complications that arise if we try to modify its proof to make the machine write a
completion on the output tape:
\begin{itemize}
	\item First, in order to write a completion on the output tape, the machine could need to remember the values of a nonconstant number of
		nulls, which it obviously cannot do in logspace;
	\item Second, even if we considered Codd tables in order to avoid the previous complication, there does not seem to be a way to ensure
		that when we write a completion~$\nu(D)$ to the output tape, this completion has not been written already in another branch of the computation but with 
		the tuples written in a different order (in which case this completion would be counted more than once).
\end{itemize}
\end{toappendix}

In particular, given that a union of Boolean of conjunctive queries satisfies the three properties of the previous proposition, we conclude from 
Theorem~\ref{thm:spanl-fpras}
that $\countvals(q)$ can be efficiently approximated by using a randomized algorithm if~$q$ is a union of BCQs.\footnote{As a matter of fact, this holds even for the larger class of unions of BCQs with {\em inequalities} (that is, atoms of the form $x \neq y$), as 
	such queries also satisfy the aforementioned three properties.}

\begin{corollary}
\label{cor:countvals-has-fpras}
	If~$q$ is a union of BCQs, then~$\countvals(q)$ has an FPRAS.
Hence, for such a query $q$, we have that each one of the problems $\ucountvals(q)$, $\ccountvals(q)$, $\cucountvals(q)$ admits an FPRAS.
\end{corollary}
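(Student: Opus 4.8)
The plan is to derive the corollary directly from Proposition~\ref{prp:countvals-in-spanl} and Theorem~\ref{thm:spanl-fpras}: it suffices to check that a union of BCQs $q$ satisfies the three hypotheses of Proposition~\ref{prp:countvals-in-spanl}, which places $\countvals(q)$ in \spanl, and hence, by Theorem~\ref{thm:spanl-fpras}, yields an FPRAS. First I would write $q$ as $q_1 \vee \dots \vee q_k$ with each $q_i$ a BCQ, and verify each property in turn.

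For monotonicity, I would argue that each disjunct is preserved under adding facts, since a homomorphism witnessing $D \models q_i$ remains a homomorphism into any $D' \supseteq D$; as $D \models q$ means $D \models q_i$ for some $i$, monotonicity of $q$ follows. For bounded minimal models, I would take $C_q$ to be the maximal number of atoms over $q_1,\dots,q_k$: whenever $D \models q$, I fix an $i$ and a homomorphism $h$ from $q_i$ into $D$, and observe that the image of the atoms of $q_i$ under $h$ is a subset of $D$ that satisfies $q_i$ (hence $q$) and contains at most $C_q$ facts. For model checking, I would note that $q$ is a fixed FO sentence, so $\mc(q)$ is decidable in deterministic logarithmic space — one iterates over assignments of the (fixed number of) variables into the active domain, each assignment taking only $O(\log |D|)$ space — and logarithmic space is contained in nondeterministic linear space. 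Applying Proposition~\ref{prp:countvals-in-spanl} then gives $\countvals(q) \in \spanl$, and Theorem~\ref{thm:spanl-fpras} supplies the FPRAS. The footnote's extension to unions of BCQs with inequalities is handled identically: inequality atoms only constrain variables, so monotonicity and the same bound $C_q$ persist, and model checking remains in logspace.

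It then remains to transfer the FPRAS to the three restricted problems $\ucountvals(q)$, $\ccountvals(q)$ and $\cucountvals(q)$. Here I would invoke the closure of the FPRAS property under parsimonious reductions (noted just after the definition of an FPRAS) together with trivial embeddings $\ucountvals(q) \pr \countvals(q)$, $\ccountvals(q) \pr \countvals(q)$ and $\cucountvals(q) \pr \countvals(q)$. For the Codd variants the reduction is the identity, since a Codd table is already a legal naïve-table input and the number of satisfying valuations is unchanged. For the uniform variants, given a uniform incomplete database $D = (T,\dom)$ with $\dom \subseteq \consts$, I would map it to the non-uniform database $(T,\dom')$ where $\dom'(\bot)$ is set to $\dom$ for every null $\bot$; the valuations, and in particular the satisfying valuations, coincide, so the reduction is parsimonious. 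Since $\countvals(q)$ has an FPRAS, so do all three restrictions, proving Corollary~\ref{cor:countvals-has-fpras}.

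As this is a corollary, the proof is essentially a matter of verifying hypotheses, so there is no deep obstacle; the only points requiring care are confirming that $C_q$ is a genuine constant (it is the maximal number of atoms over the disjuncts of the fixed query $q$, and the image of any homomorphism has at most that many facts, regardless of self-joins) and that the embeddings above are parsimonious rather than merely Turing reductions, which is precisely what allows the FPRAS to transfer verbatim.
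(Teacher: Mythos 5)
Your proposal is correct and follows exactly the paper's route: it verifies that a union of BCQs is monotone, has bounded minimal models (witnessed by the homomorphic image of a single disjunct), and has model checking well within nondeterministic linear space, then applies Proposition~\ref{prp:countvals-in-spanl} and Theorem~\ref{thm:spanl-fpras}. Your explicit parsimonious embeddings of the uniform/Codd variants into $\countvals(q)$ are precisely the (implicit) justification the paper relies on for the ``Hence'' part, so there is nothing to correct.
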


We prove in the next section that the good properties stated in 
Proposition~\ref{prp:countvals-in-spanl} do not hold for counting completions.

\subsection{Approximating the number of completions}
\label{sec:approx-countcompls}

In this section, we prove that the problem of counting completions of an incomplete database is much harder in terms of approximation than the problem of counting valuations. In this investigation, two randomized complexity classes play a fundamental role. Recall that~$\rp$ is the class of decision problems~$L$ for which there exists a polynomial-time probabilistic Turing Machine~$M$ such that: (a) if~$x \in L$, then~$M$ accepts with probability at least~$\nicefrac{3}{4}$; and (b) if~$x \not\in L$, then~$M$ does not accept~$x$. Moreover,~$\bpp$ is defined exactly as~$\rp$ but with condition (b) replaced by: (b') if~$x \not\in L$, then~$M$ accepts with probability at most~$\nicefrac{1}{4}$. Thus,~$\bpp$ is defined as~$\rp$ but allowing errors for both the elements that are and are not in~$L$. It is easy to see that~$\rp \subseteq \bpp$. Besides, it is known that~$\rp \subseteq \np$, and this inclusion is widely believed to be strict. Finally, it is not known whether~$\bpp \subseteq \np$ or~$\np \subseteq \bpp$, but it is widely believed that~$\np$ is not included in~$\bpp$.

\paragraph{{\bf The non-uniform case}}
Recall that \sIS\ is the problem of counting the number of independent sets of a graph. This problem
will play a fundamental role when showing non-approximability of counting completions in the non-uniform case. More precisely, the following is known about the approximability of \sIS.
\begin{theorem}[{\cite[Theorem 3.1]{dyer2002counting}}]
\label{thm:IS-no-fpras}
	The problem \sIS\ does not admit an FPRAS unless $\np=\rp$. 
\end{theorem}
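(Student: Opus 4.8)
The plan is to assume that \sIS\ admits an FPRAS and derive $\np=\rp$, by turning such an FPRAS into a randomized polynomial-time procedure that solves an \np-complete problem. I would reduce from the \textsc{Independent Set} decision problem: given a graph $G=(V,E)$ with $|V|=n$ and an integer $k$, decide whether $G$ has an independent set of size at least $k$. Since this is \np-complete, it suffices to show how to compute the independence number $\alpha(G)$ with probability at least $\tfrac34$ in randomized polynomial time, as one can then compare $\alpha(G)$ with $k$.

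The key gadget is a blow-up construction. From $G$ I would build the graph $G_t$ obtained by replacing each vertex $v$ with $t$ mutually non-adjacent copies, and joining every copy of $u$ to every copy of $v$ whenever $\{u,v\}\in E$. An independent set of $G_t$ amounts to choosing, for each $v$, a subset of its $t$ copies, subject to the constraint that the set of vertices with at least one chosen copy is independent in $G$. Hence, writing $\lambda\defeq 2^{t}-1$ and letting $a_j$ denote the number of size-$j$ independent sets of $G$ (so $a_{\alpha(G)}\geq 1$),
\[
	\sIS(G_t)\;=\;\sum_{I\text{ ind.\ in }G}\lambda^{|I|}\;=\;\sum_{j=0}^{\alpha(G)} a_j\,\lambda^{\,j}.
\]
This gives the bounds $\lambda^{\alpha(G)}\leq \sIS(G_t)\leq 2^{n}\lambda^{\alpha(G)}$, so that $\log_2 \sIS(G_t)$ lies in the window $[\alpha(G)\log_2\lambda,\ \alpha(G)\log_2\lambda+n]$. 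Taking $t\defeq 2n+2$ makes $G_t$ of polynomial size and forces $\log_2\lambda>2n+1.17$, so that the nearest integer to $\log_2\sIS(G_t)/\log_2\lambda$ is exactly $\alpha(G)$.

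To exploit this, I would run the assumed FPRAS on $G_t$ with a fixed error parameter, say $\epsilon=\tfrac12$. With probability at least $\tfrac34$ it returns $\hat Z$ with $|\hat Z-\sIS(G_t)|\leq\epsilon\,\sIS(G_t)$, whence $\log_2\hat Z$ deviates from $\log_2\sIS(G_t)$ by only an additive $O(1)$ term bounded in $[-1,\,0.585]$. Combined with the window above, the total deviation of $\log_2\hat Z$ from $\alpha(G)\log_2\lambda$ stays strictly inside $\bigl(-\tfrac12\log_2\lambda,\ \tfrac12\log_2\lambda\bigr)$, so rounding $\log_2\hat Z/\log_2\lambda$ to the nearest integer still yields $\alpha(G)$ with probability at least $\tfrac34$. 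This is a two-sided bounded-error polynomial-time algorithm for \textsc{Independent Set}, placing this \np-complete problem in \bpp; therefore $\np\subseteq\bpp$.

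It remains to sharpen $\np\subseteq\bpp$ to the stated conclusion $\np=\rp$, and this is the step requiring the most care, since an FPRAS only yields two-sided error whereas \rp\ demands a one-sided guarantee. Here I would invoke the classical collapse $\np\subseteq\bpp\Rightarrow\np=\rp$: using the \bpp\ decision procedure together with the self-reducibility of \textsc{SAT}, one constructs a candidate satisfying assignment bit by bit and then \emph{verifies} it deterministically; on a satisfiable instance a witness is found with high probability, while on an unsatisfiable instance the final verification always fails, so there are no false positives, which is precisely the \rp\ condition. With $\rp\subseteq\np$ this gives $\np=\rp$. The main obstacle is thus this passage from two-sided to one-sided error, handled by the self-reducibility/witness-verification argument rather than by the approximation itself; a secondary technical point is the error analysis of the previous paragraph, ensuring that the multiplicative slack of the FPRAS combined with the $O(n)$ slack from the lower-order terms of $\sIS(G_t)$ is small relative to $\log_2\lambda$ so that the rounding recovers $\alpha(G)$ exactly.
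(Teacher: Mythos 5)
Your proof is correct, but note that there is nothing in the paper to compare it against: Theorem~\ref{thm:IS-no-fpras} is imported as a black box from Dyer, Goldberg, Greenhill and Jerrum \cite{dyer2002counting}, so the only meaningful comparison is with that cited source, which proceeds quite differently. Dyer et al.\ work inside their framework of approximation-preserving (AP) reductions, showing that $\sIS$ is AP-interreducible with $\#\mathrm{SAT}$ and deducing the non-existence of an FPRAS from the \np-hardness of the underlying decision problem; you bypass $\#\mathrm{SAT}$ and AP-reductions entirely and reduce directly from the \np-complete \textsc{Independent Set} decision problem. Your accounting is sound: with $\lambda = 2^t-1$ one has $\sIS(G_t)=\sum_j a_j\lambda^j \in [\lambda^{\alpha(G)},\, 2^n\lambda^{\alpha(G)}]$, and for $t=2n+2$ (so $\log_2\lambda > 2n+1.17$ for $n\geq 1$) the total additive slack $n+\log_2\nicefrac{3}{2}$ in $\log_2\hat Z$ coming from the $2^n$ factor and an $\epsilon=\nicefrac{1}{2}$ approximation stays strictly below $\tfrac12\log_2\lambda$, so rounding $\log_2\hat Z/\log_2\lambda$ recovers $\alpha(G)$, putting an \np-complete problem in \bpp. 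It is worth observing that the vertex blow-up you use is also the combinatorial engine inside the cited AP-reduction (it is what makes maximum independent sets dominate the count there too), so the two arguments share their core trick; what the AP framework buys is a stronger classification of $\sIS$ (interreducibility with $\#\mathrm{SAT}$), while your route buys a short, self-contained proof of exactly the statement quoted here.

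Two minor points. First, the collapse $\np\subseteq\bpp\Rightarrow\np=\rp$ that you invoke is precisely the classical fact the paper itself uses (citing \cite{K82}) in the proof of Proposition~\ref{prp:ucountcompls-Rxy-no-frpas}, so leaning on it is consistent with the paper's conventions. Second, in your bit-by-bit self-reduction for \textsc{SAT} you should state explicitly that the \bpp\ decision procedure is first amplified so that a union bound over the $O(n)$ oracle calls still leaves success probability at least $\nicefrac{3}{4}$; this is routine, but it is what justifies the claim that a witness is found with high probability on satisfiable instances.
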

In the proof of Proposition~\ref{prp:countcompls}, we considered the problem $\sVC$ of counting the number of vertex covers of a graph $G=(V,E)$, and showed that $\sVC \pr \ccountcompls(R(x))$. By observing that $S \subseteq V$ is an independent set of~$G$ if and only if~$V \setminus S$ is a vertex cover of~$G$, we can conclude that $\sIS(G) = \sVC(G)$ and, thus, the same reduction from the proof of Proposition~\ref{prp:countcompls} establishes that $\sIS \pr \ccountcompls(R(x))$.
Therefore, from the fact that 
the reduction in Lemma~\ref{lem:pattern-parsimonious-compls} is also parsimonious and preserves the property of being a Codd table, and the fact that the existence of an FPRAS is closed under polynomial-time parsimonious reductions, we obtain the following result from Theorem~\ref{thm:IS-no-fpras}.

\begin{theorem}[Dichotomy]
\label{thm:countcompls-sjfcq}
	For every \sjfbcq~$q$, it holds that $\ccountcompls(q)$ does not admit an FPRAS unless $\np = \rp$ $($and, hence, the same holds for~$\countcompls(q)$$)$.
\end{theorem}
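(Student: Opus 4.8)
The plan is to establish, for every \sjfbcq\ $q$, a polynomial-time parsimonious reduction $\sIS \pr \ccountcompls(q)$, and then to derive non-approximability from the non-approximability of \sIS\ (Theorem~\ref{thm:IS-no-fpras}) together with the closure of the FPRAS property under parsimonious reductions noted at the start of Section~\ref{sec:approx}.

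First I would observe that $R(x)$ is a pattern of every \sjfbcq\ $q$. Since by our standing convention $q$ has at least one atom and every atom contains at least one variable, I can take one atom of $q$, delete all the other atoms, delete all but one occurrence of a variable in the retained atom, and finally rename the relation symbol and the variable to fresh ones; by Definition~\ref{def:pattern} this exhibits $R(x)$ as a pattern of $q$. Lemma~\ref{lem:pattern-parsimonious-compls} then yields a parsimonious reduction $\ccountcompls(R(x)) \pr \ccountcompls(q)$ that is valid for Codd tables, which is exactly the relevant setting here since the inputs to $\ccountcompls$ are Codd tables. Chaining this with the reduction $\sIS \pr \ccountcompls(R(x))$ recalled just before the statement — which comes from the reduction of Proposition~\ref{prp:countcompls} (whose constructed instance is already a Codd table) together with the identity $\sIS(G) = \sVC(G)$ — and using transitivity of parsimonious reductions, I obtain $\sIS \pr \ccountcompls(q)$.

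With this chain in place the conclusion is immediate: if $\ccountcompls(q)$ admitted an FPRAS, then closure of the FPRAS property under parsimonious reductions would give an FPRAS for \sIS, contradicting Theorem~\ref{thm:IS-no-fpras} unless $\np=\rp$. For the parenthetical claim about $\countcompls(q)$, I would note that $\ccountcompls(q)$ is exactly $\countcompls(q)$ restricted to Codd-table inputs, so the identity map is a parsimonious reduction $\ccountcompls(q) \pr \countcompls(q)$; hence an FPRAS for $\countcompls(q)$ would give one for $\ccountcompls(q)$, and the non-approximability transfers.

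The only step that requires any care is the verification that $R(x)$ is a pattern of every \sjfbcq, which is precisely where the assumptions that sjfBCQs are nonempty and have atoms of arity at least one are used; the remainder is a routine composition of already-established reductions, so I do not anticipate a genuine obstacle beyond making sure that every reduction in the chain is parsimonious and stays within the class of Codd tables.
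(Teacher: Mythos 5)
Your proposal is correct and follows essentially the same route as the paper: the chain $\sIS \pr \ccountcompls(R(x)) \pr \ccountcompls(q)$ via the reduction of Proposition~\ref{prp:countcompls} (with $\sIS(G)=\sVC(G)$), the pattern lemma (Lemma~\ref{lem:pattern-parsimonious-compls}) restricted to Codd tables, and closure of FPRAS existence under parsimonious reductions together with Theorem~\ref{thm:IS-no-fpras}. The only difference is that you spell out explicitly that $R(x)$ is a pattern of every \sjfbcq\ and that the identity map handles the parenthetical claim, both of which the paper leaves implicit.
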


\paragraph{{\bf The uniform case}}
Recall that from Theorem~\ref{thm:ucc-naive-fp}, we know that if an \sjfbcq~$q$ contains neither $R(x,x)$ nor $R(x,y)$ as a pattern, then $\ucountcompls(q)$ is in \fp. Thus, the question to answer in this section is whether~$\ucountcompls(q)$ and~$\cucountcompls(q)$ can be efficiently approximated if~$q$ contains any of these two patterns. 
For the case of naïve tables, we will give a negative answer 
to this question.
Notice that, this time, our reduction from~\sIS\ in Proposition~\ref{prp:Rxx-Rxy-hard-compls} is not parsimonious, so we cannot use Theorem~\ref{thm:IS-no-fpras} as we did for the non-uniform case.
Instead, we will rely on the following well-known fact: if there exists a \bpp\ algorithm for a problem that is \np-complete,
then 
$\np \subseteq \bpp$, which implies that $\np = \rp$~\cite{K82}.

\begin{proposition}
	\label{prp:ucountcompls-Rxy-no-frpas}
	Neither~$\ucountcompls(R(x,x))$ nor $\ucountcompls(R(x,y))$ admits an FPRAS unless $\np=\rp$. This holds even in the restricted setting 
		in which all nulls are interpreted over the same fixed domain~$\{1,2,3\}$.
\end{proposition}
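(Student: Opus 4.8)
The plan is to reduce from an \np-complete \emph{decision} problem and to argue that an FPRAS for the counting problem would decide it in randomized polynomial time with bounded error, i.e.\ would give a \bpp\ algorithm for an \np-complete problem; by the cited fact this yields $\np \subseteq \bpp$ and hence $\np = \rp$. Since the uniform domain is fixed to $\{1,2,3\}$, the natural source problem is $3$-colorability of a graph $G=(V,E)$, which is \np-complete. Concretely, I would build in polynomial time a uniform incomplete naïve table $D_G$ over the single binary relation $R$, with $\dom=\{1,2,3\}$, in which a null $\bot_v$ of domain $\{1,2,3\}$ plays the role of the colour of each vertex $v$, the edges of $G$ are encoded by facts of the form $R(\bot_u,\bot_v)$ and $R(\bot_v,\bot_u)$ (the reuse of nulls being exactly what makes the table naïve rather than Codd), and extra facts force every completion to satisfy both $R(x,x)$ and $R(x,y)$, so that $\ucountcompls(R(x,x))(D_G)$ and $\ucountcompls(R(x,y))(D_G)$ both equal the total number of distinct completions of $D_G$.

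As in Proposition~\ref{prp:Rxx-Rxy-hard-compls}, I would add labelling facts $R(v,\bot_v)$ (with $v$ a fresh constant outside $\{1,2,3\}$) to control precisely which valuations collapse to the same completion. The aim of the gadget is to make the number of completions of $D_G$ take two values separated by a \emph{constant multiplicative factor}: a ``high'' value when $G$ is $3$-colorable and a ``low'' value otherwise, both of which can be computed exactly in polynomial time from $G$. Given such a gap, I would run the hypothetical FPRAS with a small constant $\epsilon$, repeat it a constant number of times and take the median to boost the success probability above $\tfrac34$, and accept iff the estimate lands on the ``high'' side; this decides $3$-colorability with bounded error, hence places it in \bpp, and the conclusion $\np=\rp$ then follows.

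The main obstacle is producing a genuine \emph{multiplicative} gap in the number of \emph{distinct} completions. The reduction of Proposition~\ref{prp:Rxx-Rxy-hard-compls} is of no use here: it yields a count of the form $2^{|V|}+\#\IS(G)$ whose variable part is negligible compared with the fixed part, so a $(1\pm\epsilon)$-approximation cannot separate the two cases (this is precisely why, as the text notes, we cannot simply invoke Theorem~\ref{thm:IS-no-fpras}). The difficulty is that the quantity distinguishing colorable from non-colorable graphs (essentially the proper colorings) is typically an exponentially small fraction of all colorings, so a naïve encoding only gives an additive, undetectable difference. The crux is therefore to design the facts of $D_G$ so that the presence of a proper $3$-coloring is ``amplified'' into a constant fraction of the completions, while keeping tight control over the many-to-one map $\nu \mapsto \nu(D_G)$ from valuations to completions, so that the number of completions can be counted exactly in each regime and the two regimes are provably multiplicatively apart. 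Once such a gadget is in place, correctness reduces to a counting argument on $D_G$ analogous to, but more delicate than, the one in Proposition~\ref{prp:Rxx-Rxy-hard-compls}.
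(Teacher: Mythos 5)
There is a genuine gap: your write-up reproduces the complexity-theoretic wrapper of the paper's proof (reduce from $3$-colorability; a multiplicative gap in the number of completions plus an FPRAS gives a \bpp\ algorithm for an \np-complete problem, hence $\np=\rp$), but the gadget that actually produces the gap is never constructed --- you explicitly defer it as ``the crux'', and that gadget \emph{is} the entire technical content of the proposition. Worse, the one concrete design choice you do commit to points in the wrong direction: adding labelling facts $R(v,\bot_v)$ with a fresh constant $v$ per vertex forces any two valuations that differ on some $\bot_v$ to yield \emph{distinct} completions, so the number of completions becomes exactly $3^{|V|}$ (or, with extra gadgetry as in Proposition~\ref{prp:Rxx-Rxy-hard-compls}, an exponential term plus an additive variable part). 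This is precisely the additive-gap failure you yourself diagnose as fatal; labelling facts are what you must \emph{remove}, not add.

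The paper's key idea is the opposite of ``controlling which valuations collapse'': it makes \emph{all} structure collapse, so that the total number of completions is a constant. The encoding facts $R(\bot_u,\bot_v)$, $R(\bot_v,\bot_u)$ use only nulls, the six triangle facts $R(i,j)$ for $i\neq j\in\{1,2,3\}$ are added as constants, a single fact $R(c,c)$ with fresh $c$ forces every completion to satisfy both queries, and three auxiliary pairs $R(\bot_i,\bot'_i)$, $R(\bot'_i,\bot_i)$ of fresh nulls are added. Every completion is then just a graph over $\{1,2,3\}$ containing the triangle, so the only freedom is which of the three self-loops $R(1,1),R(2,2),R(3,3)$ appear: at most $8$ completions in total. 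The auxiliary facts guarantee that all $7$ completions with at least one self-loop always occur, while the self-loop--free completion (the bare triangle) occurs if and only if $G$ is $3$-colorable. Hence the count is $8$ versus $7$, a constant multiplicative gap that a single FPRAS call with $\epsilon=\nicefrac{1}{16}$ separates (accept iff the estimate is $\geq 7.5$). Without this ``anonymization'' insight --- or some other mechanism making the satisfiable-case completions a constant fraction of the total --- your reduction does not go through.
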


\begin{proof}
	Let~$G=(V,E)$ be a graph. First, we explain how to construct an incomplete database~$D$ containing a single binary relation~$R$, with uniform domain~$\{1,2,3\}$, and such that (a) all completions of~$D$ satisfy both queries; (b) if~$G$ is~$3$-colorable then~$D$ has~$8$ completions; and (c) if~$G$ is not~$3$-colorable then~$D$ has~$7$ completions. For every node~$u\in V$ we have a null~$\bot_u$.
	The database~$D$ consists of the following three disjoint sets of facts:
		\begin{itemize}
			\item For every edge~$\{u,v\} \in E$, we have the two facts~$R(\bot_u,\bot_v)$ and~$R(\bot_v,\bot_u)$; we call these the \emph{encoding facts}.
			\item We have the facts~$R(1,2),R(2,1),R(2,3),R(3,2),R(1,3)$, and $R(3,1)$; we call these the \emph{triangle facts};
			\item We have six fresh nulls~$\bot_1,\bot'_1,\bot_2,\bot'_2,\bot_3,\bot'_3$ and the facts $R(\bot_i,\bot'_i)$ and $R(\bot'_i,\bot_i)$ for~$1 \leq i \leq 3$; we call these the \emph{auxiliary facts};
			\item Last, we have a fact~$R(c,c)$, where~$c$ is a fresh constant.
		\end{itemize}
		It is clear that all the completions of~$D$ satisfy both queries, so we only need to prove~(b) and~(c).
		Observe that a candidate completion of~$D$ can be equivalently seen as an undirected graph, \emph{possibly with self-loops}, over the nodes~$\{1,2,3\}$ (we omit the fact~$R(c,c)$ since it is in every completion) and that contains the triangle. Thanks to the auxiliary facts, it is easy to show that all such graphs with at least one self-loop can be obtained as a completion of~$D$. For instance, the completion that is triangle with a self-loop only on~$1$ can be obtained by assigning~$1$ to all the nulls in the coding facts, assigning~$1$ to~$\bot_1$, $\bot'_1$,~$\bot_2$ and~$\bot_3$ and assigning~$2$ to~$\bot'_2$ and~$\bot'_3$.
		There are~$7$ such completions. Then, the completion whose graph is the triangle with no self-loops is obtainable if and only if~$G$ is~$3$-colorable (we assign a~$3$-coloring to the nulls in the coding facts, and assign~$1$ to~$\bot_i$ and~$2$ to~$\bot'_i$ for every~$i \in \{1,2, 3\}$). This indeed proves~(b) and~(c).
		Next, we show that any FRPAS with~$\epsilon=\nicefrac{1}{16}$ for counting the number of completions of~$D$ would yield a \bpp\ algorithm to solve $3$-colorability, thus implying~$\np=\rp$ since 
		$3$-colorability
		is an \np-complete problem.

	Let~$\mathcal{A}$ be an FPRAS for~$\ucountcompls(q)$, with~$q$ being $R(x,x)$ or~$R(x,y)$, and let us define a \bpp\ algorithm~$\mathcal{B}$ for~$3$-colorability using~$\mathcal{A}$. On input graph~$G$, algorithm $\mathcal{B}$ does the following. First, it computes in polynomial time the naïve table~$D$ as described above.
	Then~$\mathcal{B}$ calls~$\mathcal{A}$ with input~$(D,\nicefrac{1}{16})$, and if~$\mathcal{A}(D,\nicefrac{1}{16}) \geq 7.5$, then~$\mathcal{B}$ accepts, otherwise $\mathcal{B}$ rejects.
	We now prove that~$\mathcal{B}$ is indeed a \bpp\ algorithm for~$3$-colorability. Assume first that~$G$ is~$3$-colorable. Then by~(b) and by definition of what is an FPRAS, we have that~$\Pr\big(|8 - \mathcal{A}(D,\nicefrac{1}{16})| \leq \nicefrac{8}{16} \big) \geq \frac{3}{4}$. This implies in particular that
	$\Pr\big(\mathcal{A}(D,\nicefrac{1}{16}) \geq 8 -\nicefrac{8}{16} \big) \geq \frac{3}{4}$. Since~$8-\nicefrac{8}{16} = 7.5$ we conclude that if~$G$ is~$3$-colorable, then~$\mathcal{B}$ accepts with probability at least~$\nicefrac{3}{4}$.
	Next, assume that~$G$ is not~$3$-colorable. Then by~(c) we have that $\Pr\big(|7 - \mathcal{A}(D,\nicefrac{1}{16})| \leq \nicefrac{7}{16} \big) \geq \frac{3}{4}$. This implies in particular that~$\Pr\big(\mathcal{A}(D,\nicefrac{1}{16}) \leq 7 +\nicefrac{7}{16} \big) \geq \frac{3}{4}$. Since~$7+\nicefrac{7}{16} < 7.5$, this implies in particular that~$\Pr\big(\mathcal{A}(D,\nicefrac{1}{16}) < 7.5 \big) \geq \frac{3}{4}$. From this, we conclude that if~$G$ is not~$3$-colorable, then~$\mathcal{B}$ rejects with probability at least~$\nicefrac{3}{4}$. This concludes the proof of the~proposition.
\end{proof}

By observing again that the reduction in Lemma~\ref{lem:pattern-parsimonious-compls} is parsimonious, and that the existence of an FPRAS is closed under parsimonious reductions, we obtain that $\ucountcompls(q)$ cannot be efficiently approximated if $q$ contains $R(x,x)$ or $R(x,y)$ as a pattern.

\begin{theorem}[Dichotomy]
	Let $q$ be an \sjfbcq. If $q$ has $R(x,x)$ or $R(x,y)$ as a pattern, then $\ucountcompls(q)$ does not admit an FPRAS unless $\np=\rp$. Otherwise, 
	this problem 
	is in~\fp\ $($by Theorem~\ref{thm:ucc-naive-fp}$)$.
\end{theorem}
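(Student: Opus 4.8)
The plan is to assemble this dichotomy from three ingredients that are already established: the base non-approximability result of Proposition~\ref{prp:ucountcompls-Rxy-no-frpas}, the pattern reduction of Lemma~\ref{lem:pattern-parsimonious-compls}, and the tractability result of Theorem~\ref{thm:ucc-naive-fp}. The two directions of the dichotomy are handled independently, and neither requires new combinatorial work.

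For the hardness direction, suppose $q$ has $R(x,x)$ or $R(x,y)$ as a pattern, and let $q'$ denote the corresponding base pattern. First I would invoke Lemma~\ref{lem:pattern-parsimonious-compls} to obtain a polynomial-time parsimonious reduction $\ucountcompls(q') \pr \ucountcompls(q)$. The key point I would emphasize is that this reduction is \emph{genuinely parsimonious}, i.e., the output value is exactly preserved; consequently it transports the existence of an FPRAS. Concretely, if $\ucountcompls(q)$ admitted an FPRAS, then precomposing that FPRAS with the reduction would yield an FPRAS for $\ucountcompls(q')$, because an $\epsilon$-approximation of $\ucountcompls(q)(f(x))$ is literally an $\epsilon$-approximation of $\ucountcompls(q')(x)$ whenever $\ucountcompls(q')(x) = \ucountcompls(q)(f(x))$. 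Since Proposition~\ref{prp:ucountcompls-Rxy-no-frpas} states that neither $\ucountcompls(R(x,x))$ nor $\ucountcompls(R(x,y))$ admits an FPRAS unless $\np=\rp$, the same conclusion propagates to $\ucountcompls(q)$.

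For the tractability direction, if $q$ has neither $R(x,x)$ nor $R(x,y)$ as a pattern, then I would simply cite Theorem~\ref{thm:ucc-naive-fp}, which already shows $\ucountcompls(q) \in \fp$; exact polynomial-time computability trivially entails approximability, so no FPRAS construction is needed in this case.

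The main obstacle is not located in this statement at all — the heavy lifting was done in Proposition~\ref{prp:ucountcompls-Rxy-no-frpas} (the $3$-colorability gadget producing either $7$ or $8$ completions) and in the very technical tractability proof of Theorem~\ref{thm:ucc-naive-fp}. The only points I would take care to verify are the two routine facts underlying the closure argument: that a polynomial-time parsimonious reduction composes correctly with an FPRAS as described above, and that the reduction of Lemma~\ref{lem:pattern-parsimonious-compls} keeps the input a \emph{uniform} incomplete database, so that the reduction stays within the uniform problem $\ucountcompls(\cdot)$ rather than leaking into the non-uniform setting. Both hold by inspection, which makes this dichotomy a direct corollary of the earlier results.
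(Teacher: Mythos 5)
Your proposal is correct and follows essentially the same route as the paper: the paper likewise obtains the hardness direction by combining Proposition~\ref{prp:ucountcompls-Rxy-no-frpas} with the parsimonious pattern reduction of Lemma~\ref{lem:pattern-parsimonious-compls} (which is stated for the uniform case) and the closure of FPRAS existence under polynomial-time parsimonious reductions, and the tractability direction by citing Theorem~\ref{thm:ucc-naive-fp}. The two verification points you flag — exact preservation of the count and staying within the uniform setting — are precisely the observations the paper relies on.
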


Up until now, we do not know if this result still holds for Codd tables, or if it is possible to design an FPRAS in this setting.
We leave this question open for future research.

\section{On the general landscape: beyond~\shp}
\label{sec:misc}

Recall that, when studying the complexity of counting completions for \sjfbcqs\ in Section~\ref{sec:countcompls-sjfcqs}, we did not show that these problems are in~\shp\ for naïve tables. The goal of this section is then twofold. First, we want to give formal evidence that we indeed could not show membership in \shp\ in 
Section~\ref{sec:countcompls-sjfcqs}. Second, we want to identify a counting complexity class that is more appropriate to describe the complexity of $\countcompls(q)$, but in a more general setting which is not based on some syntactic restrictions imposed on $q$. More precisely, for the second goal, we consider the complexity of the model checking problem $\mc(q)$ for $q$, which is defined in Section~\ref{sec:approx-countvals} as the problem of deciding, given a (complete) database $D$, whether~$D \models q$. 
In this section, all upper bounds will be proved for the most general scenario 
of non-uniform naïve tables,
while all lower bounds will 
be proved for the most restricted scenario of uniform naïve tables with 
the fixed domain~$\{0,1\}$.

To meet our first goal, we need to define the complexity class \spp\ introduced in \cite{gupta1991power,OH93,FFK94}. Given a nondeterministic Turing Machine $M$ and a string $x$, let $\acc_M(x)$ (resp., $\rej_M(x)$) be the number of accepting (resp., rejecting) runs of $M$ with input $x$, and let $\gap_M(x) = \acc_M(x) - \rej_M(x)$.
Then a language $L$ is said to be in \spp~\cite{FFK94} if there exists a polynomial-time nondeterministic Turing Machine $M$ such that: (a) if $x \in L$, then $\gap_M(x) = 1$; and (b) if $x \not\in L$, then $\gap_M(x) = 0$. In this way, \spp\ is the smallest class that can be defined in terms of the gap function $\gap_M$. It is conjectured that $\np \not\subseteq \spp$ as, for example, for every known polynomial-time nondeterministic Turing Machine $M$ accepting an \np-complete problem, the function $\gap_M$ is not bounded. In the following proposition, we show how this conjecture helps us to reach our first goal. 

\begin{toappendix}
\subsection{Proof of Proposition~\ref{prp:not-in-shp}}
\end{toappendix}

\begin{propositionrep}
	\label{prp:not-in-shp}
	There exists an \sjfbcq~$q$ such that $\ucountcompls(q)$ is not in \shp\ unless $\np \subseteq \spp$.
\end{propositionrep}

\begin{toappendix}
	The proof of this result relies on the proof of Theorem~\ref{thm:general-spanp-complete} (we presented the results in this order in the main text for narrative purposes).
	Let $q$ be the \sjfbcq\ defined in Equation~\eqref{eq:q} in the proof of Theorem~\ref{thm:general-spanp-complete}.
	Its schema $\sigma = \{S\} \cup \{C_{abc} \mid (a,b,c) \in \{0,1\}^3\}$ consists of~$10$ relation symbols, with~$S$ being binary and each~$C_{abc}$ being ternary.
	Let us denote by~$\ucountcompls(\sigma)$ the problem that takes as input an incomplete database over schema~$\sigma$ and outputs its number of completions.
	The first part of our proof is to reduce~$\ucountcompls(\sigma)$ to~$\ucountcompls(q)$:

	\begin{lemma}
		\label{lem:q3-to-sigma}
		We have that~$\ucountcompls(\sigma) \pr \ucountcompls(q)$.
	\end{lemma}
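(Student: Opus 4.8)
The plan is to establish the parsimonious reduction by a simple padding argument: given an incomplete database $D=(T,\dom)$ over $\sigma$ that is an input of $\ucountcompls(\sigma)$, I would produce an input $f(D)$ of $\ucountcompls(q)$ over the same schema such that \emph{every} completion of $f(D)$ satisfies $q$, while the total number of completions is preserved. If this holds, then $\ucountcompls(q)(f(D))$ equals the number of completions of $f(D)$, which in turn equals the number of completions of $D$, i.e.\ $\ucountcompls(\sigma)(D)$, giving exactly a parsimonious reduction.

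Concretely, since $q$ is an \sjfbcq\ it is satisfiable, so let $G$ be the \emph{frozen body} of $q$: the set of ground facts obtained from the atoms of $q$ by replacing each variable with a distinct fresh constant, chosen so that none of these constants occurs in $T$ nor in $\dom$. I would then set $f(D) \defeq (T \cup G, \dom)$. This is computable in polynomial time (indeed $G$ depends only on the fixed query $q$), it uses only relation symbols of $\sigma$, and it remains a uniform incomplete database (and a naïve table, since $G$ contains no nulls). Because the fresh constants of $G$ lie outside $\dom$, no null of $T$ can be mapped to them, so the valuations of $f(D)$ are exactly the valuations of $D$, and each completion of $f(D)$ has the shape $\nu(T) \cup G$.

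The two properties to check are then straightforward. First, every completion $\nu(T)\cup G$ satisfies $q$: the identity homomorphism witnesses $G \models q$, and since $q$ is a conjunctive query (hence monotone) and $G \subseteq \nu(T)\cup G$, we get $\nu(T)\cup G \models q$. Second, the map $\nu(T) \mapsto \nu(T) \cup G$ is a bijection between the completions of $D$ and the completions of $f(D)$: its inverse simply deletes all facts mentioning one of the fresh constants of $G$, which recovers $\nu(T)$ because no fact of $G$ can ever appear in $\nu(T)$ (the fresh constants are neither in $\dom$ nor in $T$). Hence distinct completions of $D$ yield distinct completions of $f(D)$, and every completion of $f(D)$ arises this way. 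Combining the two properties gives $\ucountcompls(q)(f(D)) = \ucountcompls(\sigma)(D)$, as required.

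The argument is short, and I do not expect a genuine obstacle; the only point that needs care is the choice of \emph{fresh} constants outside $\dom$. This is what guarantees both that padding with $G$ does not create or destroy valuations, and that the gadget facts are never produced by a valuation of $T$, so that the bijection above is exact and the reduction is truly parsimonious rather than merely many-one. If instead the gadget reused constants of $\dom$, two different completions of $D$ could collapse to the same completion of $f(D)$ (or a fact of $G$ could coincide with a fact generated by some valuation), which would break the count.
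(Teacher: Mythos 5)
Your proof is correct and is essentially the paper's own argument: both pad the database with ground facts over fresh constants lying outside $\dom$ (so valuations are unchanged and the padding facts can never collide with facts produced by a valuation), making every completion satisfy $q$ and yielding a bijection between completions. The only cosmetic difference is that the paper uses a single fresh constant $f$ for all positions (adding $S(f,f)$ and $C_{abc}(f,f,f)$), whereas you freeze each variable of $q$ to its own fresh constant; both instantiations of the same padding idea work.
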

	\begin{proof}
		Let~$D$ be an incomplete database over schema~$\sigma$, that is an input of~$\ucountcompls(\sigma)$. 
		We construct in polynomial time an incomplete database~$D'$ over the same schema such that~$\ucountcompls(\sigma)(D) = \ucountcompls(q)(D')$, thus establishing the parsimonious reduction.	
		Let~$f$ be a fresh constant that does occurs neither in~$D$ nor in the domain of some null.
		Then the relation~$D'(S)$ is the same as the relation~$D(S)$, plus a fact~$S(f,f)$.
		Moreover, for every~$(a,b,c) \in \{0,1\}^3$, the relation~$D'(C_{abc})$ consits of all the facts in~$D(C_{abc})$, plus a fact~$C_{abc}(f,f,f)$.
		It is easy to see that~$D$ and~$D'$ have the same number of completions. Moreover, thanks to the facts that use the constant~$f$,
		we have that every completion of~$D'$ satisfies~$q$. Therefore, we indeed have that~$\ucountcompls(\sigma)(D) = \ucountcompls(q)(D')$.
	\end{proof}

For the second part of the proof, we need to introduce the complexity class \gapp. This class consists of function problems that can be expressed as the difference of two functions in $\shp$~\cite{FFK94,G95}. It is known that if the inclusion $\spanp \subseteq \gapp$ holds, then we have that $\np \subseteq \spp$~\cite{MTV94}.\footnote{In fact, the class \gapspanp\ is defined in \cite{MTV94}, where it is proved that a function $f$ is in \gapspanp\ if and only if $f = g - h$, where $h,g$ are functions in \spanp. Then it is shown in \cite[Corollary 3.5]{MTV94} that the inclusion $\gapspanp \subseteq \gapp$ implies that $\np \subseteq \spp$. But if we have that $\spanp \subseteq \gapp$, then we also have that $\gapspanp \subseteq \gapp$ as $\gapp$ is closed under subtraction and, therefore, we conclude that $\np \subseteq \spp$ as desired.} With this, we are able to prove the proposition.

\begin{proof}[Proof of Proposition~\ref{prp:not-in-shp}]
	Assume that~$\ucountcompls(q)$ is in $\shp$. Then, by Lemma~\ref{lem:q3-to-sigma} we have that~$\ucountcompls(\sigma) \in \shp$ as well (because~$\shp$ is closed under polynomial-time parsimonious reductions).
	Now, observe that for every incomplete database~$D$ over~$\sigma$, the following holds:
	\begin{eqnarray*}
	\ucountcompls(\lnot q)(D) & = & \ucountcompls(\sigma)(D) - \ucountcompls(q)(D).
	\end{eqnarray*}
	But then this means that~$\ucountcompls(\lnot q)$ is in \gapp\ (since both problems in the right hand side are in~$\shp$).
	Since~$\ucountcompls(\lnot q)$ is \spanp-complete by Theorem~\ref{thm:general-spanp-complete} under polynomial-time parsimonious reductions, and since \gapp\ is closed under polynomial-time parsimonious reductions, this would indeed imply that $\spanp \subseteq \gapp$ and, hence, that $\np \subseteq \spp$.
\end{proof}

\end{toappendix}

To meet our second goal, we need to introduce one last counting complexity class. The class \spanp~\cite{kobler1989counting} is defined exactly as the class \spanl\ introduced in Section~\ref{sec:approx-countvals}, but considering \emph{polynomial-time} nondeterministic Turing machines with output, instead of \emph{logarithmic-space} nondeterministic Turing machines with output. It is straightforward to prove that $\shp \subseteq \spanp$. Besides, it is known that $\shp = \spanp$ if and only if $\np = \up$~\cite{kobler1989counting}.\footnote{Recall that \up\ is the class Unambiguous Polynomial-Time introduced in \cite{V76}, and that $L \in \up$ if and only if there exists a polynomial-time nondeterministic Turing Machine $M$ such that if $x \in L$, then $\acc_M(x) = 1$, and if $x \not\in L$, then $\acc_M(x) = 0$.} Therefore, it is widely believed that $\shp$ is properly included in $\spanp$.
The following easy observation can be seen as a first hint that $\spanp$ is a good alternative to describe the complexity of counting completions.

\begin{observation}
\label{obs:general-upper-bound}
If $q$ is a Boolean query such that $\mc(q)$ is in \ptime, then $\countcompls(q)$ is in $\spanp$.
\end{observation}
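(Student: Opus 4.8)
The plan is to exhibit a polynomial-time nondeterministic transducer $M$ whose number of \emph{distinct} outputs, on input an incomplete database $D$ over $\sig(q)$, is exactly $\countcompls(q)(D)$; by the definition of $\spanp$ (as the polynomial-time analogue of $\spanl$), this establishes membership. On input $D = (T,\dom)$, the machine first nondeterministically guesses a valuation $\nu$ of $D$ by choosing, for each null $\bot$ occurring in $D$, a constant in $\dom(\bot)$; this requires only polynomially many nondeterministic bits, since each domain is given explicitly in the input. The machine then computes the completion $\nu(D)$ in polynomial time (note that $\nu(D)$ has at most $|D|$ facts, so its encoding is of polynomial size), and runs the assumed polynomial-time algorithm for $\mc(q)$ to decide whether $\nu(D) \models q$. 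If $\nu(D) \not\models q$, the branch halts in a rejecting state and thus contributes no output.

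The only point that requires care --- and the reason this construction works for $\spanp$ whereas the analogous one fails for $\spanl$ (as discussed after Proposition~\ref{prp:countvals-in-spanl}) --- is that $\spanp$ counts \emph{distinct} outputs, so $M$ must write the completion $\nu(D)$ in a \emph{canonical form}. Since we have polynomial time available, we simply represent $\nu(D)$ as the set of its ground facts, remove duplicate facts, sort them lexicographically, and write this canonical string on the output tape before accepting. This guarantees two things at once: any two valuations $\nu,\nu'$ with $\nu(D) = \nu'(D)$ produce the exact same output string, and any two valuations with $\nu(D) \neq \nu'(D)$ produce different output strings.

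It then remains to observe that the distinct outputs of $M$ are in bijection with the completions $\nu(D)$ of $D$ that satisfy $q$. Indeed, each such completion is written, in its canonical form, by precisely the accepting branches that guess a valuation yielding it, and no other branch produces that string; branches whose completion does not satisfy $q$ produce no output at all. Hence the number of distinct outputs of $M$ equals $\countcompls(q)(D)$. Since guessing $\nu$, computing $\nu(D)$, canonicalizing it, and deciding $\mc(q)$ are all feasible in (nondeterministic) polynomial time, $M$ is a valid $\spanp$ transducer, which proves the claim. There is no genuine obstacle in this argument; the one subtlety to handle correctly is the canonicalization step, without which a completion arising from several distinct valuations would be overcounted.
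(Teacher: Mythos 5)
Your proof is correct, and it is exactly the straightforward construction the paper has in mind when it states this as an ``easy observation'' without an explicit proof: guess a valuation, compute the completion, decide $\mc(q)$ in polynomial time, and write the completion in canonical (sorted, duplicate-free) form so that distinct outputs correspond precisely to distinct satisfying completions. Your remark on why canonicalization is what makes this work in $\spanp$ but not in $\spanl$ also matches the paper's own discussion following Proposition~\ref{prp:countvals-in-spanl}.
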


Notice that this result applies to all~\sjfbcqs\ and, more generally, to all FO Boolean queries. In fact, this results applies to even more expressive query languages such as Datalog~\cite{abiteboul1995foundations}. More surprisingly, in the following theorem we show that $\ucountcompls(q)$ can be $\spanp$-complete for an FO query $q$ and, in fact, already for the negation of an~\sjfbcq.

\begin{toappendix}
\subsection{Proof of Theorem~\ref{thm:general-spanp-complete}}
\end{toappendix}

\begin{theoremrep}
\label{thm:general-spanp-complete}
\begin{sloppypar}
	There exists an \sjfbcq\ $q$ such that $\ucountcompls(\neg q)$ is \spanp-complete under polynomial-time parsimonious reductions. 
\end{sloppypar}
\end{theoremrep}

\begin{toappendix}
\begin{proof}
Notice that we only need to show that $\ucountcompls(\neg q)$ is \spanp-hard under parsimonious reductions, for a fixed \sjfbcq\ $q$.
In this proof, we reduce from counting the number of satisfying assignments of a 3-CNF formula that are distinct in the first~$k$ variables, that we denoted by~$\ksat$:

\begin{definition}
	The problem~$\ksat$ takes as input a 3-CNF formula~$F$ on variables~$\{x_1,\ldots,x_n\}$ and an integer
	$1 \leq k \leq n$, and outputs the number of assignments of the first~$k$ variables that can be extended
	to a satisfying assignment of~$F$.
\end{definition}
	\begin{proposition}[{\cite[Section 6]{kobler1989counting}}]
		$\ksat$ is \spanp\ complete (under polynomial-time parsimonious reductions).
	\end{proposition}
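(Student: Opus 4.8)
The statement asserts both membership and hardness, so the plan splits accordingly. For membership, note that $\neg q$ is a first-order query and hence $\mc(\neg q)$ is in $\ptime$; by Observation~\ref{obs:general-upper-bound} this gives $\countcompls(\neg q) \in \spanp$, and since the uniform problem $\ucountcompls(\neg q)$ is simply the restriction of $\countcompls(\neg q)$ to uniform inputs, it too lies in $\spanp$. Thus the whole burden is to exhibit a \emph{fixed} \sjfbcq~$q$ for which $\ucountcompls(\neg q)$ is $\spanp$-hard under polynomial-time parsimonious reductions, and the natural source is $\ksat$, which is quoted just above as $\spanp$-complete under such reductions.

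The plan is to design a fixed schema $\sigma = \{S\} \cup \{C_{abc} \mid (a,b,c)\in\{0,1\}^3\}$, with $S$ binary and each $C_{abc}$ ternary, together with a single self-join--free conjunction $q$ over $\sigma$ that uses each relation exactly once (so that self-join-freeness is automatic). Given an input $(F,k)$ of $\ksat$, where $F$ is a 3-CNF on $x_1,\dots,x_n$, I would build a uniform incomplete database $D$ with domain $\{0,1\}$ whose nulls $\bot_1,\dots,\bot_n$ encode a truth assignment: the eight relations $C_{abc}$ record the clauses of $F$ sorted by their sign pattern, and $q$ is laid out so that a homomorphism of $q$ into $\nu(D)$ exists precisely when the assignment coded by the valuation $\nu$ \emph{falsifies} some clause of $F$. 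Equivalently, $\nu(D)\models \neg q$ holds iff $\nu$ encodes a satisfying assignment of $F$. The eight relations are exactly what allows the single conjunction $q$ to act as the disjunctive test ``some clause is falsified'': the three join variables of the clause atoms are synchronized through the variable-value facts in $S$, and having one relation per sign pattern lets one fixed conjunctive pattern cover all clause types simultaneously.

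The delicate part, and the real reason the target is $\spanp$ rather than $\shp$, is to make \emph{completions} count only the assignments of the first $k$ variables. I would mirror the $\spanp$ transducer: the completion plays the role of the output string and the valuation the role of a computation path. To record the prefix, I add facts $S(m,\bot_m)$ for $m\le k$ where each $m$ is a distinct constant, so that $\nu(\bot_m)$ is visible in the completion and different prefix assignments yield different completions. The last $n-k$ variables, by contrast, must be ``absorbed'': their nulls should feed only into the clause-checking part of the gadget in such a way that every satisfying extension of a fixed prefix $\alpha$ produces the \emph{same} completion, while no extension at all leaves $\alpha$ with a completion satisfying $\neg q$. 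If this is arranged, two valuations agreeing on the first $k$ variables and both satisfying $F$ collapse to one completion, unextendable prefixes contribute nothing, and the number of completions of $D$ satisfying $\neg q$ equals $\ksat(F,k)$, giving the desired parsimonious reduction; the construction is clearly polynomial-time and the domain $\{0,1\}$ is uniform, so the result even holds in the most restricted uniform naïve setting.

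I expect the main obstacle to be exactly the reconciliation in the last paragraph: the test $\neg q$ must inspect \emph{all} $n$ variables to certify satisfaction of $F$, yet the resulting completion must expose \emph{only} the first $k$ of them, so that distinct ``good'' completions are in bijection with distinct extendable prefixes. Getting the gadget to both (i) realize ``some clause falsified'' through one self-join--free conjunction over the $C_{abc}$'s with properly shared join variables, and (ii) make the values of $\bot_{k+1},\dots,\bot_n$ invisible to the completion while still being guessed and verified by a valuation, is where all the care goes; the rest of the argument (membership, polynomiality, and the counting bijection) is then routine.
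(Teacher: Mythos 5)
You have proved the wrong statement. The proposition at hand is the $\spanp$-completeness of $\ksat$ itself --- a classical complexity-theoretic fact which the paper does not prove but imports verbatim from \cite[Section 6]{kobler1989counting} --- whereas your proposal is a sketch of Theorem~\ref{thm:general-spanp-complete}, the downstream database result that $\ucountcompls(\neg q)$ is $\spanp$-complete for a fixed \sjfbcq~$q$ over the schema $\{S\}\cup\{C_{abc}\mid (a,b,c)\in\{0,1\}^3\}$. Your argument even presupposes the very statement it was supposed to establish (``the natural source is $\ksat$, which is quoted just above as $\spanp$-complete under such reductions''), so read as a proof of the proposition it is circular: constructing a parsimonious reduction \emph{from} $\ksat$ to a database problem says nothing whatsoever about the hardness of $\ksat$ itself, and your membership argument concerns $\countcompls(\neg q)$ rather than $\ksat$.

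A proof of the actual statement has two parts, neither of which appears in your proposal. For membership in $\spanp$, consider the polynomial-time nondeterministic transducer that, on input $(F,k)$ with $F$ a 3-CNF over $x_1,\ldots,x_n$, guesses a truth assignment of all $n$ variables, verifies in polynomial time that it satisfies $F$, and writes only the values of $x_1,\ldots,x_k$ on the output tape; its number of distinct accepting outputs is exactly $\ksat(F,k)$, since a prefix is output once no matter how many satisfying extensions it has. For hardness, take any $f\in\spanp$ computed by a polynomial-time \np-transducer $M$; a Cook--Levin-style construction applied to $M$ on input $x$ yields in polynomial time a 3-CNF $F_x$ whose satisfying assignments encode the accepting computations of $M$ on $x$, with the variables ordered so that the first $k$ of them encode the contents of the (polynomially bounded) output tape --- using a fixed-length encoding with an explicit blank/end-marker symbol so that distinct output strings correspond to distinct assignments of those $k$ variables, and so that the output-encoding variables are forced by the computation variables. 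Then $f(x)=\ksat(F_x,k)$, and the reduction is parsimonious. Your gadget construction, incidentally, does track the paper's proof of Theorem~\ref{thm:general-spanp-complete} quite faithfully (eight clause-type relations $C_{abc}$, prefix-exposing facts $S(m,\bot_m)$ for $m\leq k$, and absorption of $\bot_{k+1},\ldots,\bot_n$ into ground tuples already present), but that is a different theorem from the one you were asked to prove.
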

We reduce from $\ksat$ to $\ucountcompls(\neg q)$, for a fixed \sjfbcq\ $q$ to be defined.
	Let~$F$ be a 3-CNF on variables~$\{x_1,\ldots,x_n\}$, and~$1 \leq k \leq n$.
	We first explain how we build the incomplete database~$D$, and we will define the sjfBCQ~$q$ after.
	For every variable~$x_i$, $1 \leq i \leq n$, we have a null~$\bot_{x_i}$, and the (uniform) domain
	is~$\{0,1\}$.
	For~$(a,b,c) \in \{0,1\}^3$, we have a relation~$C_{abc}$ of arity~$3$, and we fill
	it with every tuple of the form~$C_{abc}(a',b',c')$ with~$(a',b',c') \in \{0,1\}^3$
	such that~$a=a' \lor b=b' \lor c=c'$ holds; hence for every~$(a,b,c) \in \{0,1\}^3$ there are exactly~$7$ facts of this form.
	For every clause~$K=l_1 \lor l_2 \lor l_3$ of~$F$ with~$l_1,l_2,l_3$ being
	literals over variables~$y_1,y_2,y_3$,
	letting~$(a_1,a_2,a_3) \in \{0,1\}^3$ be the unique tuple such that~$a_i=1$ iff~$l_i$ is
	a positive literal, we add to~$C_{a_1 a_2 a_3}$ the
	fact~$C_{a_1 a_2 a_3}(\bot_{y_1},\bot_{y_2},\bot_{y_3})$.
	Last, we have a binary relation~$S$ that we fill with the tuples~$S(i,\bot_{x_i})$
	for~$1 \leq i \leq k$.
	The sjfBCQ~$q$ then simply says that there exists a tuple that appears in all the relations~$C_{abc}$:
	\begin{eqnarray}\label{eq:q}
	q & = & \exists x \exists y\, S(x,y) \land \exists x \exists y \exists z\, \bigg(\bigwedge_{(a,b,c)\in \{0,1\}^3} C_{abc}(x,y,z)\bigg)
	\end{eqnarray}
	Note that we added the seemingly useless query $\exists x \exists y\, S(x,y)$ to~$q$ because the set of relations in $D$ has to be equal to the set of relations occurring in $q$.
	We now show that the number of completions of~$D$ that do not satisfy~$q$ is equal to the number of assignments of the first~$k$ variables that can be extended
	to a satisfying assignment of~$F$, thus establishing that~$\ucountcompls(\lnot q)$ is \spanp-hard (under polynomial-time parsimonious reductions).
	First, observe that the assignments of the variables are in bijection with the valuations of the nulls of~$D$.
	One can then readily observe the following:
	 \begin{itemize}
		 \item If~$q$ is falsified in a completion of~$D$, it can only be because there does not exist a tuple that occurs in all the relations; this is because the query~$\exists x,y\, S(x,y)$ is always satisfied by any completion of~$D$.
		 \item For every assignment of the variables, letting~$\nu$ be the corresponding valuation of the nulls, there exists a tuple that is in all relations~$C_{abc}$ of~$\nu(D)$ if and only if that assignment is not satisfying for~$F$. Indeed, one can check that this happens if and only if there exists a relation~$C_{abc}$ such that~$\nu(D)(C_{abc})$ contains exactly~$8$ facts.
		 \item For every two valuations~$\nu,\nu'$ such that the corresponding assignments are not satisfying the query, we have that~$\nu(D) \neq \nu'(D)$ if and only if~$\nu$ and~$\nu'$ differ on the first~$k$ variables. This is because, by the previous item, each relation~$C_{abc}$ contains exactly the~$7$ ground tuples that we initially put in~$D$.
	 \end{itemize}
	By putting it all together, we obtain that the reduction works as expected.
\end{proof}

\end{toappendix}

This theorem 
gives evidence that $\spanp$ is the right class to describe the complexity of counting completions for FO queries (and even for queries with model checking in polynomial time). 
It is important to notice that \spanp-hardness is proved in Theorem~\ref{thm:general-spanp-complete} by considering parsimonious reductions. This is a delicate issue because from the main result in \cite{TW92}, it is possible to conclude that every counting problem that is \shp-hard (even under polynomial-time parsimonious reductions) is also \spanp-hard under polynomial-time Turing reductions, so a more restrictive notion of reduction has to be used when proving that a counting problem is \spanp-hard~\cite{kobler1989counting}.

We conclude this section by considering an even more general scenario where queries have model checking in \np. Interestingly, in this case \spanp\ is again the right class to describe the complexity not only of counting completions, but also of counting valuations. 

\begin{toappendix}
\subsection{Proof of Theorem~\ref{theo-np-spanp}}
\end{toappendix}

\begin{theoremrep}\label{theo-np-spanp}
	If~$q$ is a Boolean query with~$\mc(q) \in \np$, then both $\countvals(q)$ and $\countcompls(q)$ are in \spanp.
	Moreover, there exists such a Boolean query~$q$ for which $\ucountvals(q)$ is~\spanp-complete under polynomial-time parsimonious~reductions $($and for~$\ucountcompls(q)$, we can even take~$q$ to be the negation of an \sjfbcq, hence with model checking in~\ptime, as given by Theorem~\ref{thm:general-spanp-complete}$)$.
\end{theoremrep}

\begin{toappendix}
\begin{proof}
	It is straightforward to prove that these problems are in \spanp. The part in between parenthesis has been shown in theorem~\ref{thm:general-spanp-complete}. Thus, we need to prove that $\ucountvals(q)$ is \spanp-hard for a fixed Boolean query $q$ such that $\mc(q) \in \np$, under polynomial-time parsimonious reductions. To do this, we will reduce from the \spanp-complete problem $\hamsubgraphs$, defined as follows:
	
	\begin{definition}
	\label{def:hamsubgraphs}
		Let~$G=(V,E)$ be a undirected graph, and let~$S \subseteq V$.
		The \emph{subgraph of~$G$ induced by~$S$}, denoted by~$G[S]$, is the graph with set of
		nodes~$S$ and set of edges $\{\{u,v\} \in E \mid u,v\in S\}$,
		We recall that a graph~$G$ is \emph{Hamiltonian} when there exists a cycle in~$G$ that visits every node of~$G$ exactly once.
		The problem~$\hamsubgraphs$ takes as input a simple graph~$G=(V,E)$ and an integer~$k$, and outputs the number of induced subgraphs~$G[S]$ with~$|S| = k$
		such that~$G[S]$ is Hamiltonian.
	\end{definition}
	
	\begin{proposition}[{\cite[Section 6]{kobler1989counting}}]
	\label{prp:hamsubgraphs-hard}
	$\hamsubgraphs$ is \spanp-complete (under polynomial-time parsimonious reductions).
	\end{proposition}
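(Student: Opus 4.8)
\emph{Membership in $\spanp$.} The plan is to exhibit a polynomial-time nondeterministic transducer $M$ witnessing $\hamsubgraphs \in \spanp$. On input $(G,k)$ with $G=(V,E)$, the machine $M$ guesses a subset $S \subseteq V$ with $|S|=k$ together with a candidate ordering $v_1,\dots,v_k$ of $S$; it accepts iff $v_1 v_2 \cdots v_k v_1$ is a cycle in $G[S]$, and in that case writes on its output tape the characteristic vector of $S$, a canonical encoding depending only on $S$ and not on the guessed ordering. Then a string $y$ is a distinct output of $M$ iff it encodes a size-$k$ set $S$ admitting at least one Hamiltonian cycle, i.e.\ iff $G[S]$ is Hamiltonian; so the number of distinct outputs equals $\hamsubgraphs(G,k)$. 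Guessing and checking a cycle is clearly polynomial time, giving membership. (The same guess-and-verify idea proves the easy direction of the surrounding Theorem~\ref{theo-np-spanp}: a transducer guesses a valuation, resp.\ a completion together with an $\np$-witness for $\mc(q)$, and outputs the canonical encoding of the completion, placing $\countvals(q)$ and $\countcompls(q)$ in $\spanp$ whenever $\mc(q)\in\np$.)

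\emph{Hardness: overall strategy.} For $\spanp$-hardness I would give a polynomial-time parsimonious reduction $\ksat \pr \hamsubgraphs$, using that $\ksat$ is already $\spanp$-complete. The reason $\ksat$ is the right source is that both problems are genuinely \emph{projection} counts: $\ksat$ counts distinct partial assignments of $x_1,\dots,x_k$ that admit \emph{some} satisfying extension, while $\hamsubgraphs$ counts distinct vertex sets that admit \emph{some} Hamiltonian cycle. The decisive structural point is that being Hamiltonian is a \emph{property of} $G[S]$, so parsimony only needs to hold at the level of vertex subsets: it suffices to build $(G,k')$ and an injection $\alpha \mapsto S_\alpha$ from partial assignments onto the size-$k'$ subsets of $V$, such that $G[S_\alpha]$ is Hamiltonian exactly when $\alpha$ extends to a satisfying assignment of $F$, and such that \emph{no other} size-$k'$ subset induces a Hamiltonian graph. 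In particular I never have to control the \emph{number} of Hamiltonian cycles in a given $G[S_\alpha]$, only their existence, which frees the construction from the cycle-counting blow-ups that plague $\shp$-hardness reductions and lets me freely use the undirected Hamiltonicity gadgets.

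\emph{The construction.} Starting from the textbook $3$-SAT--to--Hamiltonicity encoding of $F$ (variable ``rows'' traversable in two directions plus one clause vertex per clause, reachable as a detour from any satisfying literal), I would place all gadgetry for the clauses and for the free variables $x_{k+1},\dots,x_n$ into a fixed \emph{core} set of vertices that must belong to every candidate $S$. For each of the first $k$ variables $x_i$ I would instead expose a pair $\{T_i,F_i\}$ of \emph{selector} vertices and wire the core so that the row of $x_i$ is traversable precisely when exactly one of $T_i,F_i$ is present, its presence pinning the value of $x_i$. Setting $k' \defeq |\mathrm{core}| + k$, any size-$k'$ set inducing a Hamiltonian graph is forced to contain the whole core together with exactly one selector per first-$k$ variable; it thus determines a partial assignment $\alpha$, and $G[S_\alpha]$ is Hamiltonian iff a cycle can route through all clause vertices, i.e.\ iff $\alpha$ extends to a satisfying assignment. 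Distinct $\alpha$ yield distinct selector patterns, hence distinct $S_\alpha$, giving the required bijection and thereby a parsimonious reduction.

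\emph{Main obstacle.} The delicate part is \emph{tightness}: the core/selector wiring must be designed so that no stray size-$k'$ subset, for instance one omitting a core vertex or taking both $T_i,F_i$ while dropping another selector, accidentally induces a Hamiltonian graph, since any such subset would contribute a spurious summand and destroy parsimony. I would enforce this through degree and connectivity obstructions, e.g.\ attaching to each core vertex a neighbourhood that forces non-Hamiltonicity whenever any core vertex or any required selector is absent, and by making each first-$k$ row un-traversable without its selector. Establishing that exactly the sets $S_\alpha$ survive, and that the purely undirected gadget can be used throughout so that $G$ is a legitimate simple undirected graph, is the bulk of the argument; the remaining correspondence between cycle routings and satisfying extensions is routine.
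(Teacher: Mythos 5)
Your membership half is fine and is the standard argument (guess $S$ of size $k$ plus a witnessing cycle, output a canonical encoding of $S$ alone, so that distinct outputs are exactly the size-$k$ sets inducing a Hamiltonian graph). Note, though, that the paper itself does not prove this proposition at all: it is imported verbatim from \cite[Section 6]{kobler1989counting}, so there is no internal proof to match yours against -- the question is only whether your self-contained attempt actually closes the hardness direction. It does not.

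The gap is in the hardness half, and it is exactly the part you yourself defer as ``the bulk of the argument.'' The tightness requirement is stronger than your sketch acknowledges. Since any cycle of $G$ whose vertex set is exactly $S$ uses only edges inside $S$ and hence is a Hamiltonian cycle of $G[S]$, the quantity $\hamsubgraphs(G,k')$ is precisely the number of distinct \emph{vertex sets of cycles of length $k'$ anywhere in $G$}. So parsimony requires controlling all $k'$-cycles of $G$, not merely the size-$k'$ subsets that omit a core vertex or take both selectors $T_i,F_i$: a stray $k'$-cycle confined to the interior of the variable rows and clause gadgets contributes a spurious summand, and the textbook SAT-to-Hamiltonicity gadget contains cycles of many intermediate lengths. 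Moreover, your proposed remedy -- ``degree and connectivity obstructions'' attached to core vertices -- does not obviously help, because degrees in $G$ do not constrain Hamiltonicity of induced subgraphs: omitting a vertex simply deletes it and its incident edges, and the attached neighbourhood gadget can be omitted wholesale along with it. What is needed is an explicit construction together with an argument (say, a length-stratification showing every cycle of length exactly $k'$ must traverse the entire core and exactly one selector per variable) ruling out all other $k'$-cycles; nothing of this kind is supplied, and the asserted undirected version of the (inherently directed) row gadget is likewise only claimed. As written, the reduction $\ksat \pr \hamsubgraphs$ is a plausible program rather than a proof, which is the work the cited K\"obler--Sch\"oning--Tor\'an argument actually carries out.
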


Next we show that $\hamsubgraphs \pr \ucountvals(q)$, for a fixed Boolean query $q$ (to be defined). 
Let~$G=(V,E)$ be an undirected graph. We first explain how we construct the incomplete database~$D$, and we will then define the query~$q$.
		The schema contains two binary relation symbols~$R,T$ and one unary relation symbol~$K$. Fix a linear order~$a_1,\ldots,a_n$ of the nodes of~$G$.
		For every edge~$\{u,v\} \in E$ we have the facts~$R(u,v)$ and~$R(v,u)$.
		For~$1 \leq i \leq n$ we have a fact~$T(a_i,\bot_i)$, and the domain of the nulls is~$\{0,1\}$.
		For~$1 \leq j \leq k$ we have a fact~$K(j)$. Observe that~$D$ is a Codd table.
		We now define the Boolean query~$q$, which will be a sentence in existential second-order logic ($\exists$SO) over relational signature~$R,T,K$.
		Before doing so, we explain the main idea: intuitively,~$q$ will check that there are exactly~$k$ facts of the form~$T(a_i,1)$ in the relation~$T$ and that, letting~$S$ be the set of nodes~$v$ such that~$T(v,1)$ is in relation~$T$, the induced subgraph~$G[S]$ is Hamiltonian. This will indeed ensure
		that we have $\ucountvals(q)(D) = \hamsubgraphs(G,k)$, thus completing this reduction, which is parsimonious and can be performed in polynomial-time.
		The query is
		\[q \ = \ \exists S\, \psi_1(S) \land \psi_2(S)\]
		where~$S$ is a unary second order variable and the formula $\psi_1(S)$ states that (a) the elements~$s$ of~$S$ are exactly all the elements such that~$T(s,1)$ holds,
		and that (b) there are exactly the same number of elements in~$S$ as there are elements~$j$ for which~$K(j)$ holds.
		It is clear that (a) can be expressed in FO.
		Moreover, (b) can be expressed in $\exists$SO by asserting the existence of a binary second-order relation~$U$
		that represents a bijective function from~$S$ to the elements in~$K$.
		Then~$\psi_2(S)$ is a formula that asserts that~$G[S]$ is Hamiltonian. Since this is a property in NP,~$\psi_2(S)$ can be expressed in~$\exists$SO by Fagin's theorem (see, e.g.,~\cite{immerman2012descriptive}).
		This shows that the reduction is correct.
		Finally, the fact that~$\mc(q)$ is in NP again follows from Fagin's theorem.
		This concludes the proof.

\end{proof}
\end{toappendix}

\section{Related Work}
\label{sec:related}
There are two main lines of work that must be compared to what we do in this paper. In both cases the goal is to go beyond the traditional notion of \emph{certain answers} that so far had been used almost exclusively to deal with query answering over uncertain data.
We discuss them here, explain how they relate to our problems and what are the fundamental differences.

\paragraph{{\bf Best answers and 0-1 laws for incomplete databases}}
Libkin has recently introduced a framework that can be used to measure the certainty with which a Boolean query holds on an incomplete database, and also to compare query answers (for a non-Boolean query)~\cite{libkin2018certain}.  
For a Boolean query~$q$, incomplete database~$D$, and integer~$k$, he defines the quantity~$\mu^k(q,D)$ as~$\frac{|\mathrm{Supp}^k(q,D)|}{|V^k(D)|}$, where~$V^k(D)$ denotes the set of valuations of~$D$ with domain~$\{1,\ldots,k\}$, and $\mathrm{Supp}^k(q,D)$ denotes the set of valuations~$\nu \in V^k(D)$ such that $\nu(D) \models q$; hence,~$\mu^k(q,D)$ represents the relative frequency of valuations~$\nu$ in~$\{1,\ldots,k\}$ for which the query is satisfied. He then shows that, for a very large class of queries (namely, \emph{generic queries}), the value~$\mu^k(q,d)$ always tends to~$0$ or~$1$ as~$k$ tends to infinity (and the same results holds when considering completions instead of valuations). This means that, intuitively,  over an infinite domain the query~$q$ is either almost certainly true or almost certainly false. 

He also studies the complexity of finding best answers for a non Boolean query~$q$. As mentioned in the introduction, 
a tuple~$\overline{a}$ is a better answer than another tuple~$\overline{b}$ when for every valuation~$\nu$ of~$D$, if we have~$\overline{b} \in q(\nu(d))$ then we also have~$\overline{b} \in q(\nu(d))$. A best answer is then an answer such that there is no other answer strictly better than it (under inclusion of the sets of satisfying valuations). He studies the complexity of comparing answers under this semantics, and that of computing the set of best answers (see also \cite{GS19}). 

There are several crucial differences between this previous work and ours. First, Libkin does not study the complexity of computing~$\mu^k(q,d)$. We do this under the name~$\ucountvals(q)$; moreover, we also study the setting in which the domains are not uniform.
Second, knowing that a tuple is the best answer might not tell us anything about the size of its ``support'', i.e., the number of valuations that support it.  In particular, a best answer is not necessarily an answer which has the biggest support. 
Finally, under the semantics of better answers it does not matter if we look at the completions or at the valuations (i.e., a tuple is a best answer with respect to inclusion of valuations iff it is the best answer with respect to completions); while we have shown that it does matter for counting problems.

\paragraph{{\bf Counting problems for probabilistic databases and consistent query answering.}}
Remarkably, 
counting problems have received considerable attention in other database 
 scenarios where uncertainty issues appear.
 As mentioned in the introduction, this includes the settings of probabilistic databases and inconsistent databases. 
 In the former case, uncertainty is represented as a probability distribution on the possible states of the data~\cite{suciu2011probabilistic,dalvi2012dichotomy}.
 There, query answering 
 amounts to computing 
 a weighted sum of the probabilities of the possible states of the data that satisfy a query~$q$. We call this problem {\sf Prob}$(q)$.  
 In the case of 
inconsistent databases, 
 we are given a set~$\Sigma$ of constraints and a 
database~$D$ that does not necessarily satisfy~$\Sigma$; cf.~\cite{ABC99,2011Bertossi,Bertossi19}. Then the task is to reason about the set of all {\em repairs} of~$D$ with respect to~$\Sigma$ \cite{ABC99}.
In our context, this means that one wants to count the number of repairs of~$D$ with respect to~$\Sigma$ that satisfy a given query~$q$. 
When~$q$ and~$\Sigma$ are fixed, we call this problem~$\#${\sf Repairs}$(q,\Sigma)$. 
 
 Both {\sf Prob}$(q)$ and~$\#${\sf Repairs}$(q,\Sigma)$
have been intensively studied already.
To start with, counting complexity dichotomies have been obtained for the problem $\#${\sf Repairs}$(q,\Sigma)$; e.g., \cite{maslowski2013dichotomy} gives a dichotomy for this problem when~$q$ is an \sjfbcq\ and~$\sigma$ consists of primary keys, and~\cite{maslowski2014counting} extends this result to CQs with self-joins but only for unary keys constraints. We also mention~\cite{calautti2019counting}, where the problem of counting repairs such
that a particular input tuple is in the result of the query on the repair is studied. 
A seemingly close counting problem for probabilistic databases is the problem~{\sf Prob}$(q)$ over \emph{block independent disjoint} (BIDs) databases. We do not define it formally here, but counting repairs under primary keys can be seen as a special case of this problem,
where the tuples in a “block” all have the same probability, and where the sum of the probabilities sum to $1$ (and in BIDs this sum is allowed to be $<1$, meaning that a block
can be completely erased). Dichotomies for this problem have been obtained in~\cite{dalvi2011queries} for \sjfbcqs.
Counting complexity dichotomies for other models of probabilistic databases also exist; e.g., for {\em tuple-independent} 
probabilistic databases in which each fact is assigned an independent probability of being part of the actual dataset. 
Interestingly, dichotomies in this case hold for arbitrary unions of BCQs, and thus not just for \sjfbcqs~\cite{dalvi2012dichotomy}. 

In some cases, one can use a problem of the form~$\#${\sf Repairs}$(q,\Sigma)$ (or {\sf Prob}$(q)$) to show the hardness of a problem of the form~$\countvals(q')$. For instance, we explain in Appendix~\ref{apx:alternative_using_repairs} how the \shp-hardness of $\ccountvals(R(x) \land S(x))$ can be deduced from that of $\#${\sf Repairs}$(R'(\underbar{y},x) \land S'(\underbar{z},x))$.\footnote{Where we write, for instance, $R'(\underbar{y},x)$, to mean that the first attribute of~$R$ is a primary key.}
In general however, the problems $\#${\sf Repairs}$(q,\Sigma)$\linebreak and~{\sf Prob}$(q)$ seem to be unrelated to our problems, for the following reasons.
		First, in our setting the nulls can appear anywhere, so there is no notion of primary keys here; hence it seems unlikely that one can design a generic reduction from the problem of counting valuations/completions to the problem of counting repairs.
In fact, it would perfectly make sense to study our counting problems where we add constraints such a functional dependencies.
		Second, in the BID and counting repairs problems, each “valuation” (repair) gives a different complete
database, while in our case we have seen that this is not necessarily the case.
In particular, problems of the form $\countcompls(q)$ have no analogues in these settings, whereas we have seen that they behave very differently in our setting.

	\begin{toappendix}
		\subsection{Alternative proof of~Proposition~\ref{prp:RxSx-hard} using~$\#${\sf Repairs}}
            \label{apx:alternative_using_repairs}
        In this section we explain how Proposition~\ref{prp:RxSx-hard}, i.e., the \shp-hardness of $\ccountvals(R(x) \land S(x))$, can be deduced from the hardness of the problem $\#${\sf Repairs}$(R'(\underbar{y},x) \land S'(\underbar{z},x))$, which was established in~\cite{maslowski2013dichotomy}.\footnote{To see that~\cite{maslowski2013dichotomy} establishes the hardness of~$q=R'(\underbar{y},x) \land S'(\underbar{z},x)$, first apply their rewrite rule R7 (from Fig.~6) to obtain~$q'=R'(\underbar{y},x) \land S'(\underbar{x},x)$, then apply rewrite rule R10 to obtain~$q''=R'(\underbar{y},x) \land S'(\underbar{x},a)$. Then,~$q''$ is \shp-hard by Lemma 19, and so is~$q$ by Lemma 7.}
Let~$D'$ be a database with binary relation~$R',S'$. We construct an incomplete Codd database~$D$ with unary relations~$R,S$ as follows. For every constant~$a$ that appears in the first attribute of~$R'$, we have a tuple~$R(\bot)$ in~$D'$, where~$\bot$ is a fresh null, and we set~$\dom(\bot) = \{b \mid R'(a,b) \in D'\}$.
For every constant~$a$ that appears in the first attribute of~$S'$, we have a tuple~$S(\bot)$ in~$D'$, where~$\bot$ is a fresh null, and we set~$\dom(\bot) = \{b \mid S'(a,b) \in D'\}$. It is then clear that the number of repairs of~$D'$ that satisfy~$R'(\underbar{y},x) \land S'(\underbar{z},x)$ is equal to the number of valuations of~$D$ that satisfy~$R(x) \land S(y)$.
	\end{toappendix}

Concerning approximation results, it is known that the problems $\#${\sf Repairs}$(q,\Sigma)$ and {\sf Prob}$(q)$
admit an FPRAS in some important settings. In particular, when~$q$ is a union of BCQs, this 
holds for $\#${\sf Repairs}$(q,\Sigma)$ when $\Sigma$ is a set of primary keys \cite{calautti2019counting}, and 
for {\sf Prob}$(q)$ over BID and tuple-independent probabilistic databases~\cite{dalvi2011queries}.
We observe here that this is reminescent of our Corollary~\ref{cor:countvals-has-fpras}, which shows that problems of the form~$\countvals(q)$ have an FPRAS for every union of BCQs.

\section{Final Remarks}
\label{sec:conclusion}

Our work aims to be a first step in the study of counting problems over incomplete databases. 
The main conclusion behind our results is that the counting problems studied in this paper 
are particularly hard from a computational 
point of view, especially 
when compared to more positive results obtained in other uncertainty scenarios; e.g., over probabilistic and inconsistent databases. 
As we have shown, a particularly difficult problem in our context is that of counting completions, even in the uniform setting where all nulls have the same domain. 
In fact, Proposition~\ref{prp:Rxx-Rxy-hard-compls} shows that this problem is \shp-hard even in very restricted scenarios, 
and Proposition~\ref{prp:ucountcompls-Rxy-no-frpas}
 that it cannot be approximated by an FPRAS. It seems then that the only way in which one could try to tackle 
 this problem is by developing suitable tractable heuristics, without provable quantitative guarantees, 
 but that work sufficiently well in practical scenarios. An example of this could 
 be developing algorithms that compute ``under-approximations'' for the number of completions of a na\"ive table satisfying a certain 
 \sjfbcq\ $q$. Notice that a related approach has been proposed by Console et al. for constructing under-approximations of the set of certain answers by applying methods based on many-valued logics \cite{CGL16}. 

We plan to continue working on several interesting problems that are left open in this paper. First of all, we would like to pinpoint 
 the  complexity 
of $\countcompls(q)$ when $q$ is an \sjfbcq; in particular, whether this problem is~\spanp-complete for at least one such 
a query. We also want to study whether the non-existence of FPRAS for $\ucountcompls(q)$ established in 
Proposition~\ref{prp:ucountcompls-Rxy-no-frpas}
 continues to hold over Codd tables.
We would also like to develop a more thorough understanding of the role of fixed domains in our dichotomies. 
In several cases, that we have explicitly stated, our lower bounds hold even if nulls in tables are interpreted over a fixed domain. 
Still, in some cases we do not know whether this holds. These include, e.g., Proposition~\ref{prp:RxSxyTy-hard-codd}, Proposition~\ref{prp:countcompls}, and Proposition~\ref{prp:Rxx-Rxy-hard-compls} for the case of Codd tables.
Finally, it would also be interesting to 
 study these counting problems under bag semantics (instead of the set semantics used in this paper), study counting problems for non-Boolean queries as in~\cite{libkin2018certain,GS19}, and consider arbitrary BCQs as opposed to only \sjfbcqs.

\begin{acks}
The third author would like to thank Antoine Amarilli for reminding him of the paper~\cite{cai2012holographic} in~\cite{cstheory_a_variant} and for suggesting to use \#BIS in the proof of Proposition~\ref{prp:RxSxyTy-hard-codd}.
This work was partly funded by the Millennium Institute for Foundational Research on Data (IMFD).
\end{acks}

\bibliographystyle{ACM-Reference-Format}
\bibliography{main}

\appendix

\onecolumn

\end{document}